\title{Conflict and Fairness in Resource Allocation}
\author{Susobhan Bandopadhyay}{National Institute of Science, Education and Research, An OCC of Homi Bhabha National Institute, Bhubaneswar 752050, Odisha, India.}{susobhan.bandopadhyay@niser.ac.in}{}{}
\author{Aritra Banik}{National Institute of Science, Education and Research, An OCC of Homi Bhabha National Institute, Bhubaneswar 752050, Odisha, India.}{aritra@niser.ac.in}{}{}
\author{Sushmita Gupta}{The Institute of Mathematical Sciences, HBNI, Chennai, India}{sushmitagupta@imsc.res.in}{}{}
\author{Pallavi Jain}{Indian Institute of Technology Jodhpur, India}{pallavi@iitj.ac.in}{}{}
\author{Abhishek Sahu}{National Institute of Science, Education and Research, An OCC of Homi Bhabha National Institute, Bhubaneswar 752050, Odisha, India.}{abhiksheksahu@niser.ac.in}{}{}
\author{Saket Saurabh}{The Institute of Mathematical Sciences, HBNI, Chennai, India}{saket@imsc.res.in}{}{}
\author{Prafullkumar Tale}{Indian Institute of Science Education and Research Bhopal, Bhopal, India \and \url{https://pptale.github.io/}}{prafullkumar@iiserb.ac.in}{}{}
\authorrunning{Bandopadhyay et al.}
\keywords{Fair Allocation, Conflict-free, Parameterized Complexity} 
\newcommand{\defprob}[3]{
\begin{tcolorbox}[colback=gray!5!white,colframe=gray!75!black]
  \begin{tabular*}{\textwidth}{@{\extracolsep{\fill}}lr} #1   \\ \end{tabular*}
  {\bf{Input:}} #2  \\
  {\bf{Question:}} #3
  \end{tcolorbox}
}
\newcommand{\defproblem}[3]{
  \vspace{1mm}
\noindent\fbox{
  \begin{minipage}{.95\textwidth}
  \begin{tabular*}{\textwidth}{@{\extracolsep{\fill}}lr} #1 \\ \end{tabular*}
  {\bf{Input:}} #2  \\
  {\bf{Question:}} #3
  \end{minipage}
  }
  \vspace{1mm}
}
\newcommand{\cffafull}{{\sc Conflict free Fair Allocation}\xspace}
\newcommand{\cffa}{{\sc CFFA}\xspace}
\newcommand{\pcffa}{{\sc P-CFFA}\xspace}
\newcommand{\ccffa}{{\sc C-CFFA}\xspace}
\newcommand{\sbccffa}{{\sc Sb-C-CFFA}\xspace}
\newcommand{\sbpcffa}{{\sc Sb-P-CFFA}\xspace}
\newcommand{\cffag}{{\sc ${\cal{G}}$-Sb-P-CFFA}\xspace}
\newcommand{\dcffa}{{\sc Sb-CFFA}\xspace}
\newcommand{\ptime}{|{\mathscr J}|^{\Oh(1)}\xspace}
\newcommand{\util}{\ensuremath{\textsf{ut}}}
\newcommand{\mwis}{{\sc Sb-MWIS}\xspace}
\newcommand{\sbmwisfull}{{\sc Size bounded Max Weight Independent Set}\xspace}
\newcommand{\sbmwis}{{\sc Sb-MWIS}\xspace}
\newcommand{\partialfairdiv}{\textsc{P-CFFA}\xspace}
\newcommand{\completefairdiv}{\textsc{C-CFFA}\xspace}
\newcommand{\mwisg}{{\sc ${\mathcal G}$-\sbmwis}\xspace}
\newcommand{\nph}{{\sf NP}-hard\xspace}
\newcommand{\npc}{{\sf NP}-complete\xspace}
\newcommand{\ETH}{{\sf ETH}\xspace}
\newcommand{\fpt}{{\sf FPT}\xspace}
\newcommand{\para}{{\sf para}\xspace}
\newcommand{\Oh}{\ensuremath{\mathcal{O}}\xspace}
\newcommand{\calA}{\ensuremath{\mathcal{A}}\xspace}
\newcommand{\calB}{\ensuremath{\mathcal{B}}\xspace}
\newcommand{\calO}{\ensuremath{\mathcal{O}}\xspace}
\newcommand{\calH}{\ensuremath{\mathcal{H}}\xspace}
\newcommand{\calI}{\ensuremath{\mathcal{I}}\xspace}
\newcommand{\Co}[1]{\ensuremath{\mathcal{#1}}\xspace}
\newcommand{\X}[1]{\ensuremath{\mathscr{#1}}\xspace}
\newcommand{\conflict}{conflict\xspace}
\newcommand{\object}{{job}\xspace}
\newcommand{\objects}{{jobs}\xspace}
\newcommand{\items}{{jobs}\xspace}
\newtheorem{defn}{Definition}
\newcommand{\yes}{yes}
\newcommand{\no}{no}
\newcommand{\sse}{\subseteq}
\newtheorem*{def-nonr}{Definition}
\newtheorem{remk}{Remark}
\newcommand{\il}[1]{\todo[inline, color=yellow!70]{#1}}
\newcommand{\susobhan}[1]{\todo[inline, color=Blue!40]{Susobhan: #1}}
\newcommand{\Ma}[1]{\textcolor{magenta}{#1}}
\newcommand{\Rw}[1]{\textcolor{orange}{#1}}
\newcommand{\hide}[1]{}
\newcolumntype{K}[1]{>{\centering\arraybackslash}p{#1}}
\newcommand{\threedmatch}{$3$-\textsc{Dimensional Matching}\xspace}
\newcommand{\kcol}{$k$-\textsc{coloring}\xspace}
\newcommand{\threecol}{$3$-\textsc{coloring}\xspace}
\newcommand{\tw}{\textsf{tw}\xspace}
\newcommand{\vc}{\textsf{vc}\xspace}
\newcommand{\Sc}{\mathcal{S}\xspace}
\newcommand{\calG}{\mathcal{G}\xspace}
\newcommand{\pnph}{Para-\textsf{NP}-hard\xspace}
\newcommand{\pallavi}[1]{{\color{blue}{Pallavi says: #1}}}
\newcommand{\I}{\ensuremath{\mathcal{I}}\xspace}
\begin{document}



\maketitle 


\begin{abstract}
In the standard model of fair allocation of resources to agents, 
every agent has some utility for every resource, and the goal is to assign 
resources to agents so that the agents' welfare is maximized.
Motivated by job scheduling, interest in this problem dates back to the 
work of Deuermeyer et al. [SIAM J. on Algebraic Discrete Methods'82]. Recent 
works consider the compatibility between resources and assign only mutually 
compatible resources to an agent.
We study a fair allocation 
problem in which we are given a set of agents, a set of resources, a utility 
function for every agent over a set of resources, and a {\it conflict graph} on 
the set of resources (where an edge denotes incompatibility).
The goal is to 
assign resources to the agents such that 
$(i)$ the set of resources allocated to an agent are compatible with each 
other, and 
$(ii)$ the minimum satisfaction of an agent is maximized, where the 
satisfaction of an agent is the sum of the utility of the assigned resources. 
Chiarelli et al. [Algorithmica'22] explore this problem from the classical 
complexity perspective to draw the boundary between
the cases that are polynomial-time solvable and those that are 
\NP-hard. 
In this article, we study the parameterized complexity of the problem
(and its variants) by considering several 
natural and structural parameters.
\end{abstract}

\section{Introduction}

Resource allocation is a central topic in economics and computation. 
It is an umbrella term that captures a plethora of well-known problem 
settings where resources are matched to agents in a meaningful way that 
respects the preferences/choices of agents, and when relevant, resources as 
well.
Stable matching, generalized assignment, and fair division are well-known problems that fall under the purview of resource allocation. 
These topics are extensively studied in economics, (computational) social 
choice theory, game theory, and computer science, to name a few, and are 
incredibly versatile and adaptable to a wide variety of terminology, 
techniques and traditions. 

A well-known framework within which resource allocation is studied 
is the world of {\sc Job Scheduling} problems on non-identical machines.
In this scenario, the machines are acting as agents, and the jobs are the tasks such that specific jobs are better suited for a given machine  than others and 
this variation is captured by the ``satisfaction level'' of the machine towards 
the assigned jobs.
Moreover, the jobs have specific time intervals within 
which they have to be performed, and only one job can be scheduled on a 
machine at a time.
Thus, the subset of jobs assigned to a single machine 
must respect these constraints, and the objective can be both maximization 
and minimization, as well as to simply test feasibility. Results on the 
computational aspect of resource allocation that incorporates interactions 
and dependencies between the resources are relatively few. A rather 
inexhaustive but representative list of papers that take a combinatorial 
approach in analyzing a resource allocation problem and are aligned with our 
work in this paper is 
\cite{cheng2022restricted,bezakova2005allocating,kurokawa2018fair,DBLP:conf/atal/EbadianP022,DBLP:conf/atal/BarmanV21,ahmadian2021four,DBLP:conf/iwoca/ChiarelliKMPPS20,DBLP:journals/sigecom/Suksompong21,woeginger1997polynomial, hummel2022fair, Biswas_FORC_2023}. 

The starting point of our research is the 
decades-old work of Deuermeyer et al.~\cite{deuermeyer1982scheduling} 
that studies a variant of {\sc Job Scheduling} in which the goal is to assign a 
set of independent jobs to identical machines to maximize the 
minimal completion time of the jobs. Their {\sf NP}-hardness result for two 
machines (i.e., two agents in our setting) is an early work with a similar flavor. 
The more recent work of Chiarelli et. al.~\cite{DBLP:conf/iwoca/ChiarelliKMPPS20} 
that studies ``fair allocation'' of indivisible items into pairwise disjoint 
conflict-free subsets that maximizes the minimum satisfaction of the agents is the work closest to ours. 
They, too, consider various graphs, called the {\it conflict graph}, that capture the conflict/compatibilities between various items and explore the classical complexity boundary between strong 
{\sf NP}-hardness and pseudo-polynomial tractability for a constant number 
of agents. 
Our analysis probes beyond the {\sf NP}-hardness of these problems and 
explores this world from the lens of \emph{parameterized complexity}, which gives a more 
refined tractability landscape for {\sf NP}-hard problems.

The {parameterized complexity} paradigm allows 
for a more refined analysis of the problem’s complexity.
In this setting, we associate each instance $I$ with 
parameters $\ell_1, \ell_2, \dots$, 
and are interested in an algorithm with a running time 
$f(\ell_1, \ell_2, \dots) \cdot |I|^{\calO(1)}$ 
for some computable function $f$.
When these parameters are significantly smaller than 
the total size of the input,  we get a much faster algorithm
than the brute-force algorithm that is expected 
to run in time exponential in the input size for \nph problems.
Parameterized problems that admit such an algorithm are called 
\emph{fixed-parameter tractable} (\fpt) with respect to the parameters under consideration.
On the other hand,
parameterized problems that are hard for the complexity class {\sf W[r]} for any $r\geq 1$ do not admit fixed-parameter algorithms with respect to that parameter, under a standard complexity assumption.
If for a particular choice of parameter $\ell$, 
the problem remains \nph even when $\ell$ is a constant then, 
we say the problem is \para-\nph when parameterized
by $\ell$.
A parameter may originate from the formulation of the problem itself, and is 
called a \emph{natural parameter}, or it can be a property of the conflict graph, and is called a \emph{structural parameter}.

Next, we define our problem and discuss the parameters of interest.
The problem can be viewed as a two-sided matching 
between a set of agents and a set of \items.
Each worker (i.e., an {\it agent}) has a utility function defined 
over the set of {\it \items}.
Their satisfaction for a {\it bundle}--a subset of 
\objects that induces an independent set in \Co{H}-- of \objects is the sum of the agents' utilities for 
each \object in the bundle. 
The incompatibilities among the \items are captured by a graph \Co{H} defined 
on the set of \items such that an edge represents {\it conflict} 
(incompatibility, in other words). 
The overall objective is to assign pairwise-disjoint bundles to agents such that the minimum satisfaction of the agents is 
maximized. 
A decision version of the above problem can be defined as follows:

\defprob{\cffafull (\cffa)}{A set of $m$ \objects \Co{I} 
and a conflict graph \Co{H} with vertex set \Co{I}, 
a positive integer $\eta$ called target value,
a set of $n$ agents $\Co{A}=\{a_1,\ldots,a_n\}$ and for every agent $a_i\in \Co{A}$, 
a utility function $\util_i\colon \calI \rightarrow \mathbb{Z}_{\geq 0}$.}{Does 
there exist a function  $\phi\colon \Co{I} \rightarrow \Co{A}$  
such that  for every $a_i \in \Co{A}$, $\phi^{-1}(a_i)$ is an independent set in \Co{H} and $\sum_{x\in \phi^{-1}(a_i)}\util_i(x) \geq \eta$?}

We consider the following three variations of the problem viz 
$\textsc{Complete}$, $\textsc{Partial}$, and $\textsc{Size-Bounded}$.
In the first version, we are required to assign every \object to some agent,
i.e., the domain of the function $\phi$ is $\calI$.
The second version relaxes this criterion and 
allows $\phi$ to be a {\em partial function} on $\calI$, i.e, each of the \objects need not be assigned to an agent.
In the third version, an input contains an additional integer $s$
that restricts the number of \objects that can be assigned to
any agent, i.e., we have a further requirement of $|\phi^{-1}(a)| \le s$,
where $\phi$ may be a partial or complete function. {From now onwards, we will use \ccffa for the complete version, \pcffa for the partial version, \sbccffa for the size-bounded complete version, and \sbpcffa for the size-bounded partial versions.}


The natural parameters, in this case, are
the number of \objects $m$, 
the number of agents $n$, 
the threshold value $\eta$, and
the size bound on the bundle $s$. For any non-trivial instance of any version, we have $n \le m$. Otherwise, there will be an agent who does not receive any \object and hence the target value cannot be attained. Moreover, we have $s \le m$ for the size bounded variants. Moreover, it is safe to assume that the range of $\util_i(x)$ is $\{0, 1, \dots, \eta\}$, for all $i\in [n]$.
Note that $\eta$ \emph{need not be} bounded above by a function of $m$. If this is true, we say $\eta$ is \emph{large}. For a class of graphs $\calG$, we define 
$\pi: \calG \mapsto \mathbb{N}$,
where $\pi(\calH)$ denotes some structural properties associated with  graph $\calH$ in $\calG$.
For most of the reasonable structural parameters, it is safe to assume that
$\pi(\calH) \le |V(\calH)| = m$. We set
$t = {m\choose 2} - |E(\Co{H})|$, the number of missing edges in $\calH$. See Figure~\ref{fig:para-relation} for the relations amongstthe parameters considered in this article. 

We start our technical discussion by presenting two results that show that the problems are hard even for very restricted cases.

\begin{figure}[t]
\begin{center}
\tikzstyle{fptnode} = [rectangle, draw, fill=green!20, text width=6em, text centered, rounded corners, minimum height=2em]
\tikzstyle{hardnode} = [rectangle, draw, fill=red!20, text width=6em, text centered, rounded corners, minimum height=2em]
\tikzstyle{line} = [draw, -latex']

\begin{tikzpicture}[auto]
\node[fptnode]     (nrofnodes)   {$m$\\ (Thm~\ref{thm:fpt-nr-of-vertices})};
\node[fptnode]      (missingedgesplusagents) [right=.75cm of nrofnodes] {$t + n$\\ (Thm~\ref{thm:fpt-nr-of-non-edges})};
\node[fptnode]      (sizeplusagent)  [right=.75cm of missingedgesplusagents] {$s + n$\\ (Thm~\ref{thm:equiavlence})};
\node[fptnode]      (agentplusparapluseta) [right= .75cm of sizeplusagent] {$\eta + n + \pi(\calH)$\\ (Thm~\ref{thm:kernel-partial-complete},\ref{thm:kernel-contrast})};


\node[fptnode]   (nrofmissingedges) [below left=2cm and 52pt of missingedgesplusagents] {$t$\\ (Thm~\ref{thm:fpt-nr-of-non-edges})};
\node[hardnode] (sizeofbundle) [right=14pt of  nrofmissingedges]
{$s$\\ (Thm~\ref{thm:np-hardness-results})};
\node[hardnode] (nrofagent) [right=14pt of sizeofbundle]
{$n$\\ (Thm~\ref{thm:np-hardness-results})};
\node[hardnode] (parofgraph) [right=14pt of nrofagent]
{$\pi(\calH)$\\ (Thm~\ref{thm:np-hardness-results})};
\node[hardnode] (threshold) [below right=2cm and 52pt of sizeplusagent]
{$\eta$\\ (Thm~\ref{thm:np-hardness-results})};


\draw[line] (nrofnodes) -- (nrofmissingedges);
\draw[line] (nrofnodes) -- (sizeofbundle);
\draw[line] (nrofnodes) -- (nrofagent);
\draw[line] (nrofnodes) -- (parofgraph);

\draw[line] (missingedgesplusagents) -- (nrofmissingedges);
\draw[line] (missingedgesplusagents) -- (nrofagent);

\draw[line] (sizeplusagent) -- (sizeofbundle);
\draw[line] (sizeplusagent) -- (nrofagent);


\draw[line] (agentplusparapluseta) -- (nrofagent);
\draw[line] (agentplusparapluseta) -- (parofgraph);
\draw[line] (agentplusparapluseta) -- (threshold);
\end{tikzpicture}
\end{center}
\caption{Bird's-eye view of the results in the article.
An arrow from parameter $\ell_1$ to $\ell_2$ implies
that $\ell_2$ is upper bounded by some computable 
function of $\ell_1$.
\hide{Recall that $m$ denotes the number of \object, 
$n$ denotes the number of agents,
$\eta$ denotes the target value, and 
$s$ denotes the size bound on a bundle of each agent. We use
$\pi(\calH)$ to denote a parameter of $\calH$
that is a constant if $\calH$ is an edgeless graph; and
$t = {m\choose 2} - |E(\Co{H})|$, the number of missing edges in $\calH$.}
The red-colored rectangles denote that the problem is \nph 
even when the parameter is a constant.
The green-colored rectangles denote that there is an \FPT\ algorithm. Note that some of these results are conditional or restrictive. 
Please refer to the theorems for precise statements.
\label{fig:para-relation}}
\end{figure}

\begin{restatable}[]{them}{nphardness} \label{thm:np-hardness-results}
We prove that 
\begin{enumerate}
\item \ccffa  
    \begin{enumerate}
    \item admits a polynomial time algorithm if $n = 1$
    even if $\eta$ is large;  and  
    \item is \nph for 
    $(i)$ any $n \ge 2$ even if the binary encoding $\eta$ is a polynomial function
    of $m$; as well as  
    $(ii)$ any $n \ge 3$ even if $\eta = 1$. 
\end{enumerate} 
\item\label{item:IS}
\pcffa is \nph for any $n \ge 1$ even if $\eta$ is at most $m$ (and range of $\util$ is $\{1\}$). 
\item Both \sbccffa\ and \sbpcffa\ 
\begin{enumerate}
     \item admit polynomial-time algorithms when 
     $(i)$ both $n$ and $s$ are constants (even if $\eta$ is large),
      or
    $(ii)$ $s = 1$ (even if $\eta$ is large), and
     \item are \nph\  $(i)$ for any $n \ge 1$ even if the binary encoding of $\eta$ is a polynomial function of $m$, or
    $(ii)$ for any $\eta \ge 2$ and any $s \ge 2$. 
      \end{enumerate}
\end{enumerate}
\end{restatable}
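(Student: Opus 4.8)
The plan is to handle the tractable regimes by direct combinatorial arguments and the hardness regimes by reductions from classical \nph problems, matching each case to the source that controls the right quantity ($\eta$, $n$, $s$, or the structure of $\calH$).

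\textbf{Tractable cases.} For \ccffa with $n=1$ (part 1(a)), a \emph{complete} allocation must place every job on the single agent, so the only candidate bundle is $\calI$ itself; I would just test whether $\calI$ is independent in $\calH$ (equivalently, $\calH$ is edgeless) and whether $\sum_{x}\util_1(x)\ge\eta$, which is clearly polynomial even for large $\eta$. For the size-bounded tractable cases (part 3(a)): when $n$ and $s$ are both constant there are only $\binom{m}{\le s}=\Oh(m^s)$ candidate bundles per agent, so enumerating one independent, threshold-meeting bundle per agent and checking disjointness costs $\Oh(m^{sn})$, which is polynomial (for the complete version one first rejects whenever $m>sn$). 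When $s=1$ a bundle is a single job, hence automatically independent, so the problem becomes a bipartite assignment: place an edge between agent $a_i$ and job $x$ whenever $\util_i(x)\ge\eta$, and compute a matching saturating all agents (and, for the complete version, all jobs), solvable by standard bipartite matching/flow.

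\textbf{Hardness from purely combinatorial sources.} For \pcffa (part 2), a single agent with a \emph{partial} allocation simply selects an independent subset of $\calH$, and with all utilities equal to $1$ the threshold asks for an independent set of size $\ge\eta$; this is literally \textsc{Independent Set}, giving \nph-ness with $\eta\le m$, and larger $n$ is obtained by attaching disjoint dummy gadgets (each padded agent/job pair trivially satisfiable). For \ccffa with $\eta=1$ and $n\ge 3$ (part 1(b)(ii)), unit utilities together with a complete allocation force a partition of $V(\calH)$ into $n$ independent sets, i.e.\ a proper $n$-colouring, with the added requirement that every colour class be nonempty so that each agent attains value $1$; I would reduce from \threecol with $n=3$, using a small gadget that forces all three colours to be used.

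\textbf{Hardness from numeric sources, and the main obstacle.} For the ``large $\eta$'' claims (parts 1(b)(i) and 3(b)(i)) I would reduce from \subsetsum/\textsc{Partition}: with two agents and an edgeless conflict graph, create one job per number whose utility equals that number and set $\eta$ to half the total, so that a complete allocation forces an exact balanced partition; since the numbers may be exponential in value but have $\Oh(m)$ bits, this yields hardness with $\eta$ large yet polynomially encoded, and dummy agents push it to larger $n$ while the size-bounded variants inherit it by taking $s=m$. Note the single-agent \emph{complete} case is genuinely trivial (as in part 1(a)), so for \sbpcffa at $n=1$ I would instead reuse the \textsc{Independent Set} reduction with large uniform vertex weights to inflate $\eta$. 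The step I expect to be the real obstacle is part 3(b)(ii), where constant $\eta\ge 2$ and constant $s\ge 2$ bound \emph{both} the utilities and the bundle sizes, ruling out any weakly-\nph numeric source: one must reduce from a \emph{strongly} \nph partition/packing problem (a variant of \threedmatch, \NumMatch, or \partition), encoding each target group as a size-$\le s$, value-$\ge\eta$ bundle. The two delicate points are (a) simulating the exact ``$=\eta$'' behaviour of a partition problem with our one-sided ``$\ge\eta$'' constraint, which I would enforce by a counting argument that forces every bundle to be exactly full, and (b) shaping the conflict graph so that the only feasible bundles correspond to legal groups. I would settle the base case $s=2,\eta=2$ first and then lift to general $s,\eta$ by padding each bundle with a fixed number of filler jobs.
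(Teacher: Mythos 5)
Your treatment of parts 1(a), 1(b), 2, 3(a), and 3(b)(i) matches the paper's proof essentially step for step: the edgeless-graph check for $n=1$; the \textsc{Partition} reduction with two agents and an edgeless conflict graph, padded by dummy value-$\eta$ items for larger $n$; the \textsc{$3$-Coloring} reduction for $\eta=1$ (the paper's proof omits your ``all three colours used'' gadget, which is indeed unnecessary since a $3$-colourable graph on at least three vertices always admits a proper colouring with all classes nonempty); the \textsc{Independent Set} reduction for \pcffa; the $m^{\Oh(sn)}$ enumeration; and the bipartite matching for $s=1$. You are in fact slightly more careful than the paper in noting that the single-agent \emph{complete} size-bounded case is trivial and must be routed through the partial variant.

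The genuine gap is exactly where you predicted it: part 3(b)(ii). Your plan to ``encode each target group as a size-$\le s$, value-$\ge\eta$ bundle'' cannot work literally in the base case $s=2$, $\eta=2$, because every natural strongly \nph source (a $3$-partition triple, a \threedmatch tuple) has three elements --- one too many for a bundle --- and with $\eta=2$ the utilities cannot carry numeric information either. The missing idea, which the paper supplies, is to move one coordinate of the matching problem to the \emph{agent} side. Starting from \threedmatch restricted so that every element occurs in at most three tuples, each $z\in Z$ occurring in $c\in\{2,3\}$ tuples becomes $c$ agents; the elements of $X\uplus Y$ become items of utility $1$ to exactly the agents of the tuples containing them (and $0$ otherwise); and $c-1$ dummy items of utility $2$ are attached to the agents of $z$. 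A counting argument then shows the dummies can satisfy all but one agent copy of each $z$, and the remaining $|Z|$ agents must each receive a disjoint pair $\{t(x),t(y)\}$ coming from a single tuple through $z$, i.e., a perfect $3$-dimensional matching. The conflict graph is simply edgeless, so your worry (b) about shaping $\calH$ evaporates; all the work is done by the utilities and the count. The lift to general $\eta\ge 2$ and $s\ge 2$ is by adding isolated items of value $\eta-2$, as you anticipated. Without this agent-side trick your outline does not yet constitute a proof of 3(b)(ii).
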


Theorem~\ref{thm:np-hardness-results} 
rules out the possibility of an \FPT\ algorithm when
the parameter under consideration is one of the following:
the number of agents $n$, the maximum size of a bundle $s$,  the threshold
value $\eta$, or a structural parameters $\pi(\calH)$, where 
$\pi(\calH)$ is a constant when $\calH$ is an edgeless graph.
This result, however, does not rule out an \FPT\ algorithm 
when parameterized by the number of items, $m$.
Indeed, note that a simple brute-force algorithm that enumerates all possible 
$(n+1)^{m}$ ways of assigning the \objects to the agents, 
where each \object has $(n+1)$ choices of agents to choose from ( $+ 1$ is required in partial allocation if a \object is unassigned)
runs in time $(n+1)^{m} \cdot (m + \log(\eta))^{\calO(1)}$.
As $n \le m$, this is an \FPT\ algorithm parameterized by $m$.
Our following result significantly improves the running time.


\begin{restatable}[]{them}{fptnrofvertices}
\label{thm:fpt-nr-of-vertices}
All four variants of \cffa admit an algorithm running in time
$2^{m}\cdot (m + \log(\eta))^{\calO(1)}$.
Moreover, unless the \ETH\ fails, 
none of the variations admits an algorithm running in time
$2^{o(m)}\cdot (m + \log(\eta))^{\calO(1)}$.
\end{restatable}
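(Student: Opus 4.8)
The plan is to prove the theorem in two parts: first establish the $2^{m}\cdot (m+\log\eta)^{\Oh(1)}$ upper bound for all four variants, and then establish the matching ETH-based lower bound.

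\textbf{The algorithm.} For the upper bound, I would use dynamic programming over subsets of the item set $\calI$, processing agents one at a time. The brute-force $(n+1)^m$ bound is wasteful because it distinguishes which agent receives each item even when the agents are processed sequentially. Instead, order the agents $a_1, \dots, a_n$ and define a table indexed by (agent index $i$, subset $S \subseteq \calI$ of items already distributed among $a_1, \dots, a_i$). The entry records whether it is possible to partition (for the complete version) or pack (for the partial version) the items of $S$ into conflict-free bundles $\phi^{-1}(a_1), \dots, \phi^{-1}(a_i)$ so that each of these $i$ agents meets the threshold $\eta$ (and, in the size-bounded variants, the size cap $s$). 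The transition from agent $i$ to agent $i+1$ guesses the bundle $B \subseteq \calI \setminus S$ assigned to $a_{i+1}$: we require $B$ to be an independent set in $\calH$, require $\sum_{x\in B}\util_{i+1}(x)\ge \eta$, and in the size-bounded case require $|B|\le s$. The naive transition iterates over all pairs $(S, B)$, which costs $\sum_{S}2^{m-|S|}=3^{m}$; to get $2^m$ I would instead use the standard subset-sum/subset-convolution trick, iterating over each target set $S' = S \cup B$ and, for each, over its subsets $S \subseteq S'$. Summing $\sum_{S'\subseteq \calI} 2^{|S'|} = 3^m$ again gives $3^m$, so to hit $2^m$ precisely I would process agents by iterating the recurrence $f_{i+1}(S') = \bigvee_{B\subseteq S',\,B \text{ valid for } a_{i+1}} f_i(S'\setminus B)$ as a single "OR-convolution over the subset lattice," which can be evaluated for all $S'$ simultaneously in $2^m \cdot \mathrm{poly}(m)$ time per agent using the zeta/Möbius (subset-sum) transform. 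With $n \le m$ agents this yields total time $2^m \cdot (m+\log\eta)^{\Oh(1)}$; the $\log\eta$ factor appears only because utilities and thresholds are encoded in binary and the arithmetic comparisons cost $\Oh(\log\eta)$ each. The partial and size-bounded variants differ only in which bundles $B$ count as "valid," so the same recurrence handles all four cases.

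\textbf{The lower bound.} For the ETH side, I would give a reduction from a problem known to have no $2^{o(m)}$ algorithm under ETH with a linear parameter blow-up. The natural source is $3$-\textsc{Coloring} (or \threecol as defined in the preamble), which on an $n$-vertex graph has no $2^{o(n)}$ algorithm unless ETH fails, after first passing through the sparsification lemma so that the number of edges is linear in the number of vertices. Given a graph $G$ on $\nu$ vertices, I would build a \cffa instance with three agents, item set $\calI = V(G)$, conflict graph $\calH = G$ (or its complement, depending on which version I target), and unit utilities with $\eta$ chosen so that a valid complete allocation into three independent sets is exactly a proper $3$-coloring. Since $m = |\calI| = \nu$, a $2^{o(m)}$ algorithm for \cffa would yield a $2^{o(\nu)}$ algorithm for $3$-\textsc{Coloring}, contradicting ETH. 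I must take care that this reduction, possibly with minor modifications, applies to all four variants simultaneously: the complete version maps directly to partitioning $V(G)$ into color classes, and for the partial and size-bounded versions I would either reuse the same construction (a complete valid solution is in particular a partial one when $\eta$ forces all items to be used) or add a small gadget fixing the bundle sizes.

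\textbf{Main obstacle.} The delicate point is achieving exactly $2^m$ rather than $3^m$ in the dynamic program, since the most obvious formulation (guess a set plus a disjoint subset) is inherently a $3^m$ computation. The resolution via the subset-sum (zeta/Möbius) transform to compute the subset OR-convolution in $2^m\cdot\mathrm{poly}(m)$ per agent is standard but must be set up carefully so that the validity constraints on a bundle $B$ (independence in $\calH$, meeting $\eta$, respecting $s$) are encoded as a precomputed indicator over all $2^m$ subsets rather than checked inside the convolution. Precomputing independence for all $2^m$ subsets, and the bundle-utility and bundle-size predicates, each takes $2^m\cdot\mathrm{poly}(m)$ time, so this does not inflate the asymptotics. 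On the lower-bound side, the only subtlety is ensuring the source problem's bound survives the sparsification step so that $m$ stays linear in the original parameter, which is exactly what guarantees the $2^{o(m)}$ lower bound rather than a weaker one.
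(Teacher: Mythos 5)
Your upper bound is sound and, despite the different packaging, is essentially the paper's algorithm: the paper also computes an $n$-fold disjoint-union convolution of the bundle-indicator families $\Co{F}_1,\dots,\Co{F}_n$, implementing it via polynomial multiplication of degree-$2^m$ polynomials whose exponents are characteristic vectors, with a Hamming-weight projection $H_{s^\star}$ playing exactly the role of the cardinality rank in your zeta/M\"obius formulation (this is the Cygan--Pilipczuk variant of fast subset convolution). Your worry about disjointness versus the cover product is well placed and correctly resolved by the ranked transform; the paper resolves the same issue with the Hamming-weight filter. Either implementation gives $2^m\cdot(m+\log\eta)^{\Oh(1)}$.

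The lower bound, however, has a genuine gap for the partial variants. Your $3$-\textsc{Coloring} reduction (three agents, unit utilities, $\eta=1$, $\calH=G$) is correct for \ccffa, but for \pcffa it produces a trivially \yes\ instance: a partial assignment may leave items unassigned, so each of the three agents can simply take one distinct vertex (any singleton is an independent set of utility $1\ge\eta$), regardless of whether $G$ is $3$-colorable. Your fallback, ``a complete valid solution is in particular a partial one when $\eta$ forces all items to be used,'' does not apply here because with unit utilities and $\eta=1$ nothing forces all items to be used, and no choice of $\eta$ in this construction makes a partial solution certify a proper $3$-coloring; the unspecified ``small gadget'' is exactly the missing content. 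The paper closes this by switching source problems for the partial variants: reduce \textsc{Independent Set} on $G$ (no $2^{o(|V(G)|)}$ algorithm under \ETH) to \pcffa with a \emph{single} agent, unit utilities, $\calH=G$, and $\eta=k$, so that a feasible partial bundle is precisely an independent set of size $k$; the same reduction works for \sbpcffa, and a padded version (with $m-k+1$ agents, the extra agents valuing every item at $k$) handles \sbccffa. You need some such replacement reduction to make the claim ``none of the variations admits a $2^{o(m)}$ algorithm'' go through.
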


In the following results, we consider a combination of other parameters
mentioned above to obtain some tractable results.
We start with the combined parameter $n + s$. 
{Note that for the complete version,  $m\leq ns$, otherwise, 
it is a trivial no-instance. Thus, Theorem~\ref{thm:fpt-nr-of-vertices} 
implies an \fpt algorithm with respect to $n+s$ for the complete version.
Consequently, for this parameter, we only focus on the partial variants. Note that an \fpt algorithm for \sbpcffa would imply an \fpt algorithm 
for \pcffa, because $s \leq m$ and an algorithm that makes subroutine 
calls to an $\fpt(n+s)$ algorithm for \sbpcffa for increasing values 
of $s$ (from $1$ to $m$) is an $\fpt(n+s)$ algorithm for \pcffa. 
Hence, we focus on designing an algorithm for \sbpcffa. 
However, hoping that \pcffa is \fpt parameterized by $n+s$ in general 
graphs is futile because the problem generalizes the 
{\sc Maximum Weight Independent Set} problem,
which we formally establish in the proof of \hide{the above theorem,} {\Cref{thm:np-hardness-results}(\cref{item:IS})}.
Hence, we can only expect to obtain an $\fpt(n+s)$ algorithm for 
special classes of graphs. Consequently, our exploration moves towards identifying graph classes that may admit such an algorithm.} 
Towards this, we introduce the following problem, 
which we believe could be of independent interest. 
Note that here, the size of the solution (i.e. independent set)  is upper bound by the parameter, distinguishing it from the classic {\sc Maximum Weight Independent Set} problem.

\defproblem{\sbmwisfull (\sbmwis) }{A  graph $G$, positive integers $k$ and 
$\rho$, a weight function $w: V(G) \rightarrow \mathbb{N}$.}{Does there 
exist an independent set $S$ of size at most $k$ such that 
$\sum_{v\in S} w(v)\geq \rho$?}

\sbmwis is a clear generalization of the {\sc Independent Set} problem (by setting $\rho=k$ and unit weight 
function), a very well studied problem in the realm of parameterized complexity and indeed in the wider field of graph algorithms. This connection allows us to demarcate the tractability border of our problem \dcffa via the computational landscape of {\sc Independent Set}. 
In the field of parameterized complexity, {\sc Independent Set}  has been extensively studied on families of graphs that satisfy some structural properties. We take the same exploration path for our problem \sbpcffa. The graph classes in which {\sc Independent Set} has an $\fpt(k)$ algorithm is a potential field for \fpt algorithms for \sbpcffa. This possibility has to be formally explored.

Let $\mathcal G$ be a family of {\em hereditary} graphs. That is, if $G\in {\mathcal G} $, then all the induced subgraphs of $G$ belong to $\mathcal G$. In other words, $\mathcal G$ is closed under taking induced subgraphs. 

%
%
%
 %
%
For a hereditary family \Co{G}, \mwisg denotes the restriction of \mwis where the input graph $G\in \mathcal G$ and \cffag denote the restriction of \sbpcffa where the conflict graph $\Co{H}\in \mathcal G$. \hide{
 }

 The following result completely characterizes the parameterized complexity of \cffag with respect to $n+s$ vis-a-vis \hide{the parameterized complexity of} \mwisg with respect to $k$. 
 

\begin{restatable}{them}{equivalence}\label{thm:equiavlence}
Suppose $\mathcal G$ is a hereditary family of graphs.  
Then, \cffag is \fpt parameterized by $n+s$ if and only if  \mwisg is \fpt parameterized by $k$. 
\end{restatable}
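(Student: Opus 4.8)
**The plan is to prove both directions of the equivalence as reductions that preserve the parameter up to a computable function.**

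I would prove the theorem by establishing the two implications separately, showing that each problem can be used as a subroutine to solve the other while keeping the relevant parameter bounded. Recall that \cffag restricts \sbpcffa to conflict graphs in $\mathcal{G}$ (parameter $n+s$), and \mwisg restricts \mwis to input graphs in $\mathcal{G}$ (parameter $k$). Hereditariness of $\mathcal{G}$ will be essential, because both reductions manipulate induced subgraphs of the input graph, and we need the intermediate instances to remain inside $\mathcal{G}$.

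\textbf{The easy direction (\mwisg \fpt $\Rightarrow$ \cffag \fpt).} First I would assume an $\fpt(k)$ algorithm for \mwisg and use it to solve \cffag. The idea is to process the agents one at a time. For a single agent $a_i$ with utility function $\util_i$, finding a conflict-free bundle of size at most $s$ with utility at least $\eta$ is exactly an instance of \sbmwis on the conflict graph $\Co{H}$ with weight function $\util_i$, size bound $k = s$, and target $\rho = \eta$; since $\Co{H}\in\mathcal{G}$, this is a \mwisg instance. The obstacle is that the agents' bundles must be pairwise disjoint, so greedily assigning one agent at a time is not obviously correct. I would handle this by iterating over agents and, crucially, using the fact that after fixing the first $n-1$ bundles, the remaining graph is an induced subgraph of $\Co{H}$ (still in $\mathcal{G}$ by hereditariness). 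To make the bundles interact correctly I expect to guess, for each of the $n$ agents, a candidate bundle via a branching/color-coding-style argument, or more cleanly to invoke the machinery that solves the disjoint version; the standard route is that $n$ independent \sbmwis calls combined with a product/dynamic-programming step over agents yields the bundles, with total running time $f(s)\cdot n \cdot |V|^{O(1)}$, which is $\fpt(n+s)$.

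\textbf{The hard direction (\cffag \fpt $\Rightarrow$ \mwisg \fpt).} Conversely, given an $\fpt(n+s)$ algorithm for \cffag, I would solve a \mwisg instance $(G,k,\rho,w)$ by constructing a \cffag instance whose conflict graph is $G$ itself (hence in $\mathcal{G}$), with a single agent $n=1$, size bound $s=k$, target $\eta=\rho$, and utility function $\util_1 = w$. A size-bounded partial allocation to one agent is precisely an independent set of size at most $k$ with weight at least $\rho$, so the two instances are equivalent. Since $n+s = 1+k$, an $\fpt(n+s)$ algorithm for \cffag is an $\fpt(k)$ algorithm for \mwisg. This direction is essentially immediate once the single-agent correspondence is spelled out.

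\textbf{Where the difficulty lies.} I expect the main obstacle to be the forward direction's disjointness requirement: a naive sequential application of the \mwisg oracle does not respect that bundles assigned to different agents must be vertex-disjoint, and an agent's optimal bundle may "steal" vertices another agent needs. The clean fix is to exploit that $n$ and $s$ are both parameters, so one can afford a search over how the $ns$ relevant vertices are distributed among agents; combined with hereditariness (each residual graph stays in $\mathcal{G}$, so the oracle remains applicable) and the size bound $s=k$ tying the two parameters together, this yields the desired \fpt running time. I would therefore spend most of the write-up carefully arguing correctness and the running-time bound of this branching-plus-oracle procedure, and dispatch the reverse direction in a few lines.
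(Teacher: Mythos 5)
Your overall plan coincides with the paper's: the direction from \cffag to \mwisg is exactly the paper's single-agent reduction ($n=1$, $s=k$, $\eta=\rho$, $\util_1=w$, conflict graph $G$ itself), and the converse direction is, as in the paper, handled by highlighting the at most $ns$ assigned items and combining \mwisg oracle calls on induced subgraphs, which remain in $\mathcal G$ by hereditariness.

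The one genuine issue is that your forward direction stops at a gesture, and the single concrete claim you commit to is wrong: ``$n$ independent \sbmwis calls combined with a product/dynamic-programming step'' with total running time $f(s)\cdot n\cdot |V|^{\Oh(1)}$ cannot work, precisely because of the disjointness obstacle you yourself identify, and no bound polynomial in $n$ should be expected here --- the paper's algorithm runs in time $\Oh((3e)^{ns}\cdot (ns)^{\log ns}\cdot h(s)\cdot (n+m)^{\Oh(1)})$, i.e., exponential in $ns$. What is missing is the actual combination mechanism. The paper colors the items uniformly at random with $ns$ colors (a fixed solution becomes colorful with probability at least $e^{-ns}$), and then runs a subset dynamic program $T[i,S]$ over agents $i\in[n]$ and color sets $S\subseteq [ns]$, with transition $T[i,S]=\bigvee_{\emptyset\neq S'\subset S} T[i-1,S']\wedge \mathbb{I}(\Co{H}[V_{S\setminus S'}])$, where $\mathbb{I}$ invokes the \mwisg algorithm on the subgraph induced by the vertices colored with $S\setminus S'$, weighted by $\util_i$; this is where hereditariness is used, exactly as you anticipate. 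The DP costs $\Oh(3^{ns})$ oracle calls, the procedure is repeated $e^{ns}$ times to boost the success probability, and is derandomized with an $(m,ns)$-perfect hash family. Once you commit to this color-coding instantiation of your sketch and discard the ``sequential oracle calls'' alternative and its running-time claim, your argument matches the paper's proof.
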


Even though the above theorem establishes the link between
these two problems, it does not identify the graph
classes on which these problems will be tractable.
While there are papers that study hereditary graph classes for the 
{\sc MWIS} problem (see, for example, \cite{Dabrowski12}), 
we are not aware of known classes of graphs for which 
\sbmwis\ is \fpt parameterized by $k$. 
Hence, we introduce the notion of $f$-independence friendly graph classes and prove that the problem admits 
an \fpt\ algorithm when restricted to this graph class.

\begin{def-nonr}[Independence Friendly Graph Class]
Consider $f \colon \mathbb{N} \to \mathbb{R}$ be a 
monotonically increasing and an invertible function.
A graph class $\mathcal G$ is called {\em $f$-independence 
friendly class  ($f$-ifc)} if $\mathcal G$ is hereditary  
and for every $n$-vertex graph $G \in \mathcal G$ has 
an independent set of size $f(n)$. 
\end{def-nonr}

\begin{restatable}{them}{fptfrindlygraph}\label{lem:mwis-degenerate}
Let $\mathcal G$ be an $f$-independence friendly class. Then, 
there exists an algorithm for \mwisg running in time $\Oh((f^{-1}(k))^k\cdot (n+m)^{\Oh(1)})$, and hence an algorithm for \cffag
running in time $2^{\calO(ns)} \cdot (f^{-1}(s))^{s}\cdot (m + \log(\eta))^{\calO(1)}$
\end{restatable}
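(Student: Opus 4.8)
The plan is to design the branching algorithm for \mwisg first, and then obtain the bound for \cffag by feeding it into the reduction underlying \Cref{thm:equiavlence}. For \mwisg I would solve the natural recursive subproblem: given an induced subgraph $G'$ of $G$ (which still lies in $\mathcal{G}$ since $\mathcal{G}$ is hereditary), a residual budget $k' \le k$, and a residual target $\rho'$, decide whether $G'$ has an independent set of size at most $k'$ and weight at least $\rho'$. The base cases are immediate: if $\rho' \le 0$ the empty set suffices, and if $k' = 0$ or $G'$ is empty we answer yes only when $\rho' \le 0$. For the recursive step, let $N_{k'} = \lceil f^{-1}(k')\rceil$ and let $T'$ be the set of the $\min\{N_{k'}, |V(G')|\}$ heaviest vertices of $G'$, with ties broken arbitrarily. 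I would branch by guessing one vertex $v \in T'$ to place in the solution, recursing on $\text{Solve}(G' - N[v],\, k'-1,\, \rho' - w(v))$.

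The key step, and the main obstacle, is proving that this branching is exhaustive: whenever a solution exists, there is one containing a vertex of $T'$. This is exactly where $f$-independence-friendliness is used. If $|V(G')| < N_{k'}$ then $T' = V(G')$ and the claim is trivial. Otherwise $|T'| = N_{k'}$, so $G'[T']$ is an $N_{k'}$-vertex graph in $\mathcal{G}$ and hence contains an independent set $J$ with $|J| \ge f(N_{k'}) \ge k'$. Now take any solution $S'$ and suppose it avoids $T'$ entirely; pick the $|S'|$ heaviest vertices $J' \subseteq J$, which is possible since $|J| \ge k' \ge |S'|$. Every vertex of $J' \subseteq T'$ is at least as heavy as every vertex of $S'$ (which lies outside the $N_{k'}$ heaviest), so $J'$ is an independent set with $|J'| = |S'| \le k'$ and $w(J') \ge w(S') \ge \rho'$; since $\rho' > 0$ forces $|S'| \ge 1$, the set $J'$ is a solution meeting $T'$. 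Hence branching over $T'$ loses nothing.

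Finally I would bound the running time. Each recursive call decreases $k'$ by exactly one, so the recursion depth is at most $k$; the branching factor is $|T'| \le \lceil f^{-1}(k')\rceil \le \lceil f^{-1}(k)\rceil$ by monotonicity of $f^{-1}$; and each node performs only polynomial work (sorting vertices by weight and deleting closed neighbourhoods). This yields the claimed $\Oh\!\left((f^{-1}(k))^k \cdot (n+m)^{\Oh(1)}\right)$ bound, where ceilings are absorbed into the $\Oh$-notation.

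For \cffag I would invoke the forward direction of \Cref{thm:equiavlence}: its reduction solves \sbpcffa on conflict graphs in $\mathcal{G}$ through calls to \mwisg with budget $k = s$, each costing $(f^{-1}(s))^{s}\cdot (m + \log(\eta))^{\Oh(1)}$ by the algorithm above, while the remaining $2^{\Oh(ns)}$ factor absorbs the overhead of coordinating the at most $ns$ items distributed among the $n$ agents and enforcing disjointness of the bundles. Since $\mathcal{G}$ is hereditary, every intermediate conflict graph produced by deletions stays inside $\mathcal{G}$, so the \mwisg subroutine remains applicable throughout. Combining the two parts gives the stated running time $2^{\Oh(ns)} \cdot (f^{-1}(s))^{s}\cdot (m + \log(\eta))^{\Oh(1)}$; the point I would most carefully verify is that the reduction incurs only a single $(f^{-1}(s))^{s}$ factor rather than one per agent, tracing precisely how \Cref{thm:equiavlence} decouples the per-bundle \mwis computation from the $2^{\Oh(ns)}$ assignment overhead.
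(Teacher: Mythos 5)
Your proof is correct, and for the \mwisg part it takes a genuinely different route from the paper. The paper branches on ${\sf HighWeight}=\{v : w(v)\ge \rho/k\}$: an averaging argument shows any solution must contain such a vertex, and the $f$-ifc property is used only in a separate terminating case — when $|{\sf HighWeight}|\ge f^{-1}(k)$, a $k$-subset of $f^{-1}(k)$ high-weight vertices is found by brute-force enumeration and automatically has weight $\ge k\cdot\rho/k=\rho$. You instead branch on the $\lceil f^{-1}(k')\rceil$ \emph{heaviest} vertices $T'$ and justify exhaustiveness by an exchange argument: heredity plus $f$-independence-friendliness guarantees an independent set $J\subseteq T'$ of size $\ge k'$, and if a solution avoids $T'$ entirely, the $|S'|$ heaviest vertices of $J$ form an at-least-as-heavy replacement solution inside $T'$. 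Your single uniform branching rule avoids the paper's case split and the weight-threshold computation, at the cost of a slightly more delicate correctness lemma; both yield branching factor $f^{-1}(k)$ and depth $k$, hence the same $\Oh((f^{-1}(k))^k\cdot(n+m)^{\Oh(1)})$ bound. For the \cffag consequence you and the paper do the same thing: plug $h(s)=(f^{-1}(s))^s$ into the color-coding/DP machinery behind \Cref{thm:equiavlence} (Theorem~\ref{thm:fpt-ns}), and your concern about the $h(s)$ factor is resolved exactly as you suspect — the \mwisg oracle is invoked once per DP cell, so $h(s)$ enters multiplicatively alongside the $2^{\Oh(ns)}$ term rather than being compounded per agent.
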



Some examples of $f$-independence friendly graph classes are 
bipartite graphs, planar graphs, graphs of bounded degeneracy, 
graphs excluding a fixed clique as an induced subgraph, etc.

As the last of our main results, we consider a parameter $\pi$ of 
graph that is not $0$ for edgeless graphs.
One of the most obvious parameters is the number of non-edges.
Note that dense edges, representing multiple conflicts between \items,
naturally bound the number of items that can be assigned to a particular 
agent.
Consider a simple example where $\calH$ is a complete graph on $m$
vertices, i.e., it has $\binom{m}{2}$ many edges.
Then, we can only assign one \items to each agent. 
Hence, the problem reduces to finding a matching in an auxiliary 
bipartite graph across the set of agents and set of \items.
This raises the natural question whether the number of non-edges $t$
is a fruitful parameter.
The following theorem answers this question positively
by presenting an \FPT\ algorithm when parameterized by $t$.
Moreover, if we consider additional parameters as the number 
of agents $n$, we get an improved \FPT\ algorithm.

\begin{restatable}[]{them}{fptnonedges}
\label{thm:fpt-nr-of-non-edges}
All the variants of the \cffa problem,
when restricted to the case where the conflict graph $\calH$ on $m$
vertices has at least $\binom{m}{2} - t$ many edges admits
\begin{enumerate}
\item an algorithm running in time $2^{\calO(n \cdot \sqrt{t} \cdot \log(t) )}
\cdot (m + \log(\eta))^{\calO(1)}$;
and
\item an algorithm running in time $2^{\calO(t \cdot \log(t))} \cdot 
(m + \log(\eta))^{\calO(1)}$.
\end{enumerate}
\end{restatable}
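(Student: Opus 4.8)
The plan is to exploit that the complement graph $\overline{\calH}$ has at most $t$ edges, so $\calH$ is ``almost complete'' and its independent sets are severely constrained. Let $W \subseteq V(\calH)$ be the set of vertices incident to at least one non-edge (the non-isolated vertices of $\overline{\calH}$); then $|W| \le 2t$, and every vertex of $U := V(\calH) \setminus W$ is adjacent in $\calH$ to all other vertices. The structural fact I would record first is that any independent set $B$ of $\calH$ satisfies exactly one of: (i)~$B \subseteq W$, or (ii)~$B = \{u\}$ for a single $u \in U$, since a vertex of $U$ conflicts with everything else and hence can only form a singleton bundle. Thus in any feasible assignment every agent's bundle is either contained in $W$ or is a $U$-singleton. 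Moreover, an independent set of $\calH$ of size $\ge 2$ is a clique of $\overline{\calH}$, so if it has size $k$ then $\binom{k}{2} \le t$, giving $k = \calO(\sqrt{t})$. Throughout I assume $\eta \ge 1$ (the case $\eta=0$ is handled trivially).

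Both algorithms then follow a common two-phase template: guess the part of the solution touching $W$, and complete the allocation of $U$ by a bipartite matching. For the matching phase I would build a bipartite graph between the agents and the ``units'' (the chosen $W$-bundles together with the individual vertices of $U$), with an edge between agent $a_i$ and a unit iff assigning that unit to $a_i$ gives satisfaction at least $\eta$ (and, for the size-bounded variants, respects the bound $\le s$). Because any bundle meeting $U$ is a singleton, an agent can be satisfied by at most one unit, so a feasible completion is exactly a matching saturating all agents and all chosen $W$-bundles (and, in the complete variants, all of $U$). This collapses the completion step to a polynomial-time bipartite matching/feasibility check, uniformly across all four variants.

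For the $2^{\calO(t \log t)}$ bound (item~2), I would enumerate all set partitions of $W$ into blocks that are independent in $\calH$ (with, for the partial variants, one extra block of deliberately unassigned $W$-vertices); there are at most $B_{2t+1} = 2^{\calO(t \log t)}$ of these, where $B$ is the Bell number. For each partition the blocks are fixed but not yet assigned to agents, so I run the matching phase to place the blocks and the $U$-vertices; the number of agents $n$ never enters the exponent because the block-to-agent assignment is delegated to the matching. For the $2^{\calO(n \sqrt{t} \log t)}$ bound (item~1), I instead guess, for each of the $n$ agents, its $W$-bundle $B_i$ directly, where each $B_i$ is an independent set of $\calH$ inside $W$ of size $\calO(\sqrt t)$. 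The number of candidate bundles per agent is $\sum_{j \le \calO(\sqrt t)} \binom{|W|}{j} = 2^{\calO(\sqrt{t}\log t)}$, so the number of tuples $(B_1,\dots,B_n)$ is $2^{\calO(n\sqrt{t}\log t)}$; after discarding tuples whose blocks are not pairwise disjoint (or, for the complete variants, fail to cover $W$) or that leave some agent with a nonempty but unsatisfying $W$-bundle, only the $U$-vertices remain, to be distributed among the agents whose $W$-bundle is empty, which is again a bipartite matching.

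The main obstacle I anticipate is the bookkeeping needed to make the single matching formulation serve all four variants at once: the complete variants force every item (all of $W$ through the blocks and all of $U$) to be used and hence demand a matching saturating both sides, whereas the partial variants require only that every agent be satisfied and thus permit unused $W$- and $U$-vertices; similarly the size bound $s$ must be folded into the validity of a bundle, while $U$-singletons automatically satisfy it whenever $s \ge 1$. Getting these feasibility conditions exactly right---and rigorously verifying the dichotomy that any bundle meeting $U$ is a singleton, which is precisely what reduces the completion step to matching rather than a harder packing problem---is where the care lies; given the two counting bounds above, the running-time analysis is then immediate.
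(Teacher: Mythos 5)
Your proposal is correct and follows essentially the same route as the paper: bound the non-isolated vertices of $\overline{\calH}$ by $2t$, observe that any bundle of size at least two lies inside this set (and any bundle touching the rest is a singleton), enumerate the large-bundle structure per agent for item~1 and as a partition for item~2, and finish with a polynomial-time bipartite matching via the $s=1$ subroutine. The only cosmetic differences are in how the enumeration is counted (the paper uses a $2\sqrt{t}$-degeneracy ordering of $\overline{\calH}$ to list its cliques and labeled partitions into at most $t+1$ parts, whereas you use the $\binom{k}{2}\le t$ size bound and Bell numbers), both yielding the same asymptotic running times.
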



We remark that there are various points of view to consider
the collection of complete graphs and accordingly define 
\emph{the distance from triviality}. 
For example, the collection of graphs that are union
of $q$ many disjoint cliques is in close proximity 
of collection of cliques (for which $q = 1$).
However, our next result shows that such a result
has limited success. 

\begin{restatable}[]{them}{twocliques}
\label{thm:twocliques}
All the variants of \cffa,
when restricted to the case where the conflict graph $\calH$
is a union of $q$ cliques
\begin{enumerate}
\item is solvable in polynomial time when $q = 2$ and utility functions
are uniform, i.e., the utility functions are the same for all the agents; and
\item is \nph when $q = 2$ and $\eta = 3$ (but utility function 
is not uniform). 
\end{enumerate}
\end{restatable}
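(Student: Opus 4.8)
The plan is to first extract the structure forced by a two-clique conflict graph, and then treat the uniform case (algorithm) and the non-uniform case (hardness) separately.

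\textbf{Structural observation and Part~1.} If $\calH$ is a union of two cliques $C_1$ and $C_2$, then any independent set, and hence any bundle $\phi^{-1}(a_i)$, contains at most one vertex of $C_1$ and one of $C_2$; so every bundle has at most two \objects, one per clique. With \emph{uniform} utilities the value of a bundle is agent-independent, so the task reduces to packing pairwise-disjoint bundles of value at least $\eta$. Call an \object \emph{heavy} if $\util(x)\ge\eta$ and \emph{light} otherwise; a valid bundle is then a heavy singleton or a cross-clique pair $\{x,y\}$ with $\util(x)+\util(y)\ge\eta$. For the partial variant I would show, by an exchange argument (replacing a paired heavy \object by a heavy singleton never decreases the bundle count, and a heavy--heavy pair can be split into two singletons), that the maximum number of disjoint valid bundles equals the number of heavy \objects plus a maximum matching in the bipartite graph of light--light compatible pairs; the instance is a yes-instance iff this number is at least $n$, computable by one bipartite matching. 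For the complete variant every \object must be placed and each of the $n$ agents must reach $\eta$, which forces exactly $m-n$ cross-clique pairs and $2n-m$ heavy singletons; I would phrase this as finding a matching in the valid-pair bipartite graph of size exactly $m-n$ that saturates every light \object (an unmatched \object becomes a singleton and must be heavy), a degree-constrained-subgraph / flow-with-lower-bounds problem whose achievable matching sizes form an interval, so one just tests whether $m-n$ lies in it. The size-bounded variants add nothing when $s\ge 2$ and reduce to counting heavy \objects when $s=1$; hence all four variants are polynomial.

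\textbf{Part~2 (non-uniform utilities, $\eta=3$).} Now each agent touches at most two \objects and is satisfied iff their two utilities sum to at least $3$, with all values in $\{0,1,2,3\}$. I would reduce from a strongly \nph problem with a natural three-coordinate structure---\threedmatch (or, should the consistency gadgets be cleaner, \threecol or $3$-SAT)---identifying the three ground sets with the agents, the clique $C_1$, and the clique $C_2$, and tuning $\util_i$ so that the threshold $\eta=3$, together with the all-\objects-assigned requirement and the competition of several agents for the same \objects, forces every satisfying assignment to mirror a solution of the source instance.

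\textbf{The main obstacle} is the \emph{separability} of the satisfaction test: an agent's value is the \emph{sum} $\util_i(x)+\util_i(y)$, so a single agent cannot forbid a mismatched pair---if $(x,y)$ and $(x',y')$ are both good for an agent, at least one cross combination is good as well. The uniform analysis already makes this precise: a pure ``\textsc{and}'' encoding (values in $\{0,2\}$) splits into two independent bipartite matchings, and a pure ``\textsc{or}'' encoding (values in $\{1,2\}$) reduces to a matroid-union feasibility test, both polynomial. Consequently the hardness cannot reside in any per-agent gadget in isolation; it must emerge from the global coupling of the two near-perfect assignments into the agents, linked only through completeness and the min-satisfaction bound, while respecting that each agent sees only two \objects. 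Engineering gadgets that simulate the ``third'' coverage coordinate of the source problem under this two-\object-per-agent limitation, using nothing beyond the four utility values $\{0,1,2,3\}$, is the crux of the construction.
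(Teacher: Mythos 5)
Your Part~1 is essentially the paper's argument: the two-clique structure forces every bundle to have at most one item per clique (hence size at most two), uniformity lets you split items into heavy ($\util(x)\ge\eta$) and light, an exchange argument shows heavy items may be assumed to be assigned as singletons, and the residual problem is a maximum bipartite matching among compatible light cross-clique pairs. For the complete variant the paper instead guesses the two numbers $\ell_1,\ell_2$ of heavy items per clique that stay singletons (all heavy items are interchangeable for pairing, since any cross-clique pair containing a heavy item is automatically valid) and then asks for a perfect matching on everything else; your degree-constrained/flow formulation is an equivalent polynomial-time route, though the assertion that the achievable matching sizes under the light-saturation constraint ``form an interval'' would need a short alternating-path justification rather than being taken for granted.

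Part~2 is where there is a genuine gap. You correctly name \threedmatch and the right role assignment (agents from $Z$, the two cliques from $X$ and $Y$), and you correctly diagnose the separability obstacle: an agent that accepts two pairs also accepts a crossed pair. But you then declare the gadget engineering to be ``the crux'' and leave it unresolved, and that crux is exactly the content of the theorem. The paper's resolution is: start from \threedmatch restricted so that every element occurs in at most three tuples; for each $z_i\in Z$ occurring in $c_i\in\{2,3\}$ tuples create $c_i$ agents, one \emph{per tuple}, together with $c_i-1$ dummy items; agent $z_{ij}$ gives utility $1$ to exactly the two items $t_i^j(x)\in X$ and $t_i^j(y)\in Y$ of \emph{its own} tuple, utility $2$ to the dummies of $z_i$, and $0$ elsewhere; the conflict graph makes $X\cup D$ one clique and $Y$ the other. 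Because each agent has a \emph{unique} positively-valued non-dummy item in each clique, the crossed-pair ambiguity you worry about never arises; the hardness instead comes from distinct agents (different tuples through the same $x$ or $y$) competing for the same item, while the dummy count forces exactly one agent per $z_i$ to be satisfied by genuine tuple items, i.e.\ to select a tuple of the matching. That one-agent-per-occurrence-plus-dummies device is the missing idea. (As a side remark, the paper's own reduction here is inherited from \Cref{thm:np-hardness-results} and is stated with threshold $2$; your reading of the threshold as $3$ follows the theorem statement, but does not by itself yield a construction either.)
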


Finally, we turn our attention to \emph{kernelization} of the problem.
We say that two instances, $(I, k)$ and $(I', k')$, of a parameterized problem $\Pi$ are \emph{equivalent} if $(I, k) \in \Pi$ if and only if $(I', k') \in \Pi$.
A \emph{reduction rule}, for a parameterized problem $\Pi$ is an algorithm that takes an instance $(I, k)$ of $\Pi$ as input and outputs an instance $(I', k')$ of $\Pi$ in time polynomial in $|I|$ and $k$.
If $(I, k)$ and $(I', k')$ are equivalent instances then we say the reduction rule is \emph{safe}.
A parameterized problem $\Pi$ is said to have a {\it kernel} of size $g(k)$ (or $g(k)$-kernel) if there is a polynomial-time algorithm (called a {\em kernelization algorithm}) which takes as an input $(I,k)$, and in time $(|I| + k)^{\calO(1)}$ returns an equivalent instance $(I',k')$ of $\Pi$ such that $|I'| + k' \leq g(k)$.
Here, $g(\cdot)$ is a computable function whose value depends only on $k$.
It is known that a problem $\Pi$ admits an \FPT\ result
if and only if it admits a kernel.
For more details on parameterized complexity,  we refer the reader to the books by Cygan et al.~\cite{ParamAlgorithms15b}.

\begin{restatable}{them}{partialcompletekernel}
    \label{thm:kernel-partial-complete}
    Both \pcffa and \ccffa have the following properties: 
    \begin{enumerate}
        \item have a polynomial-sized kernel of size $\tau \eta n^2$, where $\tau$ denotes the neighborhood diversity of the conflict graph; 
        \item have a polynomial-sized kernel of size $d\eta n^2$ if $d< n$, where $d$ is the maximum degree of the conflict graph; and 
         \item do not have a polynomial-sized kernel when parameterized by ${\sf tw} +n$ unless ${\sf NP}\subseteq {\sf coNP\setminus poly}$, where {\sf tw} be the treewidth of the conflict graph. Moreover, the same result holds even for ${\sf tw} +\eta +n$ on \ccffa.     
    \end{enumerate}\end{restatable}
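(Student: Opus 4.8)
The plan is to obtain both kernels from a single exchange-based reduction rule and the lower bound from an OR-cross-composition. I would start from the observation that in any feasible assignment we may assume each agent receives at most $\eta$ jobs of positive utility: given a bundle of satisfaction at least $\eta$, keep only a shortest prefix of its positive-utility jobs (in decreasing order of $\util_i$) whose utilities already sum to at least $\eta$; since each such job contributes at least $1$, this prefix has at most $\eta$ jobs, and shrinking a bundle preserves independence and disjointness. Consequently at most $n\eta$ jobs are ``useful'' in a solution. For Claim~1 I would compute a neighbourhood-diversity partition of $V(\calH)$ into $\tau$ types, each inducing a clique or an independent set, with all vertices of a type sharing one external neighbourhood. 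The rule keeps, for every type $T_\ell$ and every agent $a_i$, the $n\eta$ jobs of $T_\ell$ with largest $\util_i$, and deletes every job kept by no such pair; this leaves at most $\tau\cdot n\cdot n\eta=\tau\eta n^2$ jobs. Safety for \pcffa is an exchange argument: if a solution assigns a deleted job $x\in T_\ell$ to $a_i$, then $a_i$ has $n\eta$ kept jobs in $T_\ell$ of $\util_i$-value at least $\util_i(x)$, and since at most $n\eta-1$ of the at most $n\eta$ useful jobs of the solution are distinct from $x$, one kept job $y$ is free; as $x$ and $y$ have identical neighbourhoods, replacing $x$ by $y$ keeps every bundle independent and lowers no satisfaction, and iterating removes all deleted jobs.

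For \ccffa the same rule and count apply, with two additions forced by having to place every job: a clique-type with more than $n$ vertices is an immediate \no-instance (each bundle contains at most one job of a clique-type), and within an independent-set type the deleted low-utility jobs, being mutually compatible and sharing an external neighbourhood, can be redistributed among the type-compatible agents without changing which targets are met, so the exchange still goes through. This yields Claim~1. Claim~2 uses the same template with the degree bound replacing types: for each agent $a_i$ I keep its $\Oh(\eta(n+d))$ highest-$\util_i$ jobs. If a solution uses a deleted $x$ for $a_i$, then at most $n\eta-1$ of $a_i$'s kept jobs are used elsewhere and at most $\eta d$ conflict with $a_i$'s remaining (at most $\eta$) jobs, each of degree at most $d$; hence a free, non-conflicting, at-least-as-good job can be swapped in. The hypothesis $d<n$ ensures every clique has at most $d+1\le n$ vertices, so the completion step of \ccffa is always feasible, and the total count is $n\cdot\Oh(\eta(n+d))=\Oh(d\eta n^2)$.

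For the lower bound with parameter $\tw+n$, the result is already implied by (weak) \nph-ness at the fixed values $\tw=0$ (edgeless $\calH$) and $n=2$: splitting a pool of total utility $2\eta$ into two bundles of satisfaction $\eta$ each is exactly \textsc{Partition} for both the partial and the complete version, so under $\mathsf{P}\neq\mathsf{NP}$ no kernel exists at all. The substantive statement is the absence of a polynomial kernel for \ccffa under $\tw+\eta+n$, where the problem is \fpt via dynamic programming over a tree decomposition that tracks a per-agent accumulated utility in $\{0,\dots,\eta\}$. Here I would run an OR-cross-composition from the restriction of \ccffa with $\eta=1$ and $n=3$, which is \nph by Theorem~\ref{thm:np-hardness-results} and whose hardness must stem from conflict graphs of unbounded treewidth. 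Given $t$ equal-size instances, I take the disjoint union of their conflict graphs as $t$ blocks---so the treewidth equals that of a single block and is independent of $t$---and attach all block boundaries to a shared separator gadget of size $\Oh(\log t+\max_j|x_j|)$; this keeps $\tw$, $n$ and $\eta$ bounded by $\mathrm{poly}(\max_j|x_j|+\log t)$. The gadget is to force a feasible complete allocation to commit to exactly one block and solve the corresponding instance, the surplus jobs of the remaining blocks being absorbed by dedicated dump agents, so that the composed instance is a \yes-instance iff some input instance is.

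The main obstacle is precisely this selection gadget. Because the conflict graph constrains jobs only \emph{within} a single agent's bundle and never couples different agents, one cannot disable a block for all agents by an assignment choice made elsewhere; engineering the global ``choose exactly one block'' behaviour while simultaneously keeping the treewidth independent of $t$, keeping $n$ and $\eta$ bounded, and still allowing every job to be placed in the complete version is the delicate part and will dictate the construction. Once it is in place, verifying that the composition is polynomial-time computable and that the parameter is appropriately bounded is routine.
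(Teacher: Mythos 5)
Your two kernelization arguments are, in substance, the paper's own: the paper likewise marks, for every (type, agent) pair, $\eta n$ positive-utility jobs (respectively $d\eta n$ jobs per agent in the bounded-degree case), argues safety by counting that at most $(n-1)\eta$ of the marked jobs can be consumed by the other agents in a minimal solution, and handles \ccffa by observing that a clique of size greater than $n$ forces a \no-instance and that $d<n$ lets every deleted job be reinserted into some conflict-free bundle. Your ``swap in an at-least-as-good unused job of the same type'' variant and the $\Oh(\eta(n+d))$ counting are cosmetically different but equally valid. For the $\tw+n$ lower bound, your observation that both problems are already {\sf NP}-hard at $\tw=0$, $n=2$ via \textsc{Partition} is correct and in fact gives a stronger conclusion (no kernel of any size unless ${\sf P}={\sf NP}$) than the paper's route, which goes through the known no-polynomial-kernel result for \mwis parameterized by vertex cover and an AND-cross-composition for \kcol.

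The genuine gap is the final claim, the absence of a polynomial kernel for \ccffa under $\tw+\eta+n$: you set up an OR-cross-composition and then concede that you cannot construct the block-selection gadget. That difficulty is real --- as you yourself note, the conflict graph constrains only the jobs inside a single bundle, so nothing lets one assignment choice ``switch off'' the other blocks --- but it is also unnecessary, because the natural composition here is an \emph{AND}-composition, not an OR-composition. \ccffa with $\eta=1$, unit utilities and $n=3$ is exactly \threecol (every job must be placed, every bundle must be independent, every colour class must be nonempty), and the disjoint union of $t$ instances of \threecol is $3$-colourable if and only if \emph{all} of them are; the treewidth of the disjoint union is the maximum over the blocks, and $n$ and $\eta$ stay constant. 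By Drucker's theorem, AND-(cross-)compositions rule out polynomial kernels under ${\sf NP}\subseteq{\sf coNP/poly}$ just as OR-compositions do, so no gadget is needed at all. This is precisely the paper's argument (an AND-cross-composition of \kcol into itself), and without it your proof of item~3 for \ccffa is incomplete.
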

However, the kernelization complexity starts to differ between \pcffa and \ccffa when in addition to the threshold value we consider structural parameters such as the maximum degree (when greater than the number of agents), chromatic number, and size of the largest clique.

\begin{restatable}{them}{contrastkernel}
    \label{thm:kernel-contrast}~
    \begin{enumerate}
        
        \item If $d>n$, \pcffa has a polynomial-sized kernel of size $d\eta n^2$, where $d$ denotes the maximum degree of the conflict graph.  However, \ccffa becomes \nph even when $d+ \eta +n$ is a constant.
        \item \pcffa admits polynomial-sized kernel of size  $\chi \eta n^2$ 
        and $\calO(n(r+\eta n)^r)$, where $\chi$ is the chromatic number, and $r$ is the size of the maximum clique in the conflict graph. Contrastingly, \ccffa remains \nph for even constant values of $\chi +r+\eta +n$.
    \end{enumerate}\end{restatable}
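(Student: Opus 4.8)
The two parts share one conceptual dichotomy: the partial version \pcffa never needs the conflict graph to be ``coverable'', so each agent can independently fish for a satisfying bundle, whereas the complete version \ccffa must partition \emph{all} of \calH into $n$ independent sets, which is exactly a proper $n$-colouring and hence becomes hard the moment the colouring constraint is binding. I would prove every kernelization (upper-bound) statement for \pcffa by one uniform per-agent marking scheme, extending the $d<n$ case of \Cref{thm:kernel-partial-complete}, and every \nph statement for \ccffa by a single reduction from \threecol on bounded-degree graphs.

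\textbf{Kernels for \pcffa.} First discard every job whose utility is $0$ for all agents (useless in the partial setting), and recall each utility lies in $\{0,\dots,\eta\}$. I then cap, for each agent, the number of \emph{useful} jobs (utility $\ge 1$) it retains, via the single reduction rule: \emph{if agent $a_i$ owns more than $N_i$ useful jobs, delete any one of them}. The three target sizes come from three structural guarantees on the surviving useful-job set $U_i$ of $a_i$. If $\Delta(\calH)=d$, a greedy argument extracts an independent set of size at least $|U_i|/(d+1)$ from $U_i$, so $N_i=\Oh(dn\eta)$. From a fixed proper $\chi$-colouring, the largest colour class inside $U_i$ is independent of size at least $|U_i|/\chi$, so $N_i=\Oh(\chi n\eta)$. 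If \calH is $K_{r+1}$-free, Ramsey's theorem (via $R(r+1,(n+1)\eta)\le\binom{r+(n+1)\eta-1}{r}$) guarantees an independent set of size $(n+1)\eta$ once $|U_i|=\Oh((r+\eta n)^r)$, so $N_i=\Oh((r+\eta n)^r)$. Summing $N_i$ over the $n$ agents yields the three claimed bounds $d\eta n^2$, $\chi\eta n^2$, and $\Oh(n(r+\eta n)^r)$; note that the accounting is per agent, not per agent-value pair, which is precisely what keeps the exponent of $\eta$ at one.

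\textbf{Safeness lemma (the crux).} The deletion rule is safe. The backward direction is immediate for the partial version, since a solution of the reduced instance uses only surviving jobs. For the forward direction I take a \emph{minimal} solution $\phi$, so $\sum_i|\phi^{-1}(a_i)|\le n\eta$, that uses the deleted job $x$ in $a_i$'s bundle. Freezing every other bundle, I re-choose $a_i$'s bundle inside $U_i$: the relevant structural guarantee produces an independent set $I\subseteq U_i$ of more than $(n+1)\eta$ useful jobs, and deleting from $I$ the job $x$ together with the at most $n\eta$ jobs used by the other agents still leaves at least $\eta$ pairwise-independent useful jobs, whose utilities sum to at least $\eta$. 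This rerouting of a single agent against a frozen remainder is the main obstacle, as the replacement bundle must simultaneously stay independent internally and disjoint from all frozen bundles; phrasing the rule as one deletion at a time (rather than a simultaneous global re-routing) is what makes this clean.

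\textbf{Hardness for \ccffa.} Reduce from \threecol restricted to graphs of maximum degree at most $4$, which is \npc. Given such a graph $G$, build the conflict graph $\calH=G\sqcup K_3$ with a disjoint triangle, set $n=3$, $\eta=1$, and let every utility equal $1$. The three mutually adjacent triangle vertices must go to three distinct agents, so each agent receives a job of utility $1\ge\eta$; the remaining vertices can be completely assigned to the three agents with independent bundles if and only if $G$ is properly $3$-colourable (the triangle being disconnected from $G$, a bundle is independent iff its $G$-part is). Hence \ccffa is \nph, and in the construction $d\le 4$, $\chi(\calH)\le 5$, $r(\calH)\le 5$, $\eta=1$, and $n=3$ are all constant. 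Since bounded degree already forces bounded chromatic and clique numbers, this single reduction establishes the \nph claims both for constant $d+\eta+n$ and for constant $\chi+r+\eta+n$, completing the contrast with the tractable partial version.
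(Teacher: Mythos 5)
Your hardness reduction and your three per-agent size thresholds (greedy independent set for bounded degree, largest colour class for bounded $\chi$, Ramsey for bounded clique number) are exactly the paper's ingredients, and the arithmetic in your rerouting argument is sound. But your reduction rule, as formulated, is not safe, and your safeness lemma does not close the gap. You delete a job $x$ because \emph{agent $a_i$} has too many useful jobs, and in the forward direction you only treat the case where the minimal solution places $x$ in $a_i$'s bundle. Nothing forces that: $x$ may have positive utility for several agents, and a minimal solution may assign $x$ to some $a_j\neq a_i$ whose own pool of useful jobs is tiny. Concretely, take $\eta$ large, let $x$ be the unique job with $\util_j(x)=\eta$ while $a_j$ values nothing else, and let $a_i$ value $x$ together with thousands of other jobs; your rule is permitted to delete $x$ on account of $a_i$'s surplus, turning a yes-instance into a no-instance. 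The structural guarantee you invoke (a large independent set inside $U_i$) lives in the wrong agent's universe when $x\in\phi^{-1}(a_j)$.

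The fix is the marking scheme the paper actually uses: greedily fill a bag $B_j$ of at most $N$ useful jobs \emph{for every agent} $a_j$, and delete a job only if it ends up in no bag, i.e., only if every agent for which it has positive utility already holds a full reservoir. Then, whichever agent's bundle contains the deleted job in a minimal solution, that agent is guaranteed a large retained independent set of useful jobs and your rerouting argument applies verbatim to it. This changes nothing in your size accounting (still $N$ per agent, summed over $n$ agents), so the bounds $d\eta n^2$, $\chi\eta n^2$ and $\Oh(n(r+\eta n)^r)$ survive. The \nph part is fine: your disjoint-triangle gadget on top of \threecol for maximum degree four is a harmless variant of the paper's reduction from \threecol on $4$-regular planar graphs, and it correctly keeps $d$, $\chi$, $r$, $\eta$ and $n$ constant simultaneously.
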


\section{Proof of Theorem~\ref{thm:np-hardness-results} }

\nphardness*

\begin{proof}~

\begin{enumerate}
\item Results regarding \ccffa.
\end{enumerate}
Consider the case when $n = 1$.
Note that in this case, an instance is a (trivial) \yes-instance if and only
if $\calH$ is an edgeless graph and the sum of the utilities 
of all the items is at least $\eta$.
Both of these checks can be completed in polynomial time.



Consider the case when $n \ge 2$.
We present a reduction from the \textsc{Partition} problem ({Problem~[SP12]} in \cite{DBLP:books/fm/GareyJ79}).
Recall that in this problem, an input is a finite set $P = \{p_1, p_2, \dots, 
p_{|P|}\}$ of integers and the objective is to 
find a subset $P' \subseteq P$ such that 
$\sum_{p_i \in P'} p_i = \sum_{p_{j} \in P \setminus P'} p_j$.
The problem remains \nph\ even when length of encoding of $\max_{p_i \in P} \{p_i\}$ is polynomial in $|P|$.
The reduction constructs an instance of \sbccffa\ by
adding two agents $a_1, a_2$ to $\calA$,
an item $v_i$ to $\calI$ corresponding to every element in set $P$,
and defining $\eta = (\sum_{p_i \in P}p_i)/2$ and
$\util_1(v_i) = \util_2(v_i) = p_i$.
The reduction returns $(\Co{A} = \{a_1, a_2\}, \Co{I}, \{\util_1, \util_2\},
\Co{H}, \eta$) as the reduced instance where $\calH$ is an edgeless graph.
The correctness of the reduction follows easily as the partition of 
vertices corresponds to the assignment of items to two agents.
It is easy to extend the reduction for any value of $n \ge 2$ by 
adding $(n - 2)$ dummy items whose valuation is $\eta$ for any every 
agent.
Finally, the utility functions ensure that all the items
needs to be assigned.

For the case when $n \ge 3$, consider a reduction from the 
\textsc{$3$-Coloring} problem (Problem~[GT4]) in \cite{DBLP:books/fm/GareyJ79}).
In this problem, given graph $G$, the objective is to determine whether
there exists a coloring $\phi: V(G) \mapsto \{1, 2, 3\}$
such that for any edge $uv \in E(G)$, $\phi(u) \neq \phi(v)$.
Consider the reduction that, given an instance of 
the \textsc{$3$-Coloring} problem, construct an instance
of \textsc{ccffa} by adding three agents,
say $a_1, a_2, a_3$, to $\calA$, adding an item with respect
to each vertex in $G$ to $\calI$, and defining $\util_1(v) = \util_2(v) = \util_3(v) = 1$ for all $v \in V(G)$, $\eta = 1$, and $\calH = G$.
It is easy to verify that a proper $3$-coloring $\phi$ of $V(G)$
corresponds to a valid assignment of items to agents
and vice-versa.
A similar reduction starting from $\eta$-$\textsc{Coloring}$
implies that the problem is \nph for any $\eta \ge 3$.



\begin{enumerate}[resume]
\item Results regarding \pcffa.
\end{enumerate}
Consider the \textsc{Independent Set} problem.
The input consists of a graph $G$ and an integer $k$,
and the objective is to find a set of vertices $S$ of size $k$
such that for any two vertices $S$ are not adjacent to each other.
Consider a reduction that, given an instance of \textsc{Independent Set},
constructs an instance of 
\pcffa with one agent 
$a_1 \in \calA$, an item in $\calI$ 
corresponding to every vertex in $V(G)$, $\util_1(v) = 1$ for all $v \in V(G)$,
$\eta = k$, and $\calH = G$.
It is easy to verify that an independent set $S$ of size
$k$ corresponds to a partial assignment of $k$ items 
to the agent.
This implies the desired hardness result. 

\begin{enumerate}[resume]
\item Results regarding \sbccffa\ and \sbpcffa.
\end{enumerate}
We prove that both the problems admit algorithms running in time
$m^{\calO(n s)} \cdot poly(m, \eta)$, 
and hence, it is polynomial-time solvable when $n$ and $s$
are constants.
The algorithm enumerates all the subset of items of size at most
$s$ in time $m^{\calO(s)}$.
These sets will act as a bundle that will be assigned to agents.
It discards the bundles that are not an independent set
in conflict graph $\calH$.
For the $n$ agents, there are $(m^{\calO(s)})^n$ many possible
assignments of these bundles.
The algorithm iterates through all these assignments.
It discards the assignments in which the evaluation of bundle for that 
a particular agent is less than $\eta$ or 
the intersection of two bundles assigned to different agents
is non-empty.
The correctness of the algorithm follows from the fact that
it exhaustively searches all the valid assignments and
the running time follows from the description.

Consider the case when $s = 1$. 
As the valuation of any agent has non-negative integer values,
it suffices to assign one appropriate job to each agent.
Towards this, we consider an auxiliary bipartite graph $\calB$
across agents and jobs and edges representing the feasibility
of useful assignments.
Formally, construct a bipartite graph $\calB$ with a bipartition
$\calA$ and $\calI$. 
For an agent $a_i \in \calA$ and a job $v_j \in \calI$,
we add an edge $(a_i, v_j)$ if and only if $\phi_i(v_j) \le \eta$.
It is easy to see that a matching saturating $\calA$
corresponds to the desired partial assignment of jobs.
For the \sbccffa version, we need a perfect matching, 
whereas for \sbpcffa, we need a matching saturating $\calA$.
As there is a polynomial time algorithm 
to compute the matching~\cite{DBLP:journals/siamcomp/HopcroftK73},  
both problems admit a polynomial-time algorithm when $s = 1$.
The hardness results for the case when $n \ge 1$ follow
from the corresponding case for \pcffa.
Also, the hardness result for the later case follows 
from the corresponding case when $\calH$ is an edgeless graph.

We now consider the case when $\eta \ge 2$ and $s \ge 2$.
We present a reduction from the \threedmatch to the \sbccffa with $\eta=2$
and $s = 2$.
Later, we mention how to extend this result for any $\eta \ge 2$
and $s \ge 2$.
We note that the utility functions and the target
value are defined in such a way that even for an instance
of \sbpcffa, one needs to assign all items to agents.
Hence, the same reduction holds even for \sbpcffa.

An input  instance of \threedmatch\ consists of three finite 
sets $X,Y$ and $Z$ with $T\subseteq X \times Y\times Z$, where, 
for each $(x,y,z) \in T$, $x\in X, y \in Y, z\in Z$.
And, the objective is to determine whether there exists a 
subset $M\subseteq T$ with a cardinality of $\ell$, 
where for any two distinct tuples $(x_1,y_1,z_1), (x_2,y_2,z_2) \in M$,
we have $x_1\ne x_2$, $y_1\ne y_2$, and $z_1\ne z_2$. 
Note that $|X|=|Y|=|Z|$. 
Finding a perfect 3-dimensional matching was one of the first 
problems that was proved to be \nph ({Problem~[SP1]} in \cite{DBLP:books/fm/GareyJ79}). 
We consider a restricted version of the problem where
every element appears in at most three tuples in $T$.
(See remark after Problem~[SP1] in \cite{DBLP:books/fm/GareyJ79}.)
Moreover, we rename the elements in $X \cup Y$ as follows:
Consider a vertex $z_i \in Z$ and $t^1_i, t^2_i$ and $t^3_i$
be three tuples that include $z_i$.
Note that all three tuples may not exist.
If $t^j_i = (x_{i'}, y_{i''}, z_i)$ for some $i \in [|Z|]$ and $j \in [3]$, 
then we rename vertex $x_{i'}$ as $t^{j}_i(x)$ and 
$y_{i''}$ as $t^{j}_i(y)$. 
Note that if $z_i \in Z$ appears in no tuple, then we are dealing
with a trivial \no-instance.
And, if $z_i$ appears in a unique tuple, then we need to include
that tuple in any solution, thus simplifying the instance.
Hence, it is safe to assume that every vertex appears in 
two or three tuples in $T$.

The reduction constructs an equivalent instance 
of the \sbccffa problem as follows.
See \Cref{fig:hardness} for an illustration.
\begin{itemize}
\item For every vertex $z_i\in Z$, that appears two tuples,
it adds two unique agents, say $z_{i1}, z_{i2}$, to the collection
of agents $\calA$.
Similarly, for every vertex $z_i\in Z$, that appears in three tuples,
it adds three unique agents, say $z_{i1}, z_{i2}, z_{i3}$, to the collection
of agents $\calA$. 
For notational convenience, we denote the agents in $\calA$
as $z_{ij}$ for $i \in [|Z|]$ and $j \in [3]$. 

\item For every $u\in X \uplus Y$, reduction introduces an item.
Also, for every vertex $z_i\in Z$ that appears in two tuples 
(respectively in three tuples), 
the reduction adds one dummy item $d_{i1}$ (respectively 
two dummy items $d_{i1}$ and 
$d_{i2}$). 
Let $D$ be the collection for all the items introduced in this step.
Formally, $\I=  X\uplus Y\uplus D$.
\item 
For every agent $z_{ij} \in \calA$, 
define the utility functions $\util_{ij}$ as follows:
$$
\util_{ij}(v)=
\begin{cases}
        1,~ v \in \{t_i^j(x),t_i^j(y)\}\text{ for every tuple } 
        (t_i^j(x),t_i^j(y),z_i)\in T,\\
        2,~v \in \{d_{i1},d_{i2}\} \text{ for each } z_i\in Z,\\
        0, \text{otherwise.}
\end{cases}
$$
\end{itemize}
The reduction returns $(\Co{A}, \Co{I}, \{ \util_{ij}| z_{ij} \in \calA\}, \Co{H}, s=2, \eta=2$) as the reduced instance where $\calH$ is an edgeless graph.

\begin{figure}[t]
    \centering
    \includegraphics[scale=0.8]{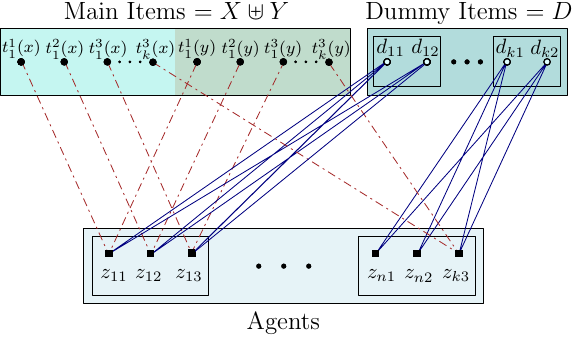}
    \caption{Illustration of Reduction for 
    Theorem~\ref{thm:np-hardness-results}$.3.b.(ii)$, i.e., when $\eta \ge 2$ and $s\ge 2$.}
    \label{fig:hardness}
\end{figure}

We now prove that both these instances are equivalent.
In the forward direction, 
if there is a $3$-dimensional perfect matching $M$ of size $|Z|$ 
with the tuples $\{t_1^1,t_2^1,\ldots, t_n^1\}\subseteq T$, 
we can construct an equivalent satisfying assignment of items to the 
agents in the \completefairdiv problem as follows:
For each tuple $t_i^1\in M$, agent $z_{i1}$ meets the utility 
requirement of two by acquiring the other two items 
(from $X \uplus Y$ in the tuple $t_i^1$), 
each endowed with a weight value of one. 
And agents $z_{i2}$ and $z_{i3}$ (if it exists) can 
acquire $d_{i1}$ and $d_{i2}$ (if it is exists), 
respectively, each with a utility value of two. 
Thus, a satisfying assignment is attained. 

In the backward direction, 
suppose a 
satisfying assignment for the \sbccffa exists, 
ensuring each agent receives a utility value of two. 
Consider agents, $z_{i1}$, $z_{i2}$ corresponding to vertex $z_i \in Z$
that appears in two tuples.
Note that the items from set $D$ can satisfy one of these agents,
the other agent requirements need to be satisfied by elements in 
$X \uplus Y$.
A similar argument holds for agent corresponding to $z_i \in Z$
that appears on three tuples.
One of the three agents' requirements corresponding to 
this vertex needs to be satisfied by items in $X \uplus Y$.
Without loss of generality, assume that $z_{i1}$s are the 
remaining agents that are not assigned any items from $D$.
But then these $n$ agents must meet their utility requirements 
using the items from $X\uplus Y$ where each agent receives 
a disjoint set from the other, consisting of two items. 
Furthermore for any agent $z_{i1}$, the two assigned items 
must be $t_i^1(x)$ and $t_i^1(y)$, the only two elements capable 
of providing positive utility value from $X\uplus Y$. 
And hence the existence of a $|Z|$ size matching 
$\{(t_i^1(x),t_i^1(y),z_i)\}_{i\in [n]}$ follows.     

We can extend the above hardness result for
any value of $\eta \ge 2$ and $s \ge 2$. 
We can introduce $|\calA|$ many isolated elements 
each with value $\eta - 2$ for every agent.
Also, note that the threshold constraints
and the number of total items ensures that 
no agent can get more than $2$ items.
Hence, the hardness holds for any value of $s \ge 2$.
Finally, once again, because of the constraints mentioned
in the previous sentence, the reduction holds
for any variation of \sbccffa.
\end{proof}



\section{Proof of Theorem~\ref{thm:fpt-nr-of-vertices}}

%
\newcommand{\polyn}[2]{\ensuremath{p^{#1}_{#2}}\xspace}
\fptnrofvertices*
\begin{proof}
The algorithm uses the technique of polynomial multiplication and fast Fourier transformation. The idea is as follows. For every agent $a_i\in \Co{A}$, where $i\in [n]$, we first construct a family of bundles that can be assigned to the agent $a_i$ in an optimal solution. Let us denote this family by $\Co{F}_i$. Then, our goal is to find $n$ disjoint bundles, one from each set $\Co{F}_i$. To find these disjoint sets efficiently, we use the technique of polynomial multiplication. 

Before we discuss our algorithm, we have to introduce some notations and terminologies.  Let $\I$ be a set of size $m$, then we can associate \I with $[m]$.  The {\em characteristic vector} of a subset $S\subseteq [m]$, denoted by $\chi(S)$, is an $m$-length vector whose $i^{\text {th}}$ bit is $1$ if and only if  $i \in S$.
Two binary strings of length $m$ are said to be disjoint if, 
for each $i\in [m]$, the $i^{th}$ bits in the two strings are different.  The {\em Hamming weight} of a binary string $S$, denoted by $H(S)$, is defined 
as the number of $1$s in the string $S$. 
A monomial $y^i$ is said to have Hamming weight $w$, 
if the degree $i$ when represented as a binary string, has Hamming weight $w$.
We begin with the following facts observed in Cygan et. al~\cite{DBLP:journals/tcs/CyganP10}:
Let $S_1$ and $S_2$ be two binary strings of the same length.
Let $S=S_1+S_2$. If $H(S)=H(S_1)+H(S_2)$, then $S_1$ and $S_2$ are disjoint binary vectors. 
Also, let $S=S_{1} \cup S_{2}$, where $S_1$ and $S_2$ are two disjoint subsets of $[m]$. 
Then, $\chi(S)=\chi(S_1)+\chi(S_2)$ and $H(\chi(S))=H(\chi(S_1))+H(\chi(S_2))=|S_1|+|S_2|$.
These two facts together yield the following. 

\begin{observation}\label{obs:HW-disjointness}
Subsets $S_{1}, S_{2} \sse \I$ are disjoint if and only if Hamming weight of the monomial $x^{\chi(S_1)+\chi(S_2)}$ is $|S_{1}|+|S_{2}|$. 
\end{observation}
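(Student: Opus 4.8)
The plan is to read the statement as a fact about binary addition and reduce it to the two carry-free facts recorded just above. By the definition of the Hamming weight of a monomial, the Hamming weight of $x^{\chi(S_1)+\chi(S_2)}$ is exactly $H(\chi(S_1)+\chi(S_2))$, where $\chi(S_1)+\chi(S_2)$ denotes the integer obtained by adding the two binary representations. Since the characteristic vector satisfies $H(\chi(S_j)) = |S_j|$ for $j \in \{1,2\}$, the claimed equivalence is precisely
\[
H(\chi(S_1)+\chi(S_2)) = H(\chi(S_1)) + H(\chi(S_2)) \iff S_1 \cap S_2 = \emptyset .
\]
I would prove the two implications separately, each being an immediate instance of one of the stated facts.

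For the direction $S_1 \cap S_2 = \emptyset \Rightarrow$ equality, I would invoke the second fact: when $S_1$ and $S_2$ are disjoint subsets of $[m]$, their union $S = S_1 \cup S_2$ satisfies $\chi(S) = \chi(S_1)+\chi(S_2)$, and therefore $H(\chi(S_1)+\chi(S_2)) = H(\chi(S)) = |S| = |S_1| + |S_2|$. Concretely, disjoint supports mean that the binary addition $\chi(S_1)+\chi(S_2)$ produces no carry in any position, so the $1$-bits of the sum are exactly the union of the $1$-bits of the two summands. For the converse I would apply the first fact (equivalently, its contrapositive): if $H(\chi(S_1)+\chi(S_2)) = H(\chi(S_1))+H(\chi(S_2))$, then $\chi(S_1)$ and $\chi(S_2)$ are disjoint binary vectors, i.e.\ no coordinate is set to $1$ in both, which is exactly $S_1 \cap S_2 = \emptyset$.

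The content binding both directions is the carry analysis: any position in which both $\chi(S_1)$ and $\chi(S_2)$ carry a $1$ forces a carry, and the base-$2$ digit-sum identity states that $H(a)+H(b)-H(a+b)$ equals the (nonnegative) number of carries produced while computing $a+b$. Thus equality of Hamming weights holds if and only if there are no carries, which happens if and only if the supports are disjoint. I expect the only delicate point to be making precise that an overlap never leaves the Hamming weight unchanged but strictly decreases it; this is settled by the carry/digit-sum identity above (a base-$2$ special case of Kummer's theorem) and is exactly what the quoted facts encapsulate. I would also take care to fix the intended reading of \emph{disjoint binary vectors} as support-disjoint (no coordinate equal to $1$ in both strings), since under that reading both facts, and hence the observation, hold verbatim.
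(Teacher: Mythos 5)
Your proposal is correct and follows essentially the same route as the paper, which derives the observation by combining exactly the two stated facts from Cygan et al.\ --- the disjoint-union/characteristic-vector identity for the forward direction and the Hamming-weight-additivity fact (in contrapositive form) for the converse. Your additional carry/digit-sum identity and your support-disjoint reading of ``disjoint binary vectors'' merely make explicit what the paper leaves implicit (its stated definition of disjoint strings, requiring bits to \emph{differ} in every position, is indeed imprecise), so no substantive difference remains.
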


The {\em Hamming projection} of a polynomial $p(y)$ to $h$, denoted by $H_{h}(p(y))$, is the sum of all the monomials of $p(y)$ which have Hamming weight $h$. We define the {\em representative polynomial} of $p(y)$, denoted by $\Co{R}(p(y))$, as the sum of all the monomials that have non-zero coefficient in $p(y)$ but have coefficient $1$ in $\Co{R}(p(y))$, i.e., it ignores the actual coefficients and only remembers whether the coefficient is non-zero. We say that a polynomial $p(y)$ {\it contains a monomial} $y^i$ if the coefficient of $y^{i}$ is non-zero. The zero polynomial is the one in which the coefficient of each monomial is $0$.

Now, we are ready to discuss our algorithm. We focus the algorithm on \pcffa and highlight the required changes for the other two variants. 

\subparagraph*{Algorithm.}
Let $\X{J}=(\Co{A}, \I, \{\util_{i}\}_{i\in [n]}, \Co{H}, \eta)$ denote an instance of \pcffa. 
We start by defining a set family indexed by the agents. 
For an agent $a_i\in \Co{A}$, let $\Co{F}_{i}$ contain each of subsets of \Co{I} that can be {\it feasibly} allocated to $a_i$ as a bundle. Specifically, a set $S \sse \Co{I}$ is in $\Co{F}_{i}$ if $S$ is an independent set in \Co{H} and the utility $\sum_{x\in S} \util_{i}(x) \geq \eta$ (subset size is at most $s$ for the size-bounded variants). We define the {\it round} \hide{of a polynomial} inductively as follows. 

For round $1$ and a positive integer $s^\star$, we define a polynomial
\[\polyn{1}{s^\star}(y) = \sum_{S\in \Co{F}_{1},\\ |S|=s^\star} y^{\chi(S)} \]

For round $i \in [n] \setminus \{1\}$, and a positive integer $s^\star$, we define a polynomial by using the $\Co{R}(\cdot)$ operator 

\[ \polyn{i}{s^\star}(y) = \sum_{\substack{S \in \Co{F}_{i} \\ s'=s^\star-|S|} } \Co{R}\left(H_{s^\star}\left(\polyn{i-1}{s'}(y) \times y^{\chi(S)}\right) \right)
\]

The algorithm returns ``yes'' if for any positive integer $s^\star \in \mathbb{Z}_{\geq 0}$ ($s^\star=m$ for complete version), the polynomial $\polyn{n}{s^\star}(y)$ is non-zero.

In fact, any non-zero monomial in the polynomial ``represents'' a solution for the instance \X{J} such that we can find the bundle to assign to each agent $a_i \in \Co{A}$ by backtracking the process all the way to round 1. 


We now describe how the algorithm computes a solution, if one exists.
We assume that for some positive integer $s^\star$, $\polyn{n}{s^\star}(y)$ 
is a non-zero polynomial. Thus, it contains a non-zero monomial of the form $\polyn{n-1}{s'}(y) \times y^{\chi(S)}$, where $S\in \Co{F}_{n}$. Note that $\chi(S)$ describes the bundle assigned to agent $a_n$, the set $S$.  Since the monomial $\polyn{n-1}{s'}(y) \times y^{\chi(S)}$ exists in the polynomial $\polyn{n}{s^\star}(y)$ after applying $H_{s^\star}(\cdot)$ function, it must be that $\polyn{n-1}{s'}(y) = y^{\chi(S')}$ for some set $S' \sse \Co{I}$ such that $S' \cap S = \emptyset$. By recursively applying the same argument to the polynomial $\polyn{n-1}{s'}(y)$, we can obtain the bundles that are allocated to  the  agents $a_{n-1},\ldots,a_1$. 

\subparagraph*{Correctness.}
We prove that the above algorithm returns ``yes'' if and only if \X{J} is a \yes-instance of \pcffa.
Suppose that \X{J} is a \yes-instance of \pcffa. Then, there is an {\it assignment}, i.e., an injective function $\phi$ that maps \items to agents. 
For each agent $a_i\in \Co{A}$, where $i\in [n]$, we define $S_{i} = \phi^{-1}(i)$. We begin with the following claim that enables us to conclude that the polynomial $\polyn{n}{s^\star}(y)$, where $s^\star=\sum_{i\in [n]}|S_i|$, contains the monomial $y^{\sum_{i\in [n]}\chi(S_i)}$. 

\begin{claim}
\label{clm:fft-fwd}
For each $j\in [n]$, the polynomial $\polyn{j}{s^\star}(y)$, where $s^\star=\sum_{i\in [j]}|S_i|$, contains the monomial $y^{\sum_{i\in [j]}\chi(S_i)}$. 
\end{claim}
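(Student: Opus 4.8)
The plan is to prove \Cref{clm:fft-fwd} by induction on $j$, the engine of the argument being the pairwise disjointness of the bundles $S_1, \dots, S_n$ (which holds because $\phi$ is a function, so the preimages $S_i = \phi^{-1}(i)$ are pairwise disjoint) combined with \Cref{obs:HW-disjointness}.

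For the base case $j = 1$, since $\phi$ is a valid assignment, $S_1$ is an independent set in $\Co{H}$ with $\sum_{x \in S_1}\util_1(x) \geq \eta$, so $S_1 \in \Co{F}_1$. Taking $s^\star = |S_1|$ in the defining sum of $\polyn{1}{s^\star}(y)$ shows that the monomial $y^{\chi(S_1)}$ appears, which is exactly $y^{\sum_{i \in [1]}\chi(S_i)}$. For the inductive step, I would assume the claim for $j-1$, namely that $\polyn{j-1}{s'}(y)$ with $s' = \sum_{i \in [j-1]}|S_i|$ contains the monomial $y^{\sum_{i \in [j-1]}\chi(S_i)}$. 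Because the $S_i$ are pairwise disjoint, $\sum_{i \in [j-1]}\chi(S_i) = \chi\bigl(\bigcup_{i \in [j-1]} S_i\bigr)$ is a genuine characteristic vector, of Hamming weight $s'$. Next I would isolate the term of $\polyn{j}{s^\star}(y)$ corresponding to the choice $S = S_j \in \Co{F}_j$ (feasible for the same reason as in the base case), noting that for this term the inner summation index is $s^\star - |S_j| = s'$. Within this term, the product $\polyn{j-1}{s'}(y) \times y^{\chi(S_j)}$ contains the monomial $y^{\sum_{i \in [j-1]}\chi(S_i) + \chi(S_j)} = y^{\sum_{i \in [j]}\chi(S_i)}$ by the inductive hypothesis.

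The crux of the argument, and the step I expect to require the most care, is showing that this monomial survives the Hamming projection $H_{s^\star}(\cdot)$. Here I would invoke disjointness once more: since $\bigcup_{i \in [j-1]} S_i$ and $S_j$ are disjoint, \Cref{obs:HW-disjointness} gives that the Hamming weight of $y^{\sum_{i \in [j]}\chi(S_i)}$ equals $s' + |S_j| = s^\star$. Thus the monomial is retained by $H_{s^\star}$, and applying $\Co{R}(\cdot)$ only normalises its coefficient to $1$, leaving it present. Finally, I would argue that summing over all $S \in \Co{F}_j$ cannot destroy this monomial: each summand $\Co{R}(H_{s^\star}(\cdot))$ has non-negative coefficients, so there is no cancellation, and the $S = S_j$ term alone guarantees a non-zero coefficient for $y^{\sum_{i \in [j]}\chi(S_i)}$. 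This completes the induction, and instantiating $j = n$ yields the statement of the claim. The only genuinely delicate point throughout is keeping track of the bookkeeping between set unions, sums of characteristic vectors, and Hamming weights; disjointness is what makes all three compatible, which is why it is invoked at every stage.
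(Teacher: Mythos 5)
Your proposal is correct and follows essentially the same inductive argument as the paper: feasibility of each $S_j$ places it in $\Co{F}_j$, and pairwise disjointness together with Observation~\ref{obs:HW-disjointness} ensures the product monomial has Hamming weight $s^\star$ and thus survives the projection. The extra care you take — explicitly checking survival under $H_{s^\star}(\cdot)$ and ruling out cancellation in the outer sum via non-negativity of coefficients — fills in details the paper leaves implicit, but does not change the route.
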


\noindent \emph{Proof:}
The proof is by induction on $j$.
Consider the case when $j=1$. 
We first note that $S_1$ is in the family $\Co{F}_1$ 
as it is a feasible bundle for the agent $1$. 
Thus, due to the construction of the polynomial $\polyn{1}{s^\star}(y)$, 
we know that $\polyn{1}{|S_1|}(y)$ contains the monomial $y^{\chi(S_1)}$. 

Now, consider the induction step.
Suppose that the claim is true for $j= j'-1$. We next prove it for $j=j'$. To construct the polynomial $\polyn{j'}{s^\star}(y)$, where  $s^\star=\sum_{i\in [j']}|S_i|$, we consider the multiplication of polynomial $\polyn{j'-1}{s'}(y)$, where $s'=\sum_{i\in [j'-1]}|S_i|$, and $y^{\chi(S_{j'})}$. Due to the inductive hypothesis, $\polyn{j'-1}{s'}(y)$, where $s'=\sum_{i\in [j'-1]}|S_i|$, contains the monomial $y^{\sum_{i\in [j'-1]}\chi(S_i)}$. Note that $S_{j'}$ is in the family $\Co{F}_{j'}$ as it is a feasible bundle for the agent $j'$. Since $S_{j'}$ is disjoint from $S_1\cup \ldots \cup S_{j'-1}$, due to \Cref{obs:HW-disjointness}, we can infer that $\polyn{j'}{s^\star}(y)$, where  $s^\star=\sum_{i\in [j']}|S_i|$, has the monomial $y^{\sum_{i\in [j']}\chi(S_i)}$. \hfill $\diamond$

Due to Claim~\ref{clm:fft-fwd}, we can conclude that $\polyn{n}{s^\star}(y)$, where $s^\star=\sum_{i\in [n]}|S_i|$, contains the monomial $y^{\sum_{i\in [n]}\chi(S_i)}$. For the other direction, suppose that the algorithm returns ``yes''. Then, for some positive integer $s^\star$, $\polyn{n}{s^\star}(y)$ is a non-zero polynomial. We need to show that there exists pairwise disjoint  sets $S_1,\ldots,S_n$ such that $S_i \in \Co{F}_i$, where $i\in [n]$. This will give us an assignment function $\phi$, where for all $x\in \Co{I}$, $\phi(x)=i$, if $x\in S_i$, where $i\in [n]$.  Since each $S \in \Co{F}_i$, where $i\in [n]$, is an independent set and $\sum_{x\in S}\util_{i}(x) \geq \eta$, $\phi$ is a feasible assignment. We next prove the following claim that enables us to conclude the existence of pairwise disjoint sets. 

\begin{claim}\label{clm:fft-reverse}
For each $j\in [n]$, if the polynomial $\polyn{j}{s^\star}(y)$ is non-zero for some $s^\star \in [m]$, then there exists $j$ pairwise disjoint  sets $S_1,\ldots,S_j$ such that $S_i \in \Co{F}_i$, where $i\in [j]$.
\end{claim}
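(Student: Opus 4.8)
The plan is to prove the claim by induction on $j$, but the hypothesis as literally stated—mere \emph{existence} of $j$ pairwise disjoint feasible sets—is too weak to push the induction through, since in the inductive step I must know that the \emph{particular} lower-round monomial feeding into round $j$ is itself the characteristic vector of such a disjoint union. I would therefore strengthen the statement to the following invariant: for every $j \in [n]$ and every $s^\star$, each non-zero monomial $y^{e}$ of $\polyn{j}{s^\star}(y)$ satisfies $e = \chi(T)$ for some set $T = S_1 \uplus \cdots \uplus S_j$ that is a disjoint union of feasible bundles $S_i \in \Co{F}_i$, with $H(y^e) = |T| = s^\star$. Picking any non-zero monomial of a non-zero $\polyn{j}{s^\star}(y)$ then yields the desired sets, so this strengthened invariant implies the claim.

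For the base case $j = 1$, the definition of $\polyn{1}{s^\star}(y)$ immediately gives that every non-zero monomial is $y^{\chi(S)}$ for some $S \in \Co{F}_1$ with $|S| = s^\star$, so the invariant holds with $T = S_1 = S$.

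For the inductive step I would first record that the representative-polynomial operator $\Co{R}(\cdot)$ only resets non-zero coefficients to $1$ and hence does not change which monomials are present; consequently a monomial $y^e$ lies in $\polyn{j}{s^\star}(y)$ exactly when it survives in $H_{s^\star}\!\left(\polyn{j-1}{s'}(y) \times y^{\chi(S)}\right)$ for some $S \in \Co{F}_j$ with $s' = s^\star - |S|$. Such a surviving monomial has Hamming weight $s^\star$ and is of the form $e = e' + \chi(S)$, where $y^{e'}$ is a non-zero monomial of $\polyn{j-1}{s'}(y)$. Applying the inductive hypothesis to $y^{e'}$ gives $e' = \chi(T')$ with $T' = S_1 \uplus \cdots \uplus S_{j-1}$ a disjoint union of feasible bundles and $|T'| = s'$. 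The crucial bookkeeping is then $|T'| + |S| = s' + |S| = s^\star = H(y^e)$, so the Hamming weight of $y^{\chi(T') + \chi(S)}$ equals $|T'| + |S|$; by \Cref{obs:HW-disjointness} this forces $T'$ and $S$ to be disjoint. Hence $e = \chi(T' \uplus S)$, and setting $S_j = S$ exhibits $T = T' \uplus S$ as the required disjoint union, completing the induction.

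I expect the main obstacle to be precisely the need to strengthen the induction hypothesis: the statement must be upgraded from ``there exist disjoint sets'' to ``every monomial is the characteristic vector of a disjoint union of feasible bundles,'' because only the latter controls the monomial $y^{e'}$ that is multiplied by $y^{\chi(S)}$ in round $j$. The remaining subtlety is the careful accounting of Hamming weights—verifying that $H(y^e) = s^\star$ together with $|T'| = s'$ and $s' = s^\star - |S|$ yields the exact equality $H\!\left(y^{\chi(T') + \chi(S)}\right) = |T'| + |S|$ needed to invoke \Cref{obs:HW-disjointness}, which is what converts ``no binary carries occurred'' into genuine set-disjointness.
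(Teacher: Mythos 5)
Your proof is correct and follows the same route as the paper: induction on $j$, with \Cref{obs:HW-disjointness} converting the Hamming-weight bookkeeping into genuine set-disjointness. The one substantive difference is the strengthened induction hypothesis. The paper's written proof applies the claim in its literal existential form (``there exist $j'-1$ pairwise disjoint sets'') and then asserts that $S_{j'}$ is disjoint from $S_1\cup\cdots\cup S_{j'-1}$ ``due to \Cref{obs:HW-disjointness}''---a step that only goes through if the specific lower-round monomial being multiplied by $y^{\chi(S_{j'})}$ is known to equal $y^{\chi(S_1\cup\cdots\cup S_{j'-1})}$ for those very sets. That is exactly the invariant you add (every surviving monomial of $\polyn{j}{s^\star}(y)$ is the characteristic vector of a disjoint union of feasible bundles of total size $s^\star$), together with the observation that $\Co{R}(\cdot)$ does not change which monomials are present. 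So your version is the rigorous form of the paper's argument rather than a different one; what it buys is a clean justification of the disjointness step that the paper leaves implicit, at the cost of carrying a slightly heavier statement through the induction.
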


\noindent \emph{Proof:}
We prove it by induction on $j$. 
Consider the base case when $j=1$. 
Suppose $\polyn{1}{s^\star}(y)$ is non-zero for 
some $s^\star \in [m]$. 
Then, it contains a monomial $y^{\chi(S)}$, where $S\in \Co{F}_1$. Thus, the claim is true. 

Now consider the induction step. Suppose that the claim is true for $j=j'-1$. We next prove it for $j=j'$. Suppose that $\polyn{j'}{s^\star}(y)$ is non-zero for some $s^\star \in [m]$. Then, it contains a monomial of the form $\polyn{j-1}{s'}(y) \times y^{\chi(S)}$, where $|S|=s^\star -s'$ and $S\in \Co{F}_{j'}$. Due to induction hypothesis, since  $\polyn{j-1}{s'}(y)$ is a non-zero polynomial,  there exists $j'-1$ pairwise disjoint  sets $S_1,\ldots,S_{j'-1}$ such that $S_i \in \Co{F}_i$, where $i\in [j'-1]$. Furthermore, due to \Cref{obs:HW-disjointness}, we have that $S_{j'}$ is disjoint from $S_1\cup \ldots \cup S_{j'-1}$. Thus, we have $j'$ pairwise disjoint  sets $S_1,\ldots,S_{j'}$ such that $S_i \in \Co{F}_i$, where $i\in [j']$. \hfill $\diamond$

This completes the proof for \pcffa. 

To argue the correctness of the \ccffa, recall that we return yes, when $p_m^n(y)$ is a non-zero polynomial. The correctness proof is exactly the same as for \pcffa. Note that it also remains the same for the size-bounded variant, as we have taken care of size while constructing families for each agent.

\subparagraph*{Running time.}
We argue that the algorithm runs in $\Oh(2^{m}\cdot(n+m)^{\Oh(1)})$ time. 
In the algorithm, we first construct a family of feasible bundles 
for each agent $a_i\in \Co{A}$. Since we check all the subsets of $\I$, 
the constructions of families takes $\Oh(2^m\cdot (n+m)^{\Oh(1)})$ time. 
For $i=1$, we construct $m$ polynomials that contains $\Oh(2^m)$ terms. 
Thus, $p_s^1(y)$ can be constructed in $\Oh(2^m\cdot m)$ time. 
Then, we recursively construct polynomials by polynomial multiplication. 
Recall that there exists an algorithm that multiplies two polynomials of degree $d$ in $\Oh(d \log d)$ time~\cite{moenck1976practical}.
Since every polynomial has the degree at most $\Oh(2^m)$, 
every polynomial multiplication takes $\Oh(2^m \cdot m)$ time. 
Hence, the algorithm runs in  $\Oh(2^{m}\cdot(n+m)^{\Oh(1)})$ time. 


We first consider the complete variant of the problem. To prove the lower bound for \ccffa, we give a polynomial time reduction from the \textsc{$3$-Coloring} problem, in which given a graph $G$, the goal is to check whether the vertices of $G$ can be colored using $3$ colors so that no two neighbors share the same color.   
For a given instance $G$ of \textsc{$3$-Coloring},    we construct an instance of \ccffa as follows: $\calA=\{a_1,a_2,a_3\}$, $\calI=V(G)$ and $G$ is the conflict graph. For each $i\in [3]$ and $x\in \calI$, $\util_i(x)=1$, and
the target value $\eta = 1$. 
It is easy to see that this simple reduction constructs 
an equivalent instance. 
Since, \textsc{$3$-Coloring} does not admit 
an algorithm running in time $2^{o(m)}$ unless the \ETH\ fails, we get the lower bound result
(See, for example, Chapter~14 in \cite{cygan2015parameterized}).

Next, we move to the partial variant of the problem. Towards this, we give a polynomial time reduction from the \textsc{Independent Set} problem. 
Given an instance $\X{I}=(G, k)$ of {\sc Independent Set}, we create an equivalent instance ${\cal J}=(\calA, \calI, \{\util_{i}\}_{i\in [n]}, G, \eta)$ 
of the \partialfairdiv problem as follows:
 $\calA=\{a_1\}, \calI=V(G)$, $\util_1(x)=1$ for each \items $x\in \calI$. The target value $\eta$ is set to $k$.
It is important to note that any independent set in graph $G$ 
of size at least $k$ is simultaneously a 
valid allocation for the sole agent in the reduced instance. 
The converse is also true, meaning that any appropriate assignment 
to the single agent corresponds to an independent set of size at least $k$ in $G$. 
Since \textsc{Independent Set} does not admit
an algorithm that runs in time $2^{o(m)}$ unless the \ETH\ fails, our claim holds true
(Once again, see Chapter~14 in \cite{cygan2015parameterized}.)
Observe that the same reduction works for \sbpcffa as well.

Moving forward, we focus on \sbccffa and we give a similar reduction from {\sc Independent Set}. Given an instance $\X{I}=(G, k)$ of {\sc Independent Set}, we create an equivalent instance ${\cal J}=(\calA, \calI, \{\util_{i}\}_{i\in [n]}, G, s, \eta)$ 
of the \sbccffa problem as follows:
 $\calA=\{a_1,a_2, \cdots, a_{m-k+1}\}, 
 ~\calI=V(G)$, $\util_1(x)=1$ for each \items $x\in \calI$,  and $\util_{i\in [n]\setminus \{1\}}(x)=k, ~i\in [n]\setminus \{1\}$ for each \items $x\in \calI$. The target value $\eta$ is set to $k$ and $s=k$. Note that any independent set of size $k$ provides a satisfying assignment to the agent $a_1$. The remaining $(m-k)$ \items can satisfy the other $(m-k)$ agents in the reduced instance. The reverse direction follows from the fact that any valid satisfying assignment to the agent $a_1$ corresponds to an independent set of size $k$ in $G$. Thus, the theorem is proved. 
\end{proof}

\section{Proof of Theorem~\ref{thm:equiavlence} and Theorem~\ref{lem:mwis-degenerate}}

\begin{sloppypar}
We begin with the proof of  Theorem~\ref{thm:equiavlence}. 
\equivalence*
Let $\X{J}=(\Co{A}, \I, \{\util_{i}\}_{i\in [n]}, \Co{H}, s,\eta)$  be an instance of \cffag, and let $|{\mathscr J}|$ denote the size of the instance.  We first prove the first part of Theorem~\ref{thm:equiavlence}, which is the easier direction of the proof.  In particular, let $\mathbb{A}$ be an \fpt algorithm for  \cffag, running in time $f(n,s) \ptime$. Given an instance $(G,k,\rho,w)$ of \mwisg, we construct an instance $\X{J}=(\Co{A}, \I, \{\util_{i}\}_{i\in [n]}, \Co{H},s, \eta)$   of \cffag as follows. The set of agents $\Co{A}$ has only one agent $a_1$. Further, $\I=V(G)$, $\util_{1}=w$,  $\Co{H}=G$, $s=k$, and $\eta=\rho$. It is easy to see that 
$(G,k,\rho,w)$ is a \yes-instance of \mwisg if and only if $\X{J}$  is a  \yes-instance of \cffag. Thus, by invoking algorithm $\mathbb{A}$ on instance $\X{J}$ of \cffag, we get an \fpt algorithm for \mwisg that runs in $f(k) \ptime$ time. This completes the proof in the forward direction. Next, we prove the reverse direction of the proof. That is, given an \fpt algorithm for \mwisg, we design an \fpt algorithm for \cffag.  For ease of explanation, we first present a randomized algorithm which will be derandomized later using the known tool of {\em $(p,q)$-perfect hash family}~\cite{alon1995color,fomin2014efficient}. 
\end{sloppypar}

%

\subparagraph{Randomized Algorithm.}\label{sec:randomized algo}
We design a randomized algorithm with the following specification. If the input, \X{J}, is a  \no-instance then the algorithm always returns ``no''. However,   if the input, \X{J}, is a  \yes-instance then the algorithm returns ``yes'' with probability at least $1/2$.

We throught assume that we have been given a \yes-instance. This implies that 
  there exists a  hypothetical solution $\phi \colon \Co{I}  \rightarrow \Co{A}$. We define everything with respect to $\phi$. That is, $\phi \colon \Co{I}  \rightarrow \Co{A}$ is a partial assignment satisfying all the requirements. Let $S=\phi^{-1}(\Co{A})=\cup_{a\in \Co{A}}\phi^{-1}(a)$, i.e., the set of \objects that are assigned to some agent.  Further, note that $|S|\leq ns$, as the size of each bundle is upper bounded by $s$. 
Our main idea is to first highlight all the \objects in the set $S$, that are assigned to some agent, using color coding as follows: color the vertices of  $\Co{H}$ uniformly and independently at random using $ns$ colors, say $\{1,\ldots,ns\}$.   

%
The goal of the coloring is that ``with high probability'', we color the \objects assigned to agents in a solution using distinct colors. The following proposition bounds the success probability. 
%

\begin{proposition}{\rm \cite[Lemma 5.4]{ParamAlgorithms15b}}\label{prop:success-prob} Let $U$ be a universe and $X\subseteq U$. Let $\chi \colon U \rightarrow [|X|]$ be a function that colors  each element of $U$ with one of $|X|$ colors uniformly and independently at random. Then, the probability that the elements of $X$ are colored with pairwise distinct colors is at least $e^{-|X|}$.
\end{proposition}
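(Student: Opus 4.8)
The plan is a direct computation of the exact probability, followed by an elementary lower bound. Write $k = |X|$. Since $\chi$ colors the elements of $U$ independently and uniformly at random from a palette of $k$ colors, the restriction $\chi|_X$ is an independent uniform $k$-coloring of the $k$ elements of $X$; in particular, each of the $k^k$ functions $X \to [k]$ arises with probability exactly $k^{-k}$. The target event---that the elements of $X$ receive pairwise distinct colors---depends only on $\chi|_X$ and corresponds precisely to the \emph{injective} functions $X \to [k]$. Because the domain and codomain both have size $k$, an injection here is a bijection, so there are exactly $k!$ favorable colorings. Hence the probability in question equals $k!/k^k$.

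It then remains to show $k!/k^k \ge e^{-k}$, i.e. $k! \ge (k/e)^k$. First I would derive this from the Taylor expansion $e^k = \sum_{i \ge 0} k^i/i!$: since every term is nonnegative, retaining only the single term $i = k$ gives $e^k \ge k^k/k!$, which rearranges to $k! \ge k^k e^{-k} = (k/e)^k$. Substituting this back into $k!/k^k$ yields $k!/k^k \ge e^{-k} = e^{-|X|}$, which is exactly the claimed bound.

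There is essentially no hard step here: the only content beyond bookkeeping is the inequality $k! \ge (k/e)^k$, which is standard. The one point worth a sentence of care is the reduction from coloring all of $U$ to coloring only $X$---one must observe that the event is measurable with respect to $\chi|_X$ and that, by independence, $\chi|_X$ is itself a uniform independent $k$-coloring of $X$, so the colors assigned to $U \setminus X$ are irrelevant. (The degenerate case $k = 0$ is vacuous, the probability being $1 \ge e^{0}$.)
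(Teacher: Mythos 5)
Your proof is correct and is exactly the standard argument for this fact: the paper does not prove the proposition itself but imports it from Cygan et al.\ (Lemma 5.4 of the cited book), where the proof proceeds the same way --- the probability is exactly $|X|!/|X|^{|X|}$, which is at least $e^{-|X|}$ via $k!\ge (k/e)^k$. Your handling of the reduction to $\chi|_X$ and the degenerate case $|X|=0$ is fine, so there is nothing to add.
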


Due to \Cref{prop:success-prob}, the coloring step of the algorithm colors the \objects in $\phi^{-1}(\Co{A})$ using distinct colors with probability at least $e^{-ns}$. We call an assignment $\phi \colon \Co{I}  \rightarrow \Co{A}$ as {\em colorful} if every two \items $\{i, i'\} \in \phi^{-1}(\Co{A})$ get distinct color. Moreover, for each agent $a$, $|\phi^{-1}(a)|\leq s$.


Next, we find a {\em colorful} feasible assignment in the following lemma.  Further, let us assume that we have an \fpt algorithm, $\mathbb{B}$, for \mwisg  running in time $h(k)n^{\Oh(1)}$.

\begin{lemma}\label{lem:colorful_solution}
Let  $\X{J}=(\Co{A}, \I, \{\util_{i}\}_{i\in [n]}, \Co{H}, s,\eta)$
be an instance of \cffag and $\chi \colon V(\Co{H}) \rightarrow [ns]$ be a coloring function. Then, there exists a dynamic programming algorithm that finds a colorful feasible assignment $\phi \colon \Co{I}  \rightarrow \Co{A}$ in $\Oh(3^{ns}\cdot h(s) \cdot (n+m)^{\Oh(1)})$ time, if it exists, otherwise, return ``no''.
\end{lemma}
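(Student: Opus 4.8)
The plan is to exploit the random coloring $\chi$ to \emph{decouple} the search for the $n$ agents' bundles into $n$ essentially independent \mwisg instances, and then to recombine them with a subset dynamic program over colors. The guiding idea is that a colorful assignment uses each of the $ns$ colors at most once, so I may treat each color as a resource to be committed to a single agent: once a color set $C'\subseteq[ns]$ is handed to agent $a_i$, that agent must build its bundle solely from $V_{C'}$, the items whose color lies in $C'$, and no other agent may use those colors. The crucial payoff is that $a_i$'s bundle now lives inside the \emph{induced} subgraph $\Co{H}[V_{C'}]$; since $\mathcal G$ is hereditary, $\Co{H}[V_{C'}]\in\mathcal G$, so the assumed algorithm $\mathbb{B}$ for \mwisg is directly applicable to it. This is what lets the reduction to \mwisg go through without ever leaving the class $\mathcal G$.

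Concretely, I would first precompute, for every agent $a_i$ and every $C'\subseteq[ns]$, a Boolean $g_i(C')$ that is true iff $a_i$ has a feasible bundle inside $V_{C'}$, i.e.\ an independent set of size at most $s$ with utility at least $\eta$. This is exactly the query answered by $\mathbb{B}$ on the instance $(\Co{H}[V_{C'}], s, \eta, \util_i)$ at cost $h(s)\cdot(n+m)^{\Oh(1)}$ per call, and self-reducibility lets me also store the witnessing bundle. There are at most $n\cdot 2^{ns}$ such pairs, so this step stays within budget. Second, I would run the dynamic program $D[i][C]$, defined to be true iff agents $a_1,\dots,a_i$ can be handed pairwise-disjoint color sets that together cover $C$ and each admit a feasible bundle, via the recurrence $D[i][C]=\bigvee_{C'\subseteq C}\bigl(D[i-1][C\setminus C']\wedge g_i(C')\bigr)$ with base case $D[0][\emptyset]=\text{true}$. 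Enumerating all pairs $(C,C')$ with $C'\subseteq C$ costs $\sum_{C}2^{|C|}=3^{ns}$, which dominates and yields the claimed bound $\Oh(3^{ns}\cdot h(s)\cdot(n+m)^{\Oh(1)})$. The algorithm answers ``yes'' iff $D[n][C]$ holds for some $C$, and $\phi$ is recovered by backtracking through the selected $C'$ at each level and reading off the stored bundles.

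For correctness, soundness is immediate: if $D[n][C]$ holds, the recurrence exhibits a partition $C=C_1\uplus\cdots\uplus C_n$ with bundles $S_i\subseteq V_{C_i}$; as the pools $V_{C_i}$ are pairwise disjoint (being indexed by disjoint color sets), so are the $S_i$, hence $\phi$ with $\phi^{-1}(a_i)=S_i$ is a genuine feasible assignment. For completeness I would take any colorful feasible assignment $\phi$, note that colorfulness forces all assigned items to receive distinct colors so that the sets $C_i=\chi(\phi^{-1}(a_i))$ are pairwise disjoint, observe that each $g_i(C_i)$ holds, and conclude that this exact partition drives the recurrence to $D[n][\bigcup_i C_i]=\text{true}$. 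I expect the main obstacle to be precisely this reconciliation of colorfulness with the hereditary restriction: the tempting shortcut of inserting edges between equally colored vertices to force colorfulness inside a pool would destroy membership in $\mathcal G$ and make $\mathbb{B}$ inapplicable. The induced-subgraph formulation above sidesteps this, at the cost that an agent's bundle may repeat a color within its own pool; I would argue this relaxation is harmless for the surrounding randomized routine, since it can only enlarge the accepted instances beyond the strictly colorful ones while never accepting a no-instance (soundness) and always accepting whenever a colorful solution exists (completeness), which is exactly what is needed downstream.
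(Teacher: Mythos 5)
Your proposal is correct and follows essentially the same route as the paper: a subset dynamic program over the $ns$ colors whose recurrence splits off a color set for agent $a_i$ and tests feasibility of that agent's bundle by one call to the \mwisg algorithm on the induced subgraph $\Co{H}[V_{C'}]$ (which stays in $\mathcal G$ by heredity), with the same $3^{ns}\cdot h(s)\cdot(n+m)^{\Oh(1)}$ accounting. The paper likewise relaxes strict colorfulness within a single agent's pool and notes, as you do, that this is harmless for soundness and completeness.
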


\begin{proof}
Let $\mathsf{colors} = \{1,\ldots,ns\}$ be the set of colors and let $a_1,\ldots,a_n$ be an arbitrary order of the agents. For $S\subseteq \mathsf{colors}$, let $V_S$ be the subset of vertices in $\Co{H}$ that are colored using the colors in $S$. We apply dynamic programming: for a non-empty set $S\subseteq \mathsf{colors}$ and $i\in [n]$, we define the table entry $T[i,S]$ as $1$ if there is a feasible assignment of \objects 
in $V_S$ 
to agents $\{a_1,\ldots,a_i\}$; otherwise it is $0$.  Note that we are not {\em demanding} that the feasible assignment be colorful. For an agent $a_i\in \Co{A}$ and $S\subseteq \mathsf{colors}$, let $\Co{H}_{a_i,S}$ be a vertex-weighted graph constructed as follows. Let $V_S$ be the subset of vertices in $\Co{H}$ that are colored using the colors in $S$. Then,  $\Co{H}_{a_i,S}=\Co{H}[V_S]$. 
The weight of every vertex $x \in \Co{H}_{a_i,S}$ is $\util_{i}(x)$. For a vertex-weighted graph $G$, let $\mathbb{I}(G)\in \{0,1\}$, where 
$\mathbb{I}(G)=1$ if there exists an independent set of size at most $s$ and weight at  least $\eta$ in $G$, otherwise $0$. We compute $\mathbb{I}(G)$ using algorithm $\mathbb{B}$.
%
 We compute the table entries as follows. 
 
\noindent{Base Case: } For $i=1$ and non-empty set $S$, we compute as follows:
\begin{equation}\label{eq:base case}
T[1,S] = \mathbb{I}(\Co{H}_{a_1,S})
\end{equation}

\noindent{Recursive Step: } For $i>1$ and non-empty set $S$, we compute as follows:
\begin{equation}\label{eq:recursion}
T[i,S] = \bigvee_{\emptyset \neq S' \subset S} T[i-1,S'] \wedge \mathbb{I}(\Co{H}_{a_i,S\setminus S'})
\end{equation}

We return ``yes'' if $T[n,S]=1$ for some $S\subseteq \mathsf{colors}$, otherwise ``no''. 
Next, we prove the correctness of the algorithm. Towards this, we prove that the \Cref{eq:base case} and \Cref{eq:recursion} correctly compute $T[i,S]$, for each $i\in [n]$ and $\emptyset \neq S \subseteq \mathsf{colors}$.

\begin{claim}\label{clm:correctness-equations}
\Cref{eq:base case} and \Cref{eq:recursion} correctly compute $T[i,S]$, for each $i\in [n]$ and $\emptyset \neq S \subseteq \mathsf{colors}$. 
\end{claim}

\noindent \emph{Proof:}
We will prove it by induction on $i$. For $i=1$, we are looking for any feasible assignment of \objects colored using the colors in $S$ to the agent $a_1$. Thus, \Cref{eq:base case} computes $T[1,S]$ correctly due to the construction of the graph $\Co{H}_{a_1,S}$ and the correctness of algorithm $\mathbb{B}$.  

Now, consider the recursive step. For $i>1$ and $\emptyset \neq S \subseteq \mathsf{colors}$, we compute $T[i,S]$ using \Cref{eq:recursion}. We show that the recursive formula is correct. Suppose that  \Cref{eq:recursion} computes $T[i',S]$ correctly, for all $i'<i$ and   $\emptyset \neq S \subseteq \mathsf{colors}$. First, we show that $T[i,S]$ is at most the R.H.S. of  \Cref{eq:recursion}. If $T[i,S]=0$, then the claim trivially holds. Suppose that $T[i,S]=1$. Let $\psi$ be a colorful feasible assignment to agents $\{a_1,\ldots,a_i\}$ using \objects that are colored using colors in $S$. Let $S_j\subseteq S$ be the set of colors of \objects in $\psi(a_j)$, where $j\in [i]$.  Since $\psi(a_i)$ uses the colors from the set $S_i$ and $\sum_{x \in \psi(a_i)}\util_{a_i}(x) \geq \eta$, due to the construction of $\Co{H}_{a_i,S_i}$, we have that  $\mathbb{I}(\Co{H}_{a_i,S_i})=1$.   Consider the assignment $\psi'=\psi\vert_{\{a_1,\ldots,a_{i-1}\}}$ (restrict the domain to $\{a_1,\ldots,a_{i-1}\}$). Since  $S_i$ is disjoint from $S_1\cup \ldots \cup S_{i-1}$ due to the definition of colorful assignment, $\psi'$ is a feasible assignment for the agents $\{a_1,\ldots,a_{i-1}\}$ such that the color of all the \objects in $\psi'(\{a_1,\ldots,a_{i-1}\})$ is in $S\setminus S_i$. Furthermore, since $\psi$ is colorful, $\psi'$ is also colorful. Therefore, $T[i-1,S\setminus S_i]=1$ due to the induction hypothesis. Hence, R.H.S. of \Cref{eq:recursion} is $1$. Thus, $T[i,S]$ is at most R.H.S. of  \Cref{eq:recursion}.

For the other direction, we show that $T[i,S]$ is at least R.H.S. of  \Cref{eq:recursion}. If R.H.S. is $0$,  then the claim trivially holds. Suppose R.H.S. is $1$. That is, there exist $S' \subseteq S$ such that $T[i-1,S'] = 1$ and $\mathbb{I}(\Co{H}_{a_i,S\setminus S'})=1$.  Let $\psi$ be a feasible assignment to agents $\{a_1,\ldots,a_{i-1}\}$ using \objects that are colored using the colors in $S'$. Since $\mathbb{I}(\Co{H}_{a_i,S\setminus S'})=1$, there exists a subset $X\subseteq V_{S\setminus S'}$ such that $\sum_{x \in X}\util_{a_i}(x) \geq \eta$.  Thus, construct an assignment $\psi'$ as follows: $\psi'(a)=\psi(a)$, if $a\in \{a_1,\ldots,a_{i-1}\}$ and $\psi'(a_i)=X$. Since $\psi'$ is a feasible assignment and $\mathbb{I}(\Co{H}_{a_i,S\setminus S'})=1$, $\psi$ is a feasible assignment. 
Therefore, $T[i,S]=1$. \hfill $\diamond$
%
Therefore $T[n,S]=1$  for some $S\subseteq \mathsf{colors}$  if and only if $\X{J}$ is a yes-instance of \cffag. This completes the proof of the lemma.
 \end{proof}


Due to \Cref{prop:success-prob} and \Cref{lem:colorful_solution}, we obtain an $\Oh(3^{ns}\cdot h(s) \cdot (n+m)^{\Oh(1)})$ time randomized algorithm for \cffag which succeeds with probability $e^{-ns}$. Thus, by repeating the algorithm independently $e^{ns}$ times, we obtain the following result.

\begin{sloppypar}
\begin{lemma}\label{thm:randomized_algo}
There exists a randomized algorithm that given an instance $\X{J}=(\Co{A}, \I, \{\util_{i}\}_{i\in [n]]}, \Co{H}, s,\eta)$ of \cffag either reports a failure or finds a feasible assignment in $\Oh((3e)^{ns}\cdot h(s) \cdot (n+m)^{\Oh(1)})$ time. Moreover, if the algorithm is given a yes-instance, the algorithm returns ``yes'' with  probability at least $1/2$, and if the algorithm is given a no-instance, the algorithm returns ``no'' with  probability $1$. 
\end{lemma}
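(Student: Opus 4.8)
The plan is to assemble the randomized algorithm from the two ingredients already established: the success probability of a random coloring (\Cref{prop:success-prob}) and the dynamic programming routine that recovers a colorful feasible assignment (\Cref{lem:colorful_solution}). A single run of the algorithm colors the $m$ vertices of $\Co{H}$ independently and uniformly at random with the $ns$ colors $\{1,\ldots,ns\}$, and then invokes the dynamic program of \Cref{lem:colorful_solution} on this coloring; it reports ``yes'' precisely when that dynamic program returns a colorful feasible assignment, and ``no'' otherwise. The remaining work is to bound the error of one run and then amplify it by independent repetition.

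First I would bound the success probability of a single run on a \yes-instance. If $\X{J}$ is a \yes-instance, fix a hypothetical solution $\phi$ and let $S=\phi^{-1}(\Co{A})$ be the set of assigned \objects; since each of the $n$ bundles has size at most $s$, we have $|S|\le ns$. Applying \Cref{prop:success-prob} with universe $V(\Co{H})$ and $X=S$ (so that $|X|\le ns$ colors suffice to color $X$ injectively), the probability that all \objects of $S$ receive pairwise distinct colors is at least $e^{-|S|}\ge e^{-ns}$. Whenever this event occurs, $\phi$ is itself a colorful feasible assignment, so by \Cref{lem:colorful_solution} the dynamic program finds \emph{some} colorful feasible assignment and the run answers ``yes''. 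Hence one run succeeds with probability at least $e^{-ns}$.

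The one-sided error is immediate and is the point I would state most carefully: on a \no-instance there is no feasible assignment whatsoever, hence no colorful one, so regardless of the random coloring the dynamic program of \Cref{lem:colorful_solution} returns ``no'' and the run never errs. To amplify, I would run $t=\lceil e^{ns}\rceil$ independent trials and answer ``yes'' if any trial does. On a \yes-instance the probability of missing a solution is at most $(1-e^{-ns})^{t}\le e^{-1}<\tfrac12$, giving success probability at least $\tfrac12$, while the \no-instance guarantee is preserved since no individual trial can answer ``yes'' incorrectly. Each trial costs $\Oh(3^{ns}\cdot h(s)\cdot (n+m)^{\Oh(1)})$ by \Cref{lem:colorful_solution} (the cost of sampling the coloring being absorbed into $(n+m)^{\Oh(1)}$), so the total running time is $t$ times this, namely $\Oh((3e)^{ns}\cdot h(s)\cdot (n+m)^{\Oh(1)})$, matching the claimed bound. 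Since the argument is a direct composition of the two cited results, there is no genuine obstacle; the only points requiring care are the bound $|S|\le ns$ that licenses the use of exactly $ns$ colors, and the observation that a colorful feasible assignment is in particular a feasible assignment, which yields the clean one-sided error.
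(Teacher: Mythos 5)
Your proposal is correct and follows essentially the same route as the paper: a single random coloring with $ns$ colors, invocation of the dynamic program from \Cref{lem:colorful_solution}, the $e^{-ns}$ success bound from \Cref{prop:success-prob}, one-sided error on \no-instances, and amplification by $e^{ns}$ independent repetitions. If anything, your amplification step is written slightly more carefully than the paper's (whose displayed inequality uses the exponent $ns$ where $e^{ns}$ is intended), but the argument is the same.
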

\end{sloppypar}

\begin{proof}
Let $\X{J}=(\Co{A}, \I, \{\util_{i}\}_{i\in [n]}, \Co{H}, s,\eta)$ be an instance of \cffag. We color the \objects uniformly at random with colors $[ns]$. Let $\chi\colon V(\Co{H}) \rightarrow [ns]$ be this coloring function. We run the algorithm in \Cref{lem:colorful_solution} on the instance $\X{J}$ with coloring function $\chi$. If the algorithm returns ``yes'', then we return ``yes''. Otherwise, we report failure.  

Let $\X{J}$ be a yes-instance of \cffag and $\phi$ be a hypothetical solution. Due to \Cref{prop:success-prob}, all the \objects in $\phi^{-1}(\Co{A})$ are colored using distinct colors with probability at least $e^{-ns}$. Thus, the algorithm in \Cref{lem:colorful_solution} returns yes with probability at least $e^{-ns}$. Thus, to boost the success probability to a constant, we repeat the algorithm independently $e^{ns}$ times. Thus, the success probability is at least 
\begin{equation*}
1-\Big(1-\frac{1}{e^{ns}}\Big)^{ns} \geq 1-\frac{1}{e} \geq \frac{1}{2}
\end{equation*}

If the algorithm returns ``yes'', then clearly $\X{J}$ is a yes-instance of \cffag due to \Cref{lem:colorful_solution}. 
 \end{proof}

\subparagraph{Deterministic Algorithm.}\label{sec:deterministic algo}
We derandomize the algorithm 
using $(p,q)$-perfect hash family to obtain a deterministic algorithm for our problem. 

\begin{defn}[$(p,q)$-perfect hash family]{\rm (\cite{alon1995color})}
For non-negative integers $p$ and $q$, a family of functions $f_1,\ldots,f_t$ from a universe $U$ of size $p$ to a universe of size $q$ is called a $(p,q)$-perfect hash family, if for any subset $S\subseteq U$ of size at most $q$, there exists $i\in [t]$ such that $f_i$ is injective on $S$. 
\end{defn}

We can construct a $(p,q)$-perfect hash family using the following result. 

\begin{proposition}[\cite{naor1995splitters,ParamAlgorithms15b}]\label{prop:hash family construction}
There is an algorithm that given $p,q \geq 1$ constructs a $(p,q)$-perfect hash family of size $e^qq^{\Oh(\log q)}\log p$ in time $e^qq^{\Oh(\log q)}p \log p$.
\end{proposition}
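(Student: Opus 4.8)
The plan is to prove this via the two-step strategy of Naor, Schulman and Srinivasan~\cite{naor1995splitters}: first cheaply shrink the universe from $[p]$ down to $\mathrm{poly}(q)$ using a \emph{splitter}, and then build a perfect hash family on this small universe, where the $\log p$ factor no longer reappears. Recall that an $(p,q,\ell)$-splitter is a family of functions $[p]\to[\ell]$ such that every $S\subseteq[p]$ with $|S|=q$ is split as evenly as possible by at least one member of the family; when $\ell=q^2$, even splitting means injectivity. First I would construct an $(p,q,q^2)$-splitter of size $\Oh(q^{\Oh(1)}\log p)$. Its existence is immediate from the probabilistic method: a uniformly random map $[p]\to[q^2]$ is injective on a fixed $q$-set with constant probability by the birthday bound, so $\Oh(\log\binom{p}{q})=\Oh(q\log p)$ random maps cover all $q$-sets. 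The explicit, derandomized construction (via polynomial hashing $x\mapsto(ax+b\bmod P)\bmod q^2$ together with an appropriate pessimistic estimator) is what ultimately brings the dependence on $p$ down to $\log p$ while keeping the size $\mathrm{poly}(q)\cdot\log p$. Composing through this splitter reduces the problem to constructing a $(q^2,q)$-perfect hash family, where the domain has size only $\mathrm{poly}(q)$.

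For the small universe I would use divide-and-conquer. Using a balanced $2$-splitter, partition any $q$-set into two halves of sizes $\lceil q/2\rceil$ and $\lfloor q/2\rfloor$ and recurse, sending the two halves into two disjoint blocks of the codomain $[q]$. A uniformly random function splits a fixed $q$-set evenly with probability $\Theta(1/\sqrt{q})$ (the central binomial coefficient), so $\Oh(\sqrt{q}\cdot\log q)$ functions suffice at each level over a $\mathrm{poly}(q)$-sized universe; each level can then be derandomized independently. The recursion has depth $\Oh(\log q)$, and multiplying the per-level family sizes yields the $q^{\Oh(\log q)}$ overhead. The final $e^q$ factor is forced at the leaves: mapping a fixed $q$-set injectively onto exactly $q$ colours succeeds with probability $q!/q^q=\Theta(e^{-q}\sqrt{q})$, so $\Theta(e^q)$ functions are unavoidable to cover all $q$-sets. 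Taking the product of the splitter family and the small-universe family --- i.e.\ forming all compositions $g\circ h$ --- gives the claimed size $e^q q^{\Oh(\log q)}\log p$, and the stated running time follows by bounding the cost of enumerating and evaluating each family.

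The hard part is the derandomization rather than the counting. The probabilistic existence of each ingredient is a routine union bound, but realizing the $\log p$ (rather than $q\log p$) dependence requires the splitter step to be made explicit and efficient, and realizing the balanced splitting at each recursion level deterministically --- e.g.\ via the method of conditional expectations with a suitable potential, or via explicit splitter constructions built from Reed--Solomon-type codes --- is where the real work lies. Since the statement is classical, I would ultimately cite the construction of Naor--Schulman--Srinivasan and the exposition in Cygan et al.~\cite{ParamAlgorithms15b} rather than reprove every derandomization step from scratch.
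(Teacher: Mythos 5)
The paper does not prove this proposition at all --- it is stated as a black-box citation to Naor--Schulman--Srinivasan and to Cygan et al. --- and your sketch is a faithful high-level account of exactly the splitter-based construction those references contain (universe reduction to $\mathrm{poly}(q)$ via an explicit $(p,q,q^2)$-splitter contributing the $\log p$ factor, then a divide-and-conquer family on the small universe contributing $e^q q^{\Oh(\log q)}$). So the proposal is correct and takes essentially the same approach as the paper, namely deferring the derandomization details to the cited construction.
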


Let $\X{J}=(\Co{A}, \I, \{\util_{i}\}_{i\in [n]}, \Co{H}, s,\eta)$ be an instance of \cffag. Instead of taking a random coloring $\chi$, we construct an $(m,ns)$-perfect hash family $\Co{F}$ using \Cref{prop:hash family construction}. Then, for each function $f\in \Co{F}$, we invoke the algorithm in \Cref{lem:colorful_solution} with the coloring function $\chi=f$. If there exists a feasible assignment $\phi \colon \Co{I} \rightarrow \Co{A}$ such that $|\phi(a)|\leq s$, for all $a\in \Co{A}$, then there exists a function $f\in \Co{F}$ that is injective on $\phi^{-1}(\Co{A})$, since $\Co{F}$ is an $(m,ns)$-perfect hash family. Consequently, due to \Cref{lem:colorful_solution}, the algorithm return ``yes''. Hence, we obtain the following deterministic algorithm.

\begin{theorem}\label{thm:fpt-ns}
There exists a deterministic algorithm for  \cffag running in time 
$\Oh((3e)^{ns}\cdot (ns)^{\log ns} \cdot h(s)\cdot (n+m)^{\Oh(1)})$.
\end{theorem}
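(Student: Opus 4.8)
The plan is to derandomize the randomized algorithm of \Cref{thm:randomized_algo} by replacing its single random $ns$-coloring with a deterministic enumeration over a perfect hash family. Recall that randomness enters only through the coloring step: on a \yes-instance with a hypothetical solution $\phi$, all we need is that the set $S = \phi^{-1}(\Co{A})$---which satisfies $|S| \le ns$ because there are $n$ agents and each bundle has size at most $s$---receives pairwise distinct colors, after which \Cref{lem:colorful_solution} recovers a feasible assignment. Hence it suffices to produce, deterministically, a collection of colorings such that at least one of them is injective on every vertex subset of size at most $ns$.

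First I would invoke \Cref{prop:hash family construction} with $p = m$ and $q = ns$ to construct an $(m, ns)$-perfect hash family $\Co{F}$ of size $e^{ns}(ns)^{\Oh(\log(ns))}\log m$ in time $e^{ns}(ns)^{\Oh(\log(ns))}\, m \log m$. By definition, for every $S \subseteq \I$ with $|S| \le ns$ there is some $f \in \Co{F}$ that is injective on $S$. Then, for each $f \in \Co{F}$, I would run the dynamic-programming routine of \Cref{lem:colorful_solution} with coloring $\chi = f$, returning ``yes'' as soon as one of these runs succeeds and ``no'' otherwise.

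For correctness, if $\X{J}$ is a \yes-instance I fix a solution $\phi$ and set $S = \phi^{-1}(\Co{A})$; since $|S| \le ns$, some $f \in \Co{F}$ is injective on $S$, so under $\chi = f$ the assignment $\phi$ is colorful and \Cref{lem:colorful_solution} reports ``yes''. Conversely, any ``yes'' answer is produced by a genuine feasible assignment returned by \Cref{lem:colorful_solution}, so $\X{J}$ is a \yes-instance. For the running time, I would multiply the family size $e^{ns}(ns)^{\Oh(\log(ns))}\log m$ by the per-coloring cost $\Oh(3^{ns}\cdot h(s)\cdot (n+m)^{\Oh(1)})$ of \Cref{lem:colorful_solution} and add the construction time; the two exponential terms combine as $3^{ns} \cdot e^{ns} = (3e)^{ns}$ and the $\log m$ factor is absorbed into $(n+m)^{\Oh(1)}$, yielding $\Oh((3e)^{ns}\cdot (ns)^{\log ns} \cdot h(s)\cdot (n+m)^{\Oh(1)})$.

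Since each step is a direct substitution of an established tool, I expect no genuine obstacle beyond bookkeeping. The only points needing care are verifying that the parameter $q = ns$ of the hash family exactly matches the solution-size bound $|\phi^{-1}(\Co{A})| \le ns$, and tracking that the $(ns)^{\Oh(\log(ns))}$ blow-up coming from the family size is precisely what introduces the $(ns)^{\log ns}$ factor separating this deterministic bound from the randomized one of \Cref{thm:randomized_algo}.
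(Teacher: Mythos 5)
Your proposal is correct and follows essentially the same route as the paper: construct an $(m,ns)$-perfect hash family via \Cref{prop:hash family construction} and run the dynamic program of \Cref{lem:colorful_solution} once per function in the family, with correctness following from the injectivity guarantee on the at-most-$ns$ assigned items. Your running-time bookkeeping (combining $3^{ns}\cdot e^{ns}=(3e)^{ns}$ and absorbing $\log m$ into the polynomial factor) is also consistent with the stated bound.
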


Due to \Cref{thm:fpt-ns}, we can conclude the following. 

\begin{corollary}\label{cor1:fpt-ns}
If \mwisg is solvable in polynomial time, then there exists a deterministic algorithm for  \cffag running in time 
$\Oh((3e)^{ns}\cdot (ns)^{\log ns} \cdot (n+m)^{\Oh(1)})$. 
%
%
\end{corollary}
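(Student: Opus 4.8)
The plan is to derive Corollary~\ref{cor1:fpt-ns} directly from Theorem~\ref{thm:fpt-ns} by tracking the single place where the subroutine complexity enters the running time. Recall that Theorem~\ref{thm:fpt-ns} furnishes a deterministic algorithm for \cffag running in time $\Oh((3e)^{ns}\cdot (ns)^{\log ns} \cdot h(s)\cdot (n+m)^{\Oh(1)})$, where $h(\cdot)$ is the function governing the running time $h(k)\,n^{\Oh(1)}$ of the assumed \fpt subroutine $\mathbb{B}$ for \mwisg that was used throughout the derandomized algorithm.

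First I would isolate the role of the factor $h(s)$. Inspecting the construction, $h(s)$ originates \emph{exclusively} from the invocations of $\mathbb{B}$ inside the dynamic program of Lemma~\ref{lem:colorful_solution}. There, every table entry $T[i,S]$ is computed by one or more evaluations of $\mathbb{I}(\cdot)$ (via \Cref{eq:base case} and \Cref{eq:recursion}), and each such evaluation is precisely a call to $\mathbb{B}$ on the induced, vertex-weighted subgraph $\Co{H}_{a_i, S}$ with size bound $k = s$. Every other factor in the running time is independent of how $\mathbb{I}(\cdot)$ is realized: the $(3e)^{ns}$ accounts for the number of color-coding repetitions needed to boost the success probability, and the $(ns)^{\log ns}$ accounts for the size of the $(m,ns)$-perfect hash family supplied by \Cref{prop:hash family construction} in the derandomization step.

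The key step is then to invoke the hypothesis of the corollary. If \mwisg is solvable in polynomial time, each call $\mathbb{I}(\Co{H}_{a_i,S})$ runs in $(n+m)^{\Oh(1)}$ time, so the standalone factor $h(s)$ collapses into a polynomial that is absorbed into the $(n+m)^{\Oh(1)}$ term. Substituting this into the bound of Theorem~\ref{thm:fpt-ns} immediately yields the claimed running time $\Oh((3e)^{ns}\cdot (ns)^{\log ns} \cdot (n+m)^{\Oh(1)})$. As this is a straightforward substitution, I anticipate no genuine obstacle; the only point requiring care is confirming that $h(s)$ truly enters nowhere other than the subroutine calls, which a line-by-line reading of Lemma~\ref{lem:colorful_solution} together with the derandomization argument preceding Theorem~\ref{thm:fpt-ns} confirms.
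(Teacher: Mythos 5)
Your proposal is correct and follows exactly the paper's (implicit) argument: the corollary is obtained from Theorem~\ref{thm:fpt-ns} by noting that the factor $h(s)$ arises only from the calls to the \mwisg subroutine $\mathbb{B}$, so under the polynomial-time hypothesis it is absorbed into the $(n+m)^{\Oh(1)}$ term. Nothing further is needed.
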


It is possible that {\sc MWIS} is polynomial-time solvable on $\mathcal G$, but  \mwisg is \npc, as {\em any} $k$-sized solution of \mwis need not satisfy the weight constraint in the \mwisg problem. However, when we use an algorithm 
$\mathbb{B}$ for  \mwisg  in our algorithm, we could have simply used an algorithm for {\sc MWIS}. Though this will not result in a size bound of $s$ on the size of an independent set of weight at least $\eta$ that we found, however, this is sufficient to solve \pcffa, not \sbpcffa though. However, we need to use \mwisg, when  {\sc MWIS} is \npc and we wish to use \fpt algorithm with respect to $k$. 
%
%
Due to \Cref{thm:fpt-ns} and this observation, \pcffa is \fpt when parameterized by $n+s$ for several graph classes, such as chordal graphs~\cite{golumbic2004algorithmic},  bipartite graphs~\cite{golumbic2004algorithmic}, $P_6$-free graphs~\cite{grzesik2022polynomial}, outerstring graph~\cite{keil2017algorithm}, and fork-free graph~\cite{lozin2008polynomial}.

\begin{remark}
Our algorithm for chordal graphs is an improvement over the known algorithm that runs in $\Oh(m^{n+2}(Q+1)^{2n})$ time, where $Q=\max_{a\in \Co{A}}\sum_{i\in \Co{I}}p_{a}(i)$~{\rm \cite{DBLP:conf/iwoca/ChiarelliKMPPS20}}.
\end{remark}

Next, we prove \Cref{lem:mwis-degenerate}. 
%
\fptfrindlygraph*
\begin{proof}
Let $(G,k,\rho,w)$ be a given instance of \mwisg. Further, $\mathcal G$ is  $f$-ifc. 
Let ${\sf HighWeight} = \{v\in V(G) \colon w(v) \geq \nicefrac{\rho}{k}\}$. Note that if there exists an independent set of $G[{\sf HighWeight}]$ of size $k$, then it is the solution of our problem. Since, $G$ belongs to $f$-ifc, $G[{\sf HighWeight}]$ is also $f$-ifc. Thus, there exists an independent set in $G[{\sf HighWeight}]$ of size at least  $f(|{\sf HighWeight}|)$.  
If $f(|{\sf HighWeight}|) \geq k$, then there exists a desired solution in $G[{\sf HighWeight}]$. To find a solution, we do as follows. Consider an arbitrary  set 
$X \subseteq {\sf HighWeight}$ of size $f^{-1}(k)$. The size of $X$  guarantees that the set $X$ also has a desired solution. Now  we enumerate  subsets of size $k$ of $X$ one by one, and check whether it is independent; and if independent return it. This concludes the proof. 
 Otherwise, $|{\sf HighWeight}|< f^{-1}(k)$.
Note that the solution contains at least one vertex of ${\sf HighWeight}$. Thus, we guess a vertex, say $v$, in the set ${\sf HighWeight}$ which is in the solution,  delete $v$ and its neighbors from $G$, and decrease $k$ by $1$. Repeat the algorithm on the instance $(G-N[v],k-1,\rho-w(v),w\lvert_{V(G-N[v])})$. 

Since the number of guesses at any step of the algorithm is at most $f^{-1}(k)$ and the algorithm repeats at most $k$ times, the running time of the algorithm is $\Oh((f^{-1}(k))^k\cdot(n+m)^{\Oh(1)})$.
\end{proof}
Hence we have the following corollary.
\begin{corollary}\label{cor:fpt-k-mwisg}
There exists an algorithm that solves \mwisg in $\Oh((2k)^k\cdot(n+m)^{\Oh(1)})$, $\Oh((4 k^2)^k\cdot(n+m)^{\Oh(1)})$, $\Oh((4k)^k\cdot(n+m)^{\Oh(1)})$, $\Oh((dk+k)^k\cdot(n+m)^{\Oh(1)})$, $\Oh(R(\ell,k)^k\cdot(n+m)^{\Oh(1)})$  time, when $\Co{G}$ is a family of bipartite graphs, triangle free graphs, planar graphs, $d$-degenerate graphs, graphs excluding $K_\ell$ as an induced graphs, respectively. Here, $R(\ell,k)$ is an upper bound on Ramsey number. 
\end{corollary}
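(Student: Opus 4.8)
The plan is to derive this corollary directly from \Cref{lem:mwis-degenerate}, which gives an $\Oh((f^{-1}(k))^k \cdot (n+m)^{\Oh(1)})$-time algorithm for \mwisg whenever $\mathcal G$ is $f$-independence friendly. Thus for each of the five listed graph classes it suffices to (i) confirm that the class is hereditary, (ii) exhibit a monotonically increasing, invertible function $f$ such that every $n$-vertex graph in the class contains an independent set of size $f(n)$, and (iii) compute $f^{-1}(k)$ and substitute it into the running-time bound. All five classes are hereditary, since being bipartite, triangle-free, planar, $d$-degenerate, or excluding an induced $K_\ell$ are each preserved under taking induced subgraphs; hence the only real work is identifying $f$ class by class.

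For the colouring-based classes the independence bound is immediate from a proper colouring: a $c$-colourable $n$-vertex graph has a colour class, hence an independent set, of size at least $n/c$, so $f(n) = n/c$ and $f^{-1}(k) = ck$. Taking $c = 2$ for bipartite graphs gives $f^{-1}(k) = 2k$; the Four Colour Theorem gives $c = 4$ for planar graphs and hence $f^{-1}(k) = 4k$; and a $d$-degenerate graph is $(d+1)$-colourable, giving $f^{-1}(k) = (d+1)k = dk + k$. Plugging each $f^{-1}(k)$ into \Cref{lem:mwis-degenerate} reproduces the first, third, and fourth running times verbatim.

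For triangle-free graphs I would use the standard degree dichotomy: if some vertex has degree at least $\sqrt n$ then its neighbourhood is an independent set of that size by triangle-freeness, and otherwise the maximum degree is below $\sqrt n$, so the greedy bound $\alpha \ge n/(\Delta+1)$ yields an independent set of size $\Omega(\sqrt n)$. This gives $f(n) = \Theta(\sqrt n)$ and $f^{-1}(k) = \Oh(k^2)$, matching the stated $\Oh((4k^2)^k)$ with the multiplicative constant absorbed into the choice of $f$. For the $K_\ell$-induced-free class I would invoke Ramsey's theorem: any graph on at least $R(\ell,k)$ vertices contains either a clique of size $\ell$ or an independent set of size $k$; since the class forbids an induced $K_\ell$ (equivalently any $K_\ell$), a graph with at least $R(\ell,k)$ vertices must contain an independent set of size $k$. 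Defining $f$ implicitly via $f^{-1}(k) = R(\ell,k)$ then yields the final running time $\Oh(R(\ell,k)^k \cdot (n+m)^{\Oh(1)})$.

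The steps are all routine once the class-by-class independence bounds are in hand. The point needing the most care is the $K_\ell$-free case, where $f$ has no closed form and must be specified implicitly through the (inverse) Ramsey function; in each case one must also check that the chosen $f$ is genuinely monotonically increasing and invertible, as demanded by the definition of an $f$-independence friendly class, so that \Cref{lem:mwis-degenerate} applies as stated.
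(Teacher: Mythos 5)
Your proposal is correct and follows exactly the route the paper intends: the corollary is stated as an immediate consequence of Theorem~\ref{lem:mwis-degenerate}, and your class-by-class choices of $f$ (two-colourability, the degree dichotomy for triangle-free graphs, the Four Colour Theorem, $(d+1)$-colourability of $d$-degenerate graphs, and Ramsey's theorem with $f^{-1}(k)=R(\ell,k)$) reproduce each stated bound. The attention you pay to hereditariness and to defining $f$ implicitly in the Ramsey case is exactly the care the definition of an $f$-independence friendly class requires.
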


\hide{
Towards designing our algorithm (proof of Theorem~\ref{lem:mwis-degenerate}), we use the tool of {\em $k$-independence covering family} designed in~\cite{lokshtanov2020covering}. 

\begin{defn}[$k$-Independence Covering Family]\label{defn:independence covering family}{\rm (\cite{lokshtanov2020covering})}
For a graph $G$ and a positive integer $k$, a family of independent sets of $G$, $\Co{F}(G,k)$, is called an independence covering family for $(G,k)$, if for any independent set $X$ of $G$ of size at most $k$, there exists a set $Y\in \Co{F}(G,k)$ such that $X\subseteq Y$. 
\end{defn}

For $d$-degenerate graphs, $k$-independence covering family can be enumerated using the following. 

\begin{proposition}[\cite{lokshtanov2020covering}]\label{prop:independence covering family}
There is an algorithm that given a $d$-degenerate graph $G$ and a positive integer $k$ outputs a $k$-independence covering family for $(G,k)$ of size at most $\binom{k(d+1)}{k}\cdot 2^{o(k(d+1))}\cdot \log n$ in  $\Oh(\binom{k(d+1)}{k}\cdot 2^{{o}(k\cdot(d+1))}\cdot (n+m) \log n)$ time, where $n,m$ denote the number of vertices and edges, respectively, in $G$.
\end{proposition}

Using \Cref{prop:independence covering family}, we get the following algorithm. 

\begin{proof}[Proof of Theorem~\ref{lem:mwis-degenerate}]
We first construct a $k$-independence covering family of $G$, denoted by $\Co{F}(G,k)$, using \Cref{prop:independence covering family}. Next, for every $Y\in \Co{F}$, we choose a $k$-sized subset of maximum weight, say $X_Y$. We return $\arg \max_{Y\in \Co{F}} w(X_Y)$, where $w(X_Y)$ is the sum of weights of the vertices in $X_Y$.  
The correctness follows from the fact that the solution is contained in one of the set in $Y\in \Co{F}$. The running time follows from \Cref{prop:independence covering family}.
\qed \end{proof}}

\section{Proof of Theorem~\ref{thm:fpt-nr-of-non-edges}}

\fptnonedges*
\begin{proof}~
\begin{enumerate}
\item Subexponential time algorithm  when the number of agents is constant.
\end{enumerate}
The algorithm is based on counting the independent sets of size at least two and then mapping them to the agents. Thus, the algorithm works for all three variants of \cffa. Hence, for convenience, we write \cffa in this proof.

We first observe that the complement graph of \Co{H}, denoted by $\overline{\Co{H}}$, contains all the vertices of \Co{H} but $t$ edges only. Moreover, each clique in this graph constitutes a \conflict-free bundle in the instance of \cffa. Conversely, we claim that any \conflict-free bundle in the instance of \cffa must form a clique in $\overline{\Co{H}}$ since for every pair of \items $x_{1}, x_{2}$ in a bundle, there exists an edge in $\overline{\Co{H}}$. 

Thus, enumerating all possible cliques (not just maximal ones) in $\overline{\Co{H}}$ allows us to check for possible allocations to agents. To show that this is doable in the claimed time, we will count the number of cliques in $\overline{\Co{H}}$. Since $\overline{\Co{H}}$ has $t$ edges, there can be at most $2t$ vertices that are not isolated. Vertices that are isolated {\em constitute a clique of size $1$}, and are called {\it trivial cliques}. They are upper bounded by the number of \items ($m$), and will be counted separately. A clique is said to be {\em non-trivial} if it does not contain an isolated vertex. Next, we will upper bound the non-trivial cliques. Towards this, we first show that $\overline{\Co{H}}$ is a $2\sqrt{t}$-degenerate graph by a simple counting argument. Note that if there exists a subgraph $H$ with minimum degree at least $2\sqrt{t}$, then the graph must have more than $t$ edges. Let $H$ be the subgraph of $\Co{H}$ induced on the non-isolated vertices of $\Co{H}$. Since $H$ has at most $t$ edges, every subgraph of $H$ has a vertex of degree at most $2\sqrt{t}$. Thus, $H$ is a $2\sqrt{t}$-degenerate graph, and hence has a $2\sqrt{t}$-degeneracy sequence. 
%
%
%
%
%
%
%
%
%


Let $\Co{D}= v_{1}, \ldots, v_{2t}$ denote a $2\sqrt{t}$-degenerate degree sequence of $H$. 
Notice that for any $i\in [2t]$, $v_{i}$ has at most $2\sqrt{t}$ neighbors among $\{v_{j}\colon  j>i\}$. Consider the  $2\sqrt{t}$ neighbors of $v_{1}$ and among them there can be at most $2^{2\sqrt{t}}$ cliques and can be enumerated in time $\Oh(2^{2\sqrt{t}})$. By iterating over $v_{i}$, we can enumerate all the non-trivial cliques  in $\overline{\Co{H}}$ in $\Oh(2t \cdot 2^{2\sqrt{t}})$ time. Indeed, for a non-trivial clique $C$, if $v_i$ is the first vertex in $C$ with respect to $\Co{D}$, that is all other vertices in $C$ appear after $v_i$ in  $\Co{D}$, then $C$ is enumerated when we enumerate all the cliques with respect to $v_i$ in our process.  This implies that the number of independent sets in   \Co{H}  
is upper bounded by $\Oh(2t \cdot 2^{2\sqrt{t}} + m) $ and the number of independent sets of size at least $2$ in   \Co{H}   is upper bounded by  $\Oh(2t \cdot 2^{2\sqrt{t}} ) $. Let $\mathbb{I}_{\geq 2}$ denote the family of independent sets of \Co{H} that have size at least $2$ -- the family of non-trivial independent sets. 

Thus, one potential algorithm is as follows. We first guess which agents are assigned non-trivial independent sets and which independent set.  That is, for each agent $a\in \Co{A}$, we guess  an independent set $I_a\in \mathbb{I}_{\geq 2} \cup \gamma$ ($\gamma$ is just to capture that the agent will not get non-trivial bundle). Let $\Co{A}' \subseteq \Co{A}$ be the set of agents for whom the guess is not $\gamma$. Let $(\Co{A}', \{I_a\}_{a\in\Co{A}'}) $ denote the corresponding guess for the agents in $\Co{A}'$. 
We first check that the guess for $\Co{A}' $  is {\em correct}. Towards that we check that for each $a_1,a_2 \in \Co{A}'$, $I_{a_1}\cap I_{a_2}=\emptyset$ and for each $a \in \Co{A}'$, we have that $\sum_{i\in I_a}\util_{a}(i) \geq \eta$. For the complete variant, we additionally need to check that $\cup_{a\in {\Co{A}}}I_a$ is the set of all non-isolated vertices in $\overline{\Co{H}}$. Since, $|\mathbb{I}_{\geq 2}|$ is upper bounded by $\Oh(2t \cdot 2^{2\sqrt{t}})$, the number of guess are upper bounded by  $\Oh((2t \cdot 2^{2\sqrt{t}}+1)^n)$. For each correct  guess $(\Co{A}', \{I_a\}_{a\in\Co{A}'}) $, we solve the remaining problem in polynomial time by invoking the algorithm for \sbpcffa or \sbccffa for $s=1$ in Theorem~\ref{thm:np-hardness-results}, depending on whether we are solving complete variant or partial. 
Let $\Co{A}^\star=\Co{A}\setminus \Co{A}'$, $\Co{I}^\star=\Co{I}\setminus (\bigcup_{a\in\Co{A}'} I_a)$, and $s=1$.   Then, we apply Theorem~\ref{thm:np-hardness-results} on the following instance: $(\Co{A}^\star,\Co{I}^\star,(\util_i)_{a_i \in \Co{A}^\star},\Co{H}[\Co{I}^\star],s,\eta)$, here $\Co{H}[\Co{I}^\star]$ is a clique. This implies that the total running time of the algorithm is upper bounded by  $\Oh((2t \cdot 2^{2\sqrt{t}}+1)^n(n+m)^{\Oh(1)})$. 

%

\begin{enumerate}[resume]
\item An \fpt algorithm with respect to $t$.
\end{enumerate}
Since this algorithm also works for all three variants of \cffa, we write \cffa in this proof, for convenience.

Let $\X{J}=(\Co{A}, \I, \{\util_{i}\}_{i\in [n]}, \Co{H}, \eta)$ be a given instance of \cffa.  
Let $V_{>1}$ be the set of vertices which are part of independent sets of size at least $2$. As argued for the subexponential algorithm, $|V_{>1}|\leq 2t$. Thus, there are at most $t$ bundles that contains more than one \object.  We guess partition of the \objects in $V_{>1}$  into at most $t+1$ sets, $\mathsf{notLarge, Large_1, \ldots, Large_\ell}$, where $\ell \leq t$, such that each set is an independent set in $\Co{H}$. The set $\mathsf{notLarge}$ might be empty. This contains the set of \objects in $V_{>1}$ which will not be part of any bundle of size at least $2$. The size of $\mathsf{Large}_j$ is at least $2$, for every $j\in [\ell]$, and each $\mathsf{Large}_j$ will be assigned to distinct agents in the solution (the size should be adjusted for size-bounded variant). Next, we construct a complete graph $\Co{H}'$ as follows. For each $\mathsf{Large}_j$, where $j\in [\ell]$, we have a vertex $\mathsf{Large}_j$ in $\Co{H}'$, and $\util'_i(\mathsf{Large}_j)=\sum_{x\in {\mathsf{Large}_j}}\util_i(x)$, where $i\in [n]$. If a vertex $v \in \Co{H}$ does not belong to any $\mathsf{Large}_j$, where $j\in [\ell]$, then add the vertex $v$ to $\Co{H}'$, and $\util'_i(v)=\util_i(v)$, where $i\in [n]$. Let $\X{J}'=(\Co{A}, \I, \{\util'_{i}\}_{i\in [n]}, \Co{H}', \eta)$ be the new instance of \cffa where $\Co{H}'$ is a complete graph. We again invoke the polynomial time algorithm for \sbpcffa or \sbccffa for $s=1$ in Theorem~\ref{thm:np-hardness-results}, depending on whether we are solving complete variant or partial, on the instance $\X{J}'$ with $s=1$, and return the ``yes'', if the solution exists.   
If the algorithm does not find the assignment for any guessed partition, then we return ``no''. The running time follows from \Cref{thm:np-hardness-results} and the fact that there are at most $(2t)^{t+1}$ possible partitions. 

Next, we prove the correctness of the algorithm. Suppose that $\X{J}$ is a yes-instance of \cffa and $\phi$ be one of its solution. Let $\Co{B}=\{\phi^{-1}(a)\colon a\in \Co{A} \text{ and } |\phi^{-1}(a)|\geq 2\}$. Clearly, sets in $\Co{B}$ are disjoint subsets of $V_{>1}$. Let $\Co{B}=\{B_1,\ldots,B_\ell\}$. Let $X\subseteq V_{>1}$ contains all the \objects that do not belong to any set in $\Co{B}$.  Since we try all possible partitions of $V_{>1}$, we also tried $B_1,\ldots,B_\ell,X$. Without loss of generality, let $B_i$ is assigned to $a_i$ under $\phi$. Thus, in the graph $\Co{H}'$, there is a matching $M=\{a_iB_i \in i\in [\ell]\} \cup \{a\phi^{-1}(a) \colon |\phi^{-1}(a)|=1\}$ that saturates $\Co{A}$ in the proof of Theorem~\ref{thm:np-hardness-results} for $s=1$. Thus, the algorithm returns ``yes''. The correctness of the other direction follows from the correctness of Theorem~\ref{thm:np-hardness-results} for $s=1$ case and the construction of the instance $\Co{J}'$.
\end{proof}




\section{Proof of \Cref{thm:twocliques}}
\twocliques*
\begin{proof}~
\begin{enumerate}
\item Polynomial time algorithm.
\end{enumerate}
\subparagraph{For the partial variant.}
We first present the algorithm for partial variant. Note that here $s\leq 2$. Theorem~\ref{thm:np-hardness-results} consider the case $s=1$. Thus, here we can assume that $s=2$. Hence, \pcffa and \sbpcffa variants are same in this case.

Let $\X{J}=(\Co{A}, \I, \{\util_{i}\}_{i\in [n]}, \Co{H}, \eta)$  be a given instance of \pcffa. Since the utility functions are uniform, we skip the agent identification from the subscript of utility function, i.e., instead of writing $\util_{i}$ for the utility function of agent $a_i$, we will only use $\util$. 

We note that if there exists a \object $z$ such that $\util(z) \geq \eta$, then there exists a  solution that assign it to some agent. Since the utility functions are uniform, it can be assigned to any agent. Let $\Co{I}_{\sf HighUtility}\subseteq \Co{I}$ be the set of \items whose utility is at least $\eta$, i.e., $\Co{I}_{\sf HighUtility} =\{z\in \Co{I} \colon \util(z)\geq \eta\}$. Let $\Co{I}_{\sf LowUtility}= \Co{I}\setminus \Co{I}_{\sf HighUtility}$. If  $|\Co{I}_{\sf HighUtility}|  \geq n$, then every agent get an \object from the set  $\Co{I}_{\sf HighUtility}$, and it is a solution. Otherwise, there are $|\Co{A}|-|\Co{I}_{\sf HighUtility}|$ agents to whom we need to assign bundles of size two. Let ${\sf IS}$ denote the set of all independent sets of size two in $\Co{H}[\Co{I}_{\sf LowUtility}]$. Thus, ${\sf IS}$ has size at most $m^2$. 

Next, we construct a graph, denoted by  $\Co{\widehat{H}}$,  on the \items in $\Co{I}_{\sf LowUtility}$ where there is an edge between vertices $a$ and $b$ if $\{a, b\} \in {\sf IS}$ and 
$u(a) + u(b)\geq \eta$. In this graph we compute a maximum sized matching, denoted by \Co{M}. If its size is less than  $n-|\Co{I}_{\sf HighUtility}|$, then we return the answer ``no''. Otherwise, we return answer ``yes'' and create an assignment as follows: if $(a,b) \in \Co{M}$, then we have a bundle containing $\{a,b\}$. We create exactly $n - |\Co{I}_{\sf HighUtility}|$ such bundles of size two and discard the others. These bundles along with the singleton bundles from $\Co{I}_{\sf HighUtility}$ yield our assignment for $n$ agents.

Clearly, this graph has $m$ vertices and at most $m^2$ edges. Thus, the maximum matching can be found in polynomial time. Next, we prove the correctness of the algorithm.

\noindent{\bf Correctness:} If the algorithm returns an assignment of \items to the agents, then clearly, for every agent the utility from the bundle is at least $\eta$. Every bundle is also an independent set in $\Co{H}$. Moreover, if a bundle is of size one, then the singleton \object is clearly an element of the set  $\Co{I}_{\sf HighUtility}$; otherwise, the bundle represents an independent set of size two in {\sf IS} whose total utility is 
 at least $\eta$. There are $n$ bundles in total,  exactly $|\Co{I}_{\sf HighUtility}| $ bundles of size one and at least $n - |\Co{I}_{\sf HighUtility}|$ bundles of size two. 
 
 In the other direction, suppose that $\phi$ is a solution to $\X{J}$. Let $a$ be an agent whose bundle size is two and $\phi^{-1}(a)$ contains at least one \object from $\Co{I}_{\sf HighUtility}$, say $z$. Update the assignment $\phi$ as follows: $\phi^{-1}(a)=\{z\}$. Note that $\phi$ is still a solution to $\X{J}$.   Let $\Co{A}_{1} \subseteq \Co{A}$ be the set of agents such that for every agent $a\in \Co{A}_{1}$, $|\phi^{-1}(a)|=1$, i.e., the bundle size assigned to every agent in $\Co{A}_{1}$ is $1$. Clearly, $\phi^{-1}(\Co{A}_{1})\subseteq \Co{I}_{\sf HighUtility}$. Let ${\sf rem}= \Co{I}_{\sf HighUtility}\setminus \phi^{-1}(\Co{A}_{1})$, the set of unassigned ``high value'' \items. Suppose that ${\sf rem}\neq \emptyset$. 

Let $\Co{A}'' \subseteq \Co{A}\setminus \Co{A}_{1}$ be a set of size $\min\{|\Co{A}\setminus \Co{A}_{1}|,|{\sf rem}|\}$. Let $\Co{A}'' =\{a_1,\ldots,a_\ell\}$ and ${\sf rem}=\{z_1,\ldots,z_q\}$, where clearly $\ell \leq q$. 
Update the assignment $\phi$ as follows: for every $i\in [\ell]$, $\phi(z_i)=\{a_i\}$. Clearly, $\phi$ is still a solution of $\X{J}$. 
We note that there are only two cases: either $\Co{A}=\Co{A}_{1}\cup \Co{A}''$ or $\Co{\tilde{A}}=\Co{A}\setminus (\Co{A}_{1}\cup \Co{A}'')$ is non-empty. 

If $\Co{A}=\Co{A}_{1}\cup \Co{A}''$, then we have that the disjoint union of $\phi^{-1}(\Co{A}_{1}) \cup \phi^{-1}(\Co{A}'')\subseteq  \Co{I}_{\sf HighUtility}$. In other words,  $|\Co{I}_{\sf HighUtility}| \geq n$, and  
so there exists a solution in which every bundle is of size one and contains an element from  $\Co{I}_{\sf HighUtility}$. 

Otherwise, let $\Co{\tilde{A}}=\Co{A}\setminus (\Co{A}_{1}\cup \Co{A}'')$. Clearly, each of the \items in $\Co{I}_{\sf HighUtility}$ are assigned to agents in $\Co{A}_{1}\cup \Co{A}''$ and subsets of \items in $\Co{I}_{\sf LowUtility}$ are assigned to agents in $\Co{\tilde{A}}$. In other words, there exist $|\Co{I}_{\sf HighUtility}|$ bundles of size one and $n- |\Co{I}_{\sf HighUtility}|$ bundles of size two. Specifically for the latter, we know that each of the bundles is an independent set, they are pairwise disjoint, and the total utility within each bundle is at least $\eta$. Thus, the members of each bundle share an edge in the graph $\Co{\widehat{H}}$ and the bundles themselves form a matching in the graph. Thus, our algorithm that computes a maximum matching in $\Co{\widehat{H}}$ would find a matching of size at least $n- |\Co{I}_{\sf HighUtility}|$. Hence, given the construction of the assignment from such a matching, we can conclude that our algorithm would return an assignment with the desired properties.  

\subparagraph{For the complete variant.} Note that the value of $s=2$ as argued above. Thus, \ccffa and \sbccffa variants are the same. 

Let $\X{J}=(\Co{A}, \I, \{\util_{i}\}_{i\in [n]}, \Co{H}, \eta)$  be a given instance of \ccffa, where $\Co{H}$ is a cluster graph with two cliques, and the utility functions are uniform. Let $C_1$ and $C_2$ be two cliques in $\Co{H}$. The definitions of $\Co{I}_{\sf HighUtility}$ and $\Co{I}_{\sf LowUtility}$ are same as defined earlier. We first guess the number of items $\ell_1=|\Co{I}_{\sf HighUtility} \cap C_1|$ and $\ell_2=|\Co{I}_{\sf HighUtility} \cap C_2|$ that are assigned as singletons in a solution. Next, we construct an auxiliary graph $\widehat{H}$ as follows: the vertex set of $\widehat{H}$ contains all the items in $\Co{I}_{\sf LowUtility}$, any $|\Co{I}_{\sf HighUtility}|-\ell_1$ items from the set $\Co{I}_{\sf HighUtility} \cap C_1$, and any $|\Co{I}_{\sf HighUtility}|-\ell_2$ items from the set $\Co{I}_{\sf HighUtility} \cap C_2$; for any two items $x\in C_1$ and $y\in C_2$, $xy$ is an edge if $\util(x)+\util(y)\geq \eta$. We find a perfect matching $M$ in $\widehat{H}$ in polynomial time. If $\ell_1+\ell_2+|M|=n$, we return ``yes''. If the algorithm does not return ``yes'' for choice of $\ell_1$ and $\ell_2$, then we return ``no''. 

\noindent{\bf Correctness:} Note that if we return ``yes'', then clearly, it's a yes-instance of the problem as we can assign pair of matched items to any $|M|$ agents, $\ell_1$ items from $(\Co{I}_{\sf HighUtility} \cap C_1)\setminus V(\widehat{H})$ and $\ell_1$ items from $(\Co{I}_{\sf HighUtility} \cap C_2)\setminus V(\widehat{H})$. Clearly, all the constraints are satisfied. For the other direction, suppose that $\Co{J}$ is a yes-instance of \ccffa, and let $\phi\colon {\Co{I}} \rightarrow {\Co{A}}$ be one of its solution. Let $\Co{A}' \subseteq \Co{A}$ be the set of agents that are assigned single item. Clearly, $\phi^{-1}({\Co{A}'})\subseteq \Co{I}_{\sf HighUtility}$. Let $\ell_1=|\phi^{-1}({\Co{A}'})\cap C_1|$ and $\ell_2=|\phi^{-1}({\Co{A}'})\cap C_1|$. Note that all the items in $\Co{I}_{\sf HighUtility}$ are identical in the sense that they can be assigned to any agent as singleton and achieve the target value. Clearly, we guessed these $\ell_1$ and $\ell_2$ as well. Note that every agent in $\Co{A}\setminus \Co{A}'$. Thus, the auxiliary graph $\widehat{H}$ constructed for these two choices of $\ell_1$ and $\ell_2$ has a perfect matching of size $|\Co{A}\setminus \Co{A}'|$, hence we return ``yes''. 

Next, we prove the hardness result.
\begin{enumerate}[resume]
\item {\sf NP}-hardness
\end{enumerate}
The proof is similar to the {\sf NP}-hardness proof  in Theorem~\ref{thm:np-hardness-results} for the size bounded variant. The only difference is the construction of the conflict graph \calH. Here, we add edges between all the \items corresponding to every element in $X$ and $D$. Furthermore, we add edges between all the \items corresponding to every element in $Y$. Thus, it is a cluster graph with two cliques. The rest of the construction and the proof is the same.
\end{proof}

\section{Proof of \Cref{thm:kernel-partial-complete}}\label{sec:kernel-partial-complete}
\partialcompletekernel*
\begin{proof}~
\begin{enumerate}
\item Graphs of bounded neighborhood diversity.
\end{enumerate}
First, we focus initially on the \completefairdiv problem and  present a polynomial kernel. Subsequently, we make slight adjustments to our marking procedure to achieve the same outcome for the \partialfairdiv problem. To begin, we reaffirm the definition of {\em neighborhood diversity} within our specific context.
Given a graph, we say two vertices $u,v$ are of the same type if $N(v)\setminus\{u\}=N(u)\setminus\{v\}$. Consider $\tau$ as the number of types of vertices (also known as the neighborhood diversity) in the conflict graph $G$, with vertex groups $V_1\uplus V_2\uplus\ldots\uplus V_\tau$. Leveraging the intrinsic characteristic of neighborhood diversity, each group $V_i$, where $i\in[\tau]$, is either a clique or constitutes an independent set. Moreover, for any pair $i,j\in[\tau]$ where $i\neq j$, either every vertex in $V_i$ is adjacent to each vertex in $V_j$ or none of them are adjacent. Observe that if there exists a clique of size at least $n+1$ in the conflict graph, then the provided instance of \completefairdiv is a \no-instance. Consequently, we proceed with the assumption that each $V_i$ that forms a clique has a size of at most $n$. Without loss of generality let the groups  $V_i$s where $i\in[\tau']$ constitutes independent sets and each of the remaining groups constitutes a clique. Let $V'=\bigcup _{i=\tau'+1}^{\tau} V_i$ be the set of vertices of the cliques. Building upon this, we implement the subsequent marking scheme that aids us in constraining the number of vertices within each group $V_i$ that constitutes an independent set (as opposed to a clique).

 \begin{center}
     \begin{enumerate}
         \item[--] For a group $V_i$ where $i\in[\tau']$, initialize $B_i(j)=\emptyset$, $\forall j\in[n]$. 
         \item[--] For an arbitrary vertex $v\in V_i$, if there exists $j\in[n]$, such that $\util_j(v)>0$ and $|B_i(j)|<\eta n$, include $v$ into $B_i(j)$.
          
         \item[--] Delete $v$, if $v$ is not included in any $B_i(j)$, where $i\in[\tau']$ and $j\in[n]$.
			     
     \end{enumerate}
 \end{center}

Following the marking procedure, the size of any bag  $B_i(j), i\in[\tau']$ and $j\in[n]$ is bounded by $\eta n$. Therefore, the number of remaining vertices in each group $V_i$ is at most $\eta n^2$.
\begin{claim}\label{claim:nbrdiver}
$(\calA, \calI, \{\util_i\}_{a_i\in \calA}, \Co{H}, \eta)$ is a \yes-instance if and only if $(\calA, \calI, \{\util_{i}\}_{i\in [n]}, \Co{H}[V^*], \eta)$ is a \yes-instance, where $V^*=V' \bigcup_{i\in[t],j\in[n]}B_i(j)$.
\end{claim}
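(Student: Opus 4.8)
The plan is to prove the two implications separately, using the marked bags $B_i(j)$ as a reservoir of mutually interchangeable vertices. I will lean on two consequences of neighbourhood diversity: each independent-set group $V_i$ (with $i\in[\tau']$) is pairwise non-adjacent, and every such $V_i$ is either completely joined to, or completely non-adjacent from, every other group. Hence ``being compatible with $V_i$'' is a property of an agent's whole bundle rather than of individual vertices, and the vertices of a single $V_i$ are interchangeable as far as conflicts are concerned.

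For the backward direction ($\Co{H}[V^*]$ is a \yes-instance $\Rightarrow$ $\Co{H}$ is), I would start from a feasible assignment $\phi^*\colon V^*\to\calA$ and extend it to the deleted vertices one group at a time. Adding items never decreases satisfaction, so each threshold $\eta$ stays met; the only thing to verify is conflict-freeness. For a deleted vertex $v\in V_i$ I would locate an agent whose current bundle avoids $N(V_i)$ and place all deleted vertices of $V_i$ there, which is legitimate since $V_i$ is independent. When $V_i$ still retains a marked vertex $w$, the agent holding $w$ works: its bundle cannot meet $N(V_i)$, since any such vertex would conflict with $w$.

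For the forward direction ($\Co{H}$ is a \yes-instance $\Rightarrow$ $\Co{H}[V^*]$ is), from a solution $\phi$ I would fix, for each agent $a_j$, a minimal witness $Y_j\subseteq\phi^{-1}(a_j)$ with $\sum_{x\in Y_j}\util_j(x)\ge\eta$; since every positively-valued item contributes at least $1$, $Y_j$ contains at most $\eta$ positively-valued vertices, so the total demand on any group $V_i$ is $\sum_j |Y_j\cap V_i|\le\eta n$. I then re-route each witness to use only marked vertices. Per group $V_i$, I split the agents into those whose $V_i$-witness is already entirely marked (kept unchanged) and those using a deleted useful vertex of $V_i$; for the latter, $B_i(j)$ must be full, i.e.\ $|B_i(j)|=\eta n$, so I re-serve $a_j$ from $B_i(j)$. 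The restriction of the modified assignment to $V^*$ is then the desired solution.

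The main obstacle is the disjointness bookkeeping of this re-routing: showing that the bound $\eta n$ on each bag is simultaneously large enough to absorb an agent's demand and small enough that the (disjoint) bags never collide. The clean way is to observe that, within $V_i$, the re-served agents draw from pairwise disjoint bags, while the vertices of a bag $B_i(j)$ that are blocked by the kept agents number at most the kept demand, which is at most $\eta n$ minus the re-served demand, hence at most $\eta n-|Y_j\cap V_i|$; this leaves at least $|Y_j\cap V_i|$ free vertices in $B_i(j)$ to serve $a_j$. A final subtlety I would treat explicitly is a group $V_i$ deleted in its entirety (all of its vertices useless for every agent): such vertices influence no threshold, so the backward extension only needs one $V_i$-compatible agent, and I would dispatch this case by a dedicated preprocessing step that settles the placeability of all-zero groups before the reservoir argument is applied.
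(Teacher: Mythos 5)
Your backward direction is essentially the paper's argument (hand all deleted vertices of $V_i$ to an agent already holding a vertex of $V_i$, using the uniformity of adjacency to $V_i$), and you are right to flag the corner case of a group deleted wholesale, which the paper silently skips by assuming some agent is assigned a vertex of $V_j\cap V^*$. The genuine gap is in your forward direction. When you re-route an agent $a_j$ whose minimal witness $Y_j$ uses a deleted vertex of $V_i$, you show that $B_i(j)$ retains at least $|Y_j\cap V_i|$ free vertices and let these ``serve $a_j$''. But $|Y_j\cap V_i|$ marked vertices only guarantee utility $|Y_j\cap V_i|$ for $a_j$ --- membership in $B_i(j)$ merely certifies \emph{positive} utility --- whereas the discarded set $Y_j\cap V_i$ may have contributed up to $\eta\cdot|Y_j\cap V_i|$, since utilities range over $\{0,1,\dots,\eta\}$, not $\{0,1\}$. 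Concretely: if $Y_j\cap V_i$ is a single deleted vertex $v$ with $\util_j(v)=\eta$ while every vertex of $B_i(j)$ has utility $1$ for $a_j$, your replacement hands $a_j$ one unit of utility in place of $\eta$, and the re-routed witness no longer meets the threshold. You preserved cardinality where you needed to preserve (or outright re-achieve) utility.

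The repair is exactly what the paper does, and your own counting already supplies it: the vertices of $B_i(j)$ occupied by the \emph{other} agents' minimal witnesses number at most $\sum_{l\neq j}|Y_l|\le(n-1)\eta$, so at least $\eta n-(n-1)\eta=\eta$ vertices of $B_i(j)$ are unclaimed. Assign \emph{all $\eta$ of them} to $a_j$: each contributes at least $1$, so this alone re-achieves the threshold $\eta$ regardless of what $Y_j\cap V_i$ used to contribute, and independence survives because $a_j$'s bundle already contained a vertex of $V_i$ (hence avoids $N(V_i)$) and $V_i$ itself is independent. With that substitution your forward direction coincides with the paper's, which takes a minimal sub-solution, restricts it to $V^*$, and tops up any deficient agent with $\eta$ unclaimed marked vertices from the offending group.
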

\noindent
{\em Proof:} In the forward direction, let's consider a \yes-instance $(\Co{A}, \Co{I},\{\util_{i}\}_{i\in [n]}, \Co{H}, \eta)$ of \completefairdiv accompanied by a solution $\mathcal{S'}$. This solution encapsulates a \textit{minimal} set denoted as $\mathcal{S}=(S_1,\ldots, S_n)$, in which each $S_i$ is assigned to the agent $a_i$ and $\sum_{v\in S_i}\util_{i}(v)\geq \eta$. In our context, a set is classified as \textit{minimal} if it has no proper subset that is \textit{minimal}. 

Next, we construct an assignment completely contained in $V^*$ where each agent receives an independent set and has its utility requirement of $\eta$ satisfied. If the assignment $\mathcal{S}_{V^*}=(S_1\cap V^*,\ldots,S_n\cap V^*)$, where each $S_i\cap V^*$ is allocated to agent $a_i$, does not meet the utility requirements, then there exists an index $i\in [n]$ such that $\sum_{v\in S_i\cap V^*}\util_{i}(v)<\eta$. Independence is not a concern as $S_i\cap V^*$ induces an independent set. However, in such a scenario there exists a (deleted) vertex $v\in (S_i\cap B_j)\setminus V^*$, where $B_j=\Sc\cap V_j$. This implies that $\eta n$ many vertices from $B_j$ were already marked in the marking scheme (with positive utilities) for agent $a_i$. Among these vertices, at most $(n-1)\eta$ may be allocated to other agents within the minimal solution. Hence within $(V_j\cap V^*)\setminus \bigcup_{l\in[n], l\neq i} S_l$, there are at least $\eta$ vertices that have positive utility values for agent $a_i$ and are not assigned to any other agents in $\Sc$. We construct an alternate solution by assigning all these vertices to agent $a_i$, thereby fulfilling its utility requirement of $\eta$. Employing this procedure exhaustively leads to an assignment with all assigned vertices in $V^*$ where each agent receives an independent set and has its utility requirement of $\eta$ satisfied. This implies $(\Co{A}, \Co{I},\{\util_{i}\}_{i\in [n]}, \Co{H}[V^*], \eta)$ is indeed a \yes-instance.

The converse direction is straightforward to establish. Let $\mathcal{S}^*=(S_1,\ldots,S_n)$ represent a solution to $(\Co{A}, \Co{I},\{\util_{i}\}_{i\in [n]}, \Co{H}[V^*], \eta)$. Considering that there must exist an agent that is assigned a vertex from $V_j\cap V^*$, we can extend the solution $\mathcal{S}'$ to a solution for $(\Co{A}, \Co{I},\{\util_{i}\}_{i\in [n]}, \Co{H}, \eta)$ by assigning all the additional vertices from $V_j$ to the respective agent. This confirms the correctness of the reduction. \hfill $\diamond$

Thus, we obtain a polynomial kernel when the conflict graph has a bounded neighborhood diversity.
It's important to observe that for \partialfairdiv we can not bound the size of the maximum clique by $n$ using the same argument. However, we can attain a comparable bound as follows. In each clique, for every agent, we retain/mark $n^2$ vertices with the highest utility values corresponding to the agent. And following this marking scheme we discard the remaining vertices. The correctness of this marking procedure is evident from the fact that any agent can pick at most one vertex from such a clique and since all vertices are of the same type, it prefers/chooses to pick one with utility value as large as possible. This marking procedure (for groups that are cliques), combined with the marking procedure (for groups that induce independent sets) for the \completefairdiv problem, produces a kernel of the same asympotic size ($\tau \eta n^2$) for the \partialfairdiv problem.



\begin{enumerate}[resume]
\item Graphs of bounded degree ($d<n$).
\end{enumerate}
Now, we provide a kernel for the \partialfairdiv where the maximum degree of the conflict graph is less than $n$. Subsequently,  we show that the same kernelization algorithm can be extended to the \completefairdiv problem. Towards that let $d$ be the maximum degree of the graph $G$. We proceed by describing the following marking procedure. 
  \begin{center}
     \begin{enumerate}
         \item[--] For each agent $a_i$, initialize $B_i=\emptyset$, $\forall$ $i\in[n]$. 
         \item[--] For an arbitrary vertex $v\in V$, if there exists $i\in[n]$, such that $\util_i(v)>0$ and $|B_i|<d\eta n$, include $v$ into $B_i$.

         \item[--] Delete $v$, if $v$ is not included in any $B_i$ where $i\in[n]$.
			     
     \end{enumerate}
 \end{center}

We reiterate that the input instance of \partialfairdiv problem is denoted by $(\calA, \calI, \{\util_{i}\}_{i\in [n]}, \Co{H}, \eta)$.
Let  $V^*=\bigcup_{i\in[n]}B_i$. Below, we claim $V^*$ to be our desired kernel.

\begin{claim}\label{lemma:kernel_degree}
$(\calA, \calI, \{\util_{i}\}_{i\in [n]}, \Co{H}, \eta)$ is a \yes-instance if and only if $(\calA, \calI, \{\util_{i}\}_{i\in [n]}, \Co{H}[V^*], \eta)$ is a \yes-instance.
\end{claim}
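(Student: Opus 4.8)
The plan is to prove the two directions separately, with the reverse direction essentially free and the forward direction doing all the work.

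For the reverse direction, suppose $(\calA, \calI, \{\util_i\}_{i \in [n]}, \Co{H}[V^*], \eta)$ is a \yes-instance witnessed by an assignment $\phi \colon V^* \to \calA$. Since $V^* \subseteq V(\Co{H})$ and \pcffa permits a partial assignment, the very same $\phi$ (leaving every deleted vertex unassigned) is feasible for the original instance: each bundle is still an independent set of $\Co{H}$ and still meets the threshold $\eta$. Hence the original instance is a \yes-instance, and no work is needed here.

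For the forward direction, the first step is to normalize a given solution. Starting from any feasible $\phi$, I would replace each bundle $S_i = \phi^{-1}(a_i)$ by a minimal sub-bundle, greedily discarding vertices while the utility stays at least $\eta$. Because every retained vertex has $\util_i \ge 1$, this forces $|S_i| \le \eta$ and hence bounds the total number of assigned vertices by $\eta n$. Among all such normalized solutions, I pick one using the fewest vertices outside $V^*$ (the deleted vertices), and claim this minimum is $0$. Suppose not: some bundle $S_i$ contains a deleted vertex $v$. As $v$ survives in a minimal bundle, $\util_i(v) \ge 1$, so when the marking procedure examined $v$ its bag $B_i$ was already saturated, i.e. $|B_i| = d\eta n$, every vertex of which has positive utility for $a_i$. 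I now \emph{rebuild} $a_i$'s bundle entirely inside $B_i$. The only vertices of $B_i$ that are unavailable are those already used by the other $n-1$ bundles, at most $\eta(n-1)$ of them; so $\Co{H}[B_i \setminus \bigcup_{j \ne i} S_j]$ has at least $d\eta n - \eta(n-1)$ vertices and maximum degree at most $d$, hence contains an independent set of size at least $(d\eta n - \eta(n-1))/(d+1)$, which is at least $\eta$ exactly when $n(d-1) \ge d$. Taking any $\eta$ of its vertices produces a new bundle $S_i' \subseteq B_i$ that is independent, disjoint from all other bundles, and of utility at least $\eta$. Replacing $S_i$ by $S_i'$ yields a feasible solution with strictly fewer deleted vertices, contradicting minimality. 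Thus some normalized solution lies entirely in $V^*$, so the kernel instance is a \yes-instance.

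The main obstacle is precisely this independence requirement during the rebuild: unlike the neighborhood-diversity case, the candidate replacements are not of a common type, so they cannot be thrown into a bundle freely. The degree bound is what tames this, guaranteeing that each already-chosen vertex forbids at most $d$ others; together with the $\eta(n-1)$ vertices locked by the remaining bundles, this is what forces the bag size $d\eta n$ and makes the arithmetic $n(d-1) \ge d$ go through in the relevant regime ($d \ge 2$, $n \ge 2$), while the degenerate cases $d \le 1$ or $n = 1$ leave $\Co{H}$ a matching or an edgeless graph and are handled directly. For the complete variant the reverse direction is no longer free, since every deleted vertex must also be placed; there one invokes $d < n$ to always find, for each such vertex, a bundle containing none of its at most $d$ neighbours, extending the same marking scheme to \ccffa.
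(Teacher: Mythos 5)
Your proof follows essentially the same route as the paper's: normalize to a minimal solution so every bundle has size at most $\eta$, observe that a surviving deleted vertex forces its bag $B_i$ to be saturated, and use the degree bound to extract an $\eta$-sized independent set inside $B_i$ avoiding the at most $\eta(n-1)$ vertices locked by the other bundles. If anything, your version is the more careful one: you use the correct greedy bound $|B_i|/(d+1)$ and order the steps (delete the occupied vertices first, then extract the independent set) so that the arithmetic $n(d-1)\ge d$ actually closes for $d,n\ge 2$, whereas the paper invokes an independent set of size $\lfloor m/d\rfloor$ in a max-degree-$d$ graph, which fails in general (e.g., disjoint triangles), and you rightly flag that the degenerate regimes $d\le 1$ and $n=1$ need separate treatment, a point the paper silently skips.
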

 {\em Proof:}   In the forward direction let $(\Co{A}, \Co{I},\{\util_{i}\}_{i\in [n]}, \Co{H}, \eta)$ be a \yes-instance. The key observation used here is, a graph with $m$ vertices and bounded degree $d$, has an independent set of size at least $\lfloor \frac{m}{d}\rfloor$. Since for each agent $a_i$, we mark $d\eta n$ many vertices (each with a positive utility for $a_i$), there is an independent set of size at least $\eta n$ among such marked vertices. Moreover, in any minimal solution, one agent is assigned at most $\eta$ many vertices (from such an independent set). Thus from such an independent set, at most $(n-1)\eta$ vertices are assigned to agents other than $a_i$. Hence we may assign at least $\eta$ many vertices (who are also independent) to agent $a_i$ with positive utilities, satisfying the utility requirement. Note that the aforementioned arguments hold true for all the other agents as well. Therefore $(\Co{A}, \Co{I},\{\util_{i}\}_{i\in [n]}, \Co{H}[V^*], \eta)$ is indeed a \yes-instance.

The reverse direction follows from the fact that $V^*\subseteq V$. Any satisfying assignment for \partialfairdiv on a subgraph remains a satisfying assignment on the original graph. 
\hfill $\diamond$


\noindent
When we focus on the \completefairdiv problem, it's important to note that we can bound the size of a maximum clique by $n$ for a \yes-instance. And, the implication of $d<n$ from it is used to design a kernel for the \completefairdiv by making use of the marking procedure described above. Let $\mathcal{S}=(S_1,\ldots,S_n)$ be any solution for the instance $(\calA, \calI, \{\util_{a}\}_{a\in \calA}, \Co{H}, \eta)$ of \partialfairdiv. Consider any vertex $v$ that was either deleted during the marking procedure ($v \in V\setminus V^*$) or belongs to $V^*\setminus S$. Since $\deg (v)< n$, there must exist a set $S_i$ that does not contain any neighbors of $v$. Hence we can assign $v$ to agent $a_i$, thus expanding the earlier solution by including $v$ into it, i.e., $\mathcal{S}'=(S_1,\ldots,S_i\cup \{v\}, \ldots, S_n)$ for the instance $(G[V^* \cup \{v\}],n,\eta,\widehat{s}+1,\{\util_{i}\}_{i\in [n]})$. An exhaustive application of this  expansion procedure results in a solution for the instance $(G,n,\eta,m,\{\util_{i}\}_{i\in [n]})$ of \completefairdiv. Hence we have a kernel of size $d\eta n^2$ for both the version  of the problem.



\begin{enumerate}[resume]
\item Graphs of bounded treewidth.
\end{enumerate}
Due to~\cite{Jansan}, there is no polynomial kernel for the \mwis problem when paramaterized by vertex cover ({\sf vc}). \partialfairdiv problem is essentially the \mwis when $k=1$, Hence \partialfairdiv problem does not admit a polynomial kernel when parameterized by $\vc+n$ and also for the parameter ${\sf tw}+n$. Moreover, note that \kcol has no polynomial kernel parameterized by the combined parameter {\sf tw}+$n$, as there are straight-forward \textit{AND-cross-compositions} from the problem to itself ~\cite{cygan2015parameterized}.
Consequently, the same conclusion extends to the \completefairdiv problem even for parameter $\tw+\eta(=0)+n$
\end{proof}

\hide{
\textcolor{red}{While in \Cref{degeneracy}, the existence of any polynomial kernel for \completefairdiv w.r.t. the parameter $d+\eta +n$ is refuted, we provide polynomial kernels for the parameters of $\chi+\eta +n$ and $r+\eta +n$ where $\chi$ is the chromatic number of $\Co{H}$ and $\Co{H}$excludes $K_r$, respectively. And as mentioned the key property that is useful in designing these polynomial kernels for \partialfairdiv is that every large graph with a bounded $\chi$ or a bounded $r$ contains a \emph{reasonably large} independent set.
}

}

\section{Proof of \Cref{thm:kernel-contrast}}
\contrastkernel*
\begin{proof}
This theorem discusses the contrast of results between the partial version of the problem with the complete version. We first consider \pcffa and provide kernels for respective parameters and then show the impossibility results for \ccffa.
\begin{enumerate}
\item Graphs of maximum degree $d>n$.
\end{enumerate}
Observer that in proof of \Cref{thm:kernel-partial-complete}, we have not used the restriction on the maximum degree to get the kernel for the \partialfairdiv. Thus the same result holds true in this case as well. However, \Cref{lemma:chromatic-clique-degree hard} refutes any possibility of getting a kernel for the \completefairdiv problem.  
\begin{enumerate}[resume]
\item Graphs of bounded chromatic number ($\chi$) and excluding large cliques ($K_r$).
\end{enumerate}
We employ a marking scheme similar to that used for graphs with bounded degree (see \Cref{sec:kernel-partial-complete}). Let $\Co{H}_\chi$ denote the collection of graphs with a chromatic number not exceeding $\chi$. The crucial insight that any set of $\chi \cdot b$ vertices inherently encompasses an independent set with a size of at least $b$. Building upon this pivotal observation, we establish the subsequent marking procedure. 
  \begin{center}
     \begin{enumerate}
         \item[--] For each agent $a_i$, initialize $B_i=\emptyset$, $\forall$ $i\in[n]$. 
         \item[--] For an arbitrary vertex $v\in V$, if there exists $i\in[n]$, such that $\util_i(v)>0$ and $|B_i|<\chi\eta n$, include $v$ into $B_i$.

         \item[--] Delete $v$, if $v$ is not included in any $B_i$ where $i\in[n]$.
			     
     \end{enumerate}
 \end{center}
The safeness of the marking procedure follows in a manner analogous to that of Lemma~\ref{lemma:kernel_degree}. However, a partial assignment may not be extendable to a complete one in this context.

    When the conflict excludes large clique $K_r$, we use the {\em Ramsey's Theorem} as a pivotal tool to design the desired kernel.
    \begin{proposition}
       Any graph $G$ of size $R(r,r_i)$ contains either a clique of size $r$ i,e., $K_r$ or an independent set of size $r_i$ where $R(r,r_i)$ is $\Oh((r+r_i)^r)$. 
    \end{proposition}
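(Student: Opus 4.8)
The plan is to prove this as the classical two-colour Ramsey theorem together with the standard upper bound on the Ramsey number, reading edges as the ``clique'' colour and non-edges as the ``independent set'' colour. First I would define $R(r,r_i)$ to be the least integer $N$ such that every graph on $N$ vertices contains either a clique $K_r$ or an independent set of size $r_i$; the substance of the proposition is then to show that this quantity is finite and bounded by $\Oh((r+r_i)^r)$. The argument has two ingredients: a recurrence that controls $R(r,r_i)$, and a routine estimate that turns the resulting closed form into the claimed polynomial bound.

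For the recurrence, I would establish
\[
R(r,r_i) \le R(r-1,r_i) + R(r,r_i-1),
\]
with base cases $R(1,r_i)=R(r,1)=1$, since a single vertex is simultaneously a clique of size $1$ and an independent set of size $1$. To prove the recurrence, take any graph $G$ on $N=R(r-1,r_i)+R(r,r_i-1)$ vertices and fix an arbitrary vertex $v$. The remaining $N-1$ vertices split into the neighbourhood $N(v)$ and the non-neighbourhood $V(G)\setminus N[v]$, so by the pigeonhole principle either $|N(v)| \ge R(r-1,r_i)$ or $|V(G)\setminus N[v]| \ge R(r,r_i-1)$. In the first case, $G[N(v)]$ contains either an independent set of size $r_i$, which already suffices, or a clique of size $r-1$, which together with $v$ yields $K_r$. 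In the second case, $G[V(G)\setminus N[v]]$ contains either a clique $K_r$, which suffices, or an independent set of size $r_i-1$, which together with $v$ gives an independent set of size $r_i$. In every case the desired structure appears, establishing the recurrence.

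Solving the recurrence by induction on $r+r_i$ gives the closed form $R(r,r_i)\le \binom{r+r_i-2}{r-1}$, because these binomial coefficients satisfy Pascal's identity (matching the recurrence) and agree with the base cases. To convert this into the stated asymptotic bound I would apply the crude estimate $\binom{n}{k}\le n^k$ to obtain
\[
\binom{r+r_i-2}{r-1} \le (r+r_i-2)^{\,r-1} \le (r+r_i)^{\,r} = \Oh\big((r+r_i)^{r}\big),
\]
which is exactly the claimed bound. I do not anticipate any serious obstacle, as the whole derivation is textbook; the only points needing care are the pigeonhole split in the recurrence step (checking that $N-1 = R(r-1,r_i)+R(r,r_i-1)-1$ forces one of the two lower bounds) and the direction of the induction used to solve the recurrence and to verify the binomial closed form against Pascal's identity.
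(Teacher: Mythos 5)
Your proof is correct, and it supplies exactly the argument the paper leaves implicit: the paper states this proposition without proof, invoking it as the classical Ramsey theorem, and your derivation is the standard Erd\H{o}s--Szekeres argument, with the recurrence $R(r,r_i)\le R(r-1,r_i)+R(r,r_i-1)$, the closed form $\binom{r+r_i-2}{r-1}$ via Pascal's identity, and the crude estimate $\binom{n}{k}\le n^k$ yielding the $\Oh((r+r_i)^r)$ bound used in the paper's kernelization. All steps, including the pigeonhole split around a fixed vertex and the base cases $R(1,r_i)=R(r,1)=1$, check out.
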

 As before, our goal is to select a sufficient number of vertices with positive utilities for each agent, ensuring that there's a subset of these vertices that forms an independent set of the size at least $\eta n$. 
Let $\Co{H}_r$ be the family of graphs with no $K_{r}$. For each agent, we mark at most $R(r,\eta n)$ vertices with positive utilities (corresponding to these agents) in the marking scheme. But $G \in \Co{H}_r$, of size $R(r,\eta n)$ always contains an independent set of size least $\eta n$. Formally, the marking scheme is as follows.

  \begin{center}
     \begin{enumerate}
         \item[--] For each agent $a_i$, initialize $B_i=\emptyset$, $\forall$ $i\in[n]$. 
         \item[--] For an arbitrary vertex $v\in V$, if there exists $i\in[n]$, such that $\util_i(v)>0$ and $|B_i|<R(r,\eta n)$, include $v$ into $B_i$.

         \item[--] Delete $v$, if $v$ is not included in any $B_i$ where $i\in[n]$.		     
     \end{enumerate}
 \end{center}
\noindent 
The safeness of the marking procedure can be reasoned in a manner similar to the explanation provided in the proof of \Cref{lemma:kernel_degree}.
\subparagraph{{\sf NP}-hardness.}
 Now, we consider the \ccffa problem and provide a reduction from $3$-{\sc Coloring} on 4-regular planar graph to \ccffa. Note that finding the exact chromatic number of a 4-regular planar graph (which is $4$-colorable and $K_5$-free) is \nph \cite{DAILEY1980289}. Given an instance of 3-{\sc Coloring} on planar graphs, we can reduce it to an equivalent instance $(\calA, \calI, \{\util_{i}\}_{i\in [n]}, \Co{H}, \eta)$ of \completefairdiv, where $|\calA|=3, \calI= V(G), \Co{H}=G$  $\eta=1$, and $\util_i(x)=1$,  $\forall a_i\in \calA$, $\forall x\in V(G)$. Since the planar graph $\Co{H}$ excludes $K_5$, has a bounded chromatic number and a bounded degree, we have the following observation.

\begin{observation}
    \label{lemma:chromatic-clique-degree hard}
    \completefairdiv is \nph even when $n+\eta+\chi +r$ or $n+\eta+d$ is a constant.
\end{observation}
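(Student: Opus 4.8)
The plan is to exploit a clean dichotomy between the two variants. For the partial variant \pcffa, each of the three graph classes under consideration (bounded maximum degree, bounded chromatic number, and $K_r$-free) possesses a \emph{large independent set}, and this is exactly what lets a marking scheme safely discard all but a polynomial number of vertices. For the complete variant \ccffa, I would instead prove hardness by encoding a coloring problem whose structural parameters are all simultaneously bounded, so that no kernel (indeed no \fpt algorithm) can exist.

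For the \pcffa kernels I would reuse the marking template already developed for bounded degree in \Cref{thm:kernel-partial-complete}. For each agent $a_i$ maintain a bucket $B_i$ and greedily insert into $B_i$ any vertex $v$ with $\util_i(v) > 0$ until $|B_i|$ reaches a threshold $M$; then delete every vertex that landed in no bucket and set $V^\ast = \bigcup_{i \in [n]} B_i$. The threshold $M$ is dictated by how many vertices of the given class are needed to force an independent set of size $\eta n$: for maximum degree $d$ take $M = d\eta n$ (an $m$-vertex graph of maximum degree $d$ has an independent set of size $\geq m/d$); for chromatic number $\chi$ take $M = \chi \eta n$ (pigeonhole among the $\chi$ color classes); and for a $K_r$-free graph take $M = R(r, \eta n) = \calO((r + \eta n)^r)$ via Ramsey's theorem. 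Multiplying by the $n$ agents yields kernels of size $d\eta n^2$, $\chi \eta n^2$, and $\calO(n(r+\eta n)^r)$ respectively. For Part~1 I would simply remark that the bounded-degree marking never used the hypothesis $d < n$, so the identical kernel remains valid in the regime $d > n$.

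The key step is the safeness of the marking, which I would argue exactly as in \Cref{thm:kernel-partial-complete}. The reverse direction is immediate, since $V^\ast \subseteq V$ and any feasible partial assignment on an induced subgraph lifts to the whole graph. For the forward direction, take a \emph{minimal} solution of the original instance; the threshold $M$ guarantees that among the marked positive-utility vertices for $a_i$ there is an independent set of size at least $\eta n$, and since at most $(n-1)\eta$ of these can be occupied by the other agents, at least $\eta$ independent positive-utility vertices remain free to re-satisfy $a_i$. Performing this for every agent produces a feasible assignment living entirely inside $V^\ast$. I expect the main obstacle here to be conceptual rather than computational: articulating precisely \emph{why} this repair succeeds for \pcffa yet must fail for \ccffa. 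In the partial version the leftover vertices may simply go unassigned, whereas in the complete version every deleted vertex would still need a conflict-free home, and it is exactly this global feasibility requirement that makes the complete version as hard as coloring.

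To make the contrast rigorous I would prove the \ccffa hardness by reducing from \threecol restricted to $4$-regular planar graphs, which remains \nph. Given such a graph $G$, create three agents, one item per vertex, set $\Co{H} = G$, all utilities equal to $1$, and $\eta = 1$. A complete conflict-free assignment is then precisely a partition of $V(G)$ into three independent sets, i.e.\ a proper $3$-coloring, and conversely. Because a $4$-regular planar graph is $4$-colorable, $K_5$-free, and of maximum degree $4$, the reduced instance satisfies $\chi \leq 4$, $r \leq 4$, $d = 4$, $n = 3$, and $\eta = 1$, so every quantity appearing in $d + \eta + n$ and in $\chi + r + \eta + n$ is a constant. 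This single reduction simultaneously establishes both hardness claims and completes the dichotomy.
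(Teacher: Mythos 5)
Your hardness argument for the observation is exactly the paper's: a reduction from \threecol on $4$-regular planar graphs to \ccffa with three agents, unit utilities, $\eta=1$, and $\Co{H}=G$, after which planarity and $4$-regularity bound $\chi$, $r$, and $d$ by constants simultaneously. The proposal is correct and takes essentially the same approach as the paper.
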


It is easy to see that \Cref{lemma:chromatic-clique-degree hard} immediately eliminates the possibility of getting a kernel parameterized by $n+\eta +\chi + r$.
\end{proof}

\section{Conclusion}

In this article, we studied conflict-free fair allocation problem under the paradigm of parameterized complexity with respect to several natural input parameters. We hope that this will lead to a new set of results for the problem. 

One natural direction of research is to consider various other fairness notions known in the literature, such as envy-freeness, proportional fair-share, min-max fair-share, etc., under the conflict constraint. Some of these fairness notions have already been studied under this constraint, but not from the viewpoint of parameterized complexity.

\bibliography{references}

\end{document}